\documentclass[11pt]{article}

\usepackage[T1]{fontenc}
\usepackage[utf8]{inputenc}
\usepackage{amsmath,amssymb,mathtools,amsthm}
\usepackage{booktabs,tabularx,array,enumitem,longtable}
\usepackage{microtype}
\usepackage[margin=1in]{geometry}
\usepackage[hidelinks,hypertexnames=false]{hyperref}
\usepackage{algorithm,algpseudocode}
\usepackage[nameinlink,noabbrev]{cleveref}

\setlist{leftmargin=2.1em,itemsep=0.15em,topsep=0.2em}

\hypersetup{
  pdftitle={Fairness--Stability Trade-offs in Many-to-One Matching},
  pdfauthor={Genjie Qin}
}

\newtheorem{theorem}{Theorem}
\newtheorem{lemma}{Lemma}
\newtheorem{proposition}{Proposition}
\newtheorem{corollary}{Corollary}
\theoremstyle{definition}
\newtheorem{definition}{Definition}
\newtheorem{remark}{Remark}

\newcommand{\F}{F}
\newcommand{\W}{W}
\newcommand{\SW}{\operatorname{SW}}
\newcommand{\OPT}{\operatorname{OPT}}
\newcommand{\OPTr}{\operatorname{OPT}_{r}}

\newcommand{\EFXp}{\mathrm{EFX}^{+}}
\newcommand{\pos}[1]{\left[#1\right]_+}
\newcommand{\argminop}{\operatorname*{arg\,min}}
\newcommand{\argmaxop}{\operatorname*{arg\,max}}
\newcommand{\Top}{\operatorname{Top}}

\title{Fairness--Stability Trade-offs in Many-to-One Matching}
\author{Genjie Qin\\
Ocean University of China\\
\texttt{qgj@stu.ouc.edu.cn}}
\date{}

\begin{document}
\maketitle

\begin{abstract}
We study the trade-off between firm-side fairness and coalition stability in many-to-one matching markets with transferable payments. For a fixed matching $X$, we characterize the largest supportable core factor by a bottleneck financing problem: $\alpha(X)=1/\Phi(X)$, where $\Phi(X)=\min_{z\ge0}\max_i R_i(X,z)$. This yields a polynomial-time linear program and local sensitivity formulas for one-worker reallocations. We then develop a maximum-edge round algorithm and a broader class of mutual-top safe choices. Every safe execution is EF1 and, with $t=\delta(A)$ denoting the minimum positive-edge quality, guarantees $\alpha(X)\ge\max\{t,1/[m-(m-1)t]\}$ and $\SW(X)/\OPT\ge t+(1-t)/m$. These bounds give finite-firm lower and upper bounds for the EF1--core minimax frontier, with exact results for two firms and for three firms when $\delta\le1/2$; as the number of firms grows, the tight scale-free stability rate is $\delta$. We also extend the financing formulation to stronger $\EFXp$ fairness and capacity-constrained markets.
\end{abstract}

\section{Introduction}\label{sec:intro}

\subsection{Research question: when fairness moves the market away from the exact core}
Talent, project, and institutional allocation problems often involve efficiency, fairness, and stability at the same time. Examples include internal labor markets, team formation on platforms, hospital staff allocation, student--project assignment, and admissions with capacity constraints. The operations research literature has studied stable matching through mathematical programming, polyhedral methods, and algorithms~\cite{RothRothblumVandeVate1993,SethuramanTeoQian2006,DelormeEtAl2019,PerachAnily2022}. A related literature studies tractable trade-offs between fairness and efficiency in resource allocation~\cite{ArgyrisKarsuYavuz2022,AzizEtAl2023,DemeulemeesterEtAl2025}. We focus on a setting in which these requirements directly conflict: how much coalition stability must be lost when a planner requires a more balanced allocation of workers across firms or teams?

Our baseline model is a many-to-one allocation market with transferable payments. Wages may be interpreted as actual compensation, or more broadly as internal transfer prices or shadow compensation used by a central platform. Firm-side EF1 compares worker bundles from the firms' own valuation perspectives. It is therefore a bundle-balance requirement across firms, not a substitute for worker-side welfare, opportunity fairness, or justified envy. This interpretation leads to a natural OR design problem: first choose a personnel allocation, then redesign compensation so that the resulting allocation is as stable as possible.

The Gale--Shapley model provides the basic framework for many-to-one stable matching~\cite{GaleShapley1962}. With transferable payments, the Shapley--Shubik assignment game connects welfare-maximizing assignments, competitive prices, and the core through linear programming duality~\cite{ShapleyShubik1971}. Recent work on many-to-one assignment markets studies the structure of the core and competitive salaries~\cite{AtayNunezSolymosi2024}, while matching with transfers under distributional constraints shows that LP methods remain useful when additional allocation constraints are present~\cite{JalotaOstrovskyPavone2022}. On the fairness side, EF1 is one of the standard relaxations of envy-freeness for indivisible goods~\cite{LiptonEtAl2004,CaragiannisEtAl2019,BarmanEtAl2018,AmanatidisEtAl2023}, and fair-division ideas have increasingly been studied in two-sided and many-to-one matching environments~\cite{FreemanMichaShah2021,IgarashiEtAl2024,IgarashiEtAl2025,NarangBiswasNarahari2022}.

The two objectives need not be compatible. An exact-core outcome must attain unconstrained optimal welfare, so each worker must be assigned to a firm with maximum value for that worker. Firm-side EF1 may instead require moving some high-value workers away from firms that receive too concentrated a bundle. Fairness therefore need not be a selection rule within the stable set; it can move the allocation away from the welfare-optimal configuration that supports the exact core.

We do not ask whether exact core and EF1 can always be achieved simultaneously. Instead, we study their quantitative compatibility. Once fairness moves the matching away from the welfare optimum, how much stability is lost? Can the loss be computed exactly for a fixed matching? And, in the worst case, how large a core factor can the most stable EF1 matching still support? We use the path
\begin{equation}\label{eq:path}
X^0\;(\mathrm{OPT},\alpha=1)\longrightarrow X^1\longrightarrow\cdots\longrightarrow X^T\;(\mathrm{EF1})
\end{equation}
as an analytical device rather than as a commitment to a specific local algorithm. It leads to three questions:
\begin{enumerate}[label=(\roman*),leftmargin=2.4em]
\item For any intermediate matching $X$, can the best core factor $\alpha(X)$ be characterized and computed exactly?
\item How does a one-worker fairness update change the stability constraints, and how can the local effect propagate across firms?
\item For a fixed number of firms and a fixed instance-quality parameter, what is the worst-case frontier between EF1 and approximate core stability?
\end{enumerate}
The third question is the main frontier problem; the first two provide the structural and algorithmic tools needed to answer it.

\subsection{A common tool: stability as a financing ratio}
For worker $j$, let $M_j$ be the highest market value and let $d_j$ be the value created at the current employer. If fairness moves the worker from a highest-value firm to a firm with $d_j<M_j$, keeping the old competitive wage immediately creates an excess wage requirement of $M_j-d_j$ relative to the new output. Lowering the wage may reduce this pressure at the current employer, but it can simultaneously increase the gains available to other firms from recruiting the worker. Stability loss is therefore not simply welfare loss. It is a bottleneck created after current wage requirements and potential deviation gains are rebalanced across firms.

Given normalized wages $z=(z_j)$, define
\begin{equation}\label{eq:intro-G}
G_i(z):=\max_{T\subseteq W}\sum_{j\in T}(a_{ij}-z_j)
=\sum_{j\in W}\pos{a_{ij}-z_j},
\end{equation}
and firm $i$'s total normalized stabilization requirement
\begin{equation}\label{eq:intro-D}
D_i(X,z)=\underbrace{\sum_{j\in X_i}z_j}_{\text{normalized wages of current workers}}+
\underbrace{G_i(z)}_{\text{normalized deviation benchmark}}.
\end{equation}
The set $T$ may include current workers. When $\alpha=1$, $G_i(z)$ is the largest profit firm $i$ can retain after forming a new team while preserving the selected workers' current payoffs. For a general $\alpha$-core, it should be read as a scaled deviation benchmark rather than as an actual realized profit. Let $B_i$ be the output of firm $i$'s current bundle, and define
\begin{equation}\label{eq:intro-R}
R_i(X,z):=\frac{D_i(X,z)}{B_i}.
\end{equation}
This ratio is the firm-level state variable used throughout the paper. Under fixed wages, the system bottleneck is
\begin{equation}\label{eq:intro-rho}
\rho(X,z):=\max_iR_i(X,z),
\end{equation}
and after wage re-optimization,
\begin{equation}\label{eq:intro-Phi}
\Phi(X)=\min_{z\ge0}\rho(X,z)=\min_{z\ge0}\max_iR_i(X,z).
\end{equation}
Our central representation is
\begin{equation}\label{eq:intro-id}
\alpha(X)=\frac{1}{\Phi(X)}.
\end{equation}
Thus, $\Phi(X)$ has a direct operational interpretation: it is the minimum achievable worst-firm stabilization pressure after compensation is optimally redesigned for a fixed personnel allocation. The ``pressure'' is not a literal subsidy budget; it is a normalized measure implied by coalition-stability constraints. A smaller $\Phi$ means that the allocation is easier to stabilize, or equivalently that it supports a larger core factor. This also explains why $\Phi$ is used later as a secondary criterion among fairness-safe personnel allocations: once fairness and worst-case guarantees are protected, $\Phi$ measures the remaining difficulty of coalition stabilization.

The identity itself is not the end of the analysis. It converts coalition stability into a minimax problem over firm bottlenecks. Locally, we can track how a fairness update changes each $R_i$. Globally, we do not require $\Phi$ to improve monotonically along a repair path. Instead, we construct a fair terminal matching for which all $R_i$ admit a common upper bound. The maximum-edge round algorithm achieves this by controlling direct value loss, the branching of deviation pressure, and bottleneck congestion.

\subsection{Main results and contributions}
We organize the theoretical contribution in four layers.

\textbf{1. Exact stability of a fixed matching and local sensitivity.}
Starting from the coalition-core constraints, we show that single-firm deviation inequalities are necessary and sufficient. This gives the identity in~\eqref{eq:intro-id} and an explicit linear program. Hence, even when fairness moves the allocation away from the welfare optimum, we can compute the largest core factor and supporting wages exactly. For a one-worker move $j:k\to\ell$, we track the firm ratios $R_i$ directly. Under fixed wages, only $R_k$ and $R_\ell$ change, so one can determine whether the current bottleneck falls, stays unchanged, or rises, and whether wage adjustment is needed at all. Only after this fixed-wage test do we consider wage rebalancing. Lowering $z_j$ is one useful local direction: it reduces pressure at the current employer but may raise pressure at several competing firms. We derive closed-form ratio changes, an exact re-optimization identity for $\Phi(X')$, and a one-step certificate based on the old optimal wages.

\textbf{2. Safe-round robustness and when choice really matters.}
The sensitivity analysis highlights three risks: direct value loss $M_j-d_j$, branching when a wage reduction raises several competing firms' $R_i$, and congestion when this pressure reaches firms with small output. We show that the maximum-edge round guarantee requires only two local dominance conditions: the selected worker is row-top for the firm, and the selected firm is column-top for the worker among active firms. These mutual-top safe edges always exist and preserve exact EF1 and the same pressure bounds. Any two safe edges with disjoint endpoints commute. A convenient ex-ante sufficient condition is that each firm has a strict ranking over its positive-value workers and each worker has a strict ranking over positive-value firms. Under this condition, all safe-round executions produce the same positive-value worker matching. Genuine personnel choices can arise only from candidates that share a firm or a worker. We therefore use $\Phi$ only as a secondary stability criterion for such conflicts. Whether or not the optional conflict resolver is used, the safe algorithm family guarantees exact EF1 and
\begin{equation*}
\alpha(X)\ge \Gamma_m(\delta(A))
=\max\left\{\delta(A),\frac{1}{m-(m-1)\delta(A)}\right\},
\end{equation*}
with welfare ratio at least $u_m(\delta(A))=\delta(A)+(1-\delta(A))/m$.

\textbf{3. The EF1--core minimax frontier.}
For $m$ firms and positive-edge quality at least $\delta$, define
\begin{equation*}
H_m(\delta)=\inf_{|\F|=m,\,\delta(A)\ge\delta}\max_{X:\,\mathrm{EF1}}\alpha(X).
\end{equation*}
We prove
\begin{equation*}
\Gamma_m(\delta)\le H_m(\delta)\le u_m(\delta).
\end{equation*}
The bounds close in several important cases: $H_m(0)=1/m$; for two firms, $H_2(\delta)=(1+\delta)/2$ for the full range; for three firms, $H_3(\delta)=(1+2\delta)/3$ when $\delta\le1/2$, while the largest remaining gap for $\delta>1/2$ is below $0.01197$. For general finite $m\ge4$, we give explicit lower and upper bounds without claiming a complete characterization. When the number of firms is unbounded, the tight scale-free stability rate under any fixed positive fraction of EF1 is exactly $\delta$. Thus, $\delta$ is not only an analytical parameter; it is the exact cross-scale limit of fairness--stability compatibility.

\textbf{4. Robustness of the financing framework.}
We extend the framework to stronger $\EFXp$ fairness and to capacity constraints. While preserving the scale-free optimal $\delta$-core benchmark, we guarantee at least $\max\{\delta,1/2\}$-$\EFXp$. With capacities, a firm can select at most $r_i$ workers in a deviation, so the deviation benchmark becomes the sum of the largest at most $r_i$ positive terms. This gives a capacity analogue $\alpha_r=1/\Phi_r$ and corresponding EF1--core guarantees. These extensions are used to test whether the same financing logic survives changes in the fairness notion and the deviation set.

\subsection{Relation to the literature}\label{sec:related}

\paragraph{Stable matching, assignment games, and operations research.}
Gale--Shapley stable matching provides the basic structure for college admissions and many-to-one matching~\cite{GaleShapley1962}. From a linear-programming perspective, Roth, Rothblum, and Vande Vate link stable matchings, optimal assignments, and polyhedral structure~\cite{RothRothblumVandeVate1993}, while Sethuraman, Teo, and Qian study geometry and fairness in many-to-one stable matching~\cite{SethuramanTeoQian2006}. Related EJOR work develops discrete-optimization models for stable matching with ties and incomplete lists and for stable assignment of student groups to dormitories~\cite{DelormeEtAl2019,PerachAnily2022}. These papers mainly treat stability as a feasibility requirement or optimization objective. We instead allow fairness to move the market away from complete stability and optimize how much stability can still be retained.

\paragraph{Transferable payments, the core, and allocation constraints.}
The Shapley--Shubik assignment game uses dual prices to characterize the core of one-to-one transferable-utility markets~\cite{ShapleyShubik1971}. Recent work on many-to-one assignment markets continues to study the core and competitive salaries~\cite{AtayNunezSolymosi2024}. Jalota, Ostrovsky, and Pavone study many-to-one markets with transfers under distributional constraints and use LP duality to characterize equilibrium existence and computation~\cite{JalotaOstrovskyPavone2022}. Conflicts between stability and allocation constraints have also been addressed through integer programming in project allocation~\cite{AgostonBiroSzanto2018}. In our setting, the unconstrained market itself has an exact core. Approximate stability is created endogenously by firm-side fairness.

\paragraph{Fair allocation of indivisible goods and efficiency trade-offs.}
EF1 originates from approximate envy-freeness for indivisible goods~\cite{LiptonEtAl2004}, while systematic work on EFX includes Plaut and Roughgarden~\cite{PlautRoughgarden2017}; a recent survey reviews the algorithmic development of EF1, EFX, MMS, and related notions~\cite{AmanatidisEtAl2023}. Under additive valuations, the relation between maximum Nash welfare, EF1, and Pareto efficiency shows that fairness and efficiency can coexist in important settings~\cite{CaragiannisEtAl2019,BarmanEtAl2018}. Budish studies indivisibility and fairness through approximate competitive equilibrium~\cite{Budish2011}. A particularly relevant OR viewpoint is to optimize welfare subject to fairness constraints or to select fairly among optimal solutions. EJOR has studied welfare-dominance constraints for fair allocation, welfare-maximizing EF1 allocation, and fairness in general integer programming~\cite{ArgyrisKarsuYavuz2022,AzizEtAl2023,DemeulemeesterEtAl2025}. We carry this ``optimization under fairness'' logic from welfare objectives to coalition stability.

\paragraph{Fairness in two-sided and many-to-one matching.}
Freeman, Micha, and Shah bring fair-division ideas into two-sided many-to-many matching~\cite{FreemanMichaShah2021}. Igarashi et al. study fair division with two-sided preferences and, more recently, two-sided fairness in many-to-one matching~\cite{IgarashiEtAl2024,IgarashiEtAl2025}. Narang, Biswas, and Narahari study leximin fairness within stable many-to-one matchings~\cite{NarangBiswasNarahari2022}. These papers typically impose stability exactly, or use pairwise or justified-envy notions of stability. Our direction is reversed: firm-side EF1 is the main fairness requirement, and we choose the most stable outcome within the fair set.

\paragraph{Approximate core and the distinction of our setting.}
Approximate core has also been studied in matching games and multiple-partners matching games~\cite{Vazirani2021,XiaoLuFang2021}. The source of approximation is different here. We do not approximate because the base core is empty or difficult to compute. We approximate because EF1 can force the matching away from the allocation that supports the exact core. This leads us to study three linked objects: the best core factor of a fixed non-optimal matching, the local sensitivity of stability to a fairness move, and the cross-instance frontier
\[
H_m(\delta)=\inf_{|\F|=m,\,\delta(A)\ge\delta}\max_{X:\,\mathrm{EF1}}\alpha(X).
\]

\paragraph{Contribution and OR positioning.}
We do not treat the algebraic normalization $\alpha(X)=1/\Phi(X)$ as the sole novelty. Its value is that it supports the following local-to-global chain:
\begin{equation*}
\begin{aligned}
&\text{fixed-matching stability representation}
\Rightarrow \text{local sensitivity}\\
&\Rightarrow \text{safe-round robustness}
\Rightarrow \text{commutativity and ex-ante uniqueness}\\
&\Rightarrow \text{conflict-triggered secondary stability optimization}\\
&\Rightarrow \text{minimax frontier}.
\end{aligned}
\end{equation*}
The structure separates personnel allocation from compensation design. Safe rounds first protect EF1 and worst-case core/welfare guarantees. Only endpoint conflicts require a further comparison of personnel alternatives, and $\Phi$ measures the remaining stabilization difficulty after compensation is optimally redesigned. The paper therefore combines structural bounds with tractable LPs, local diagnostics, and an optional conflict-resolution rule, in line with the OR view of discrete allocation under fairness constraints.

\section{Model, stability, and fairness benchmarks}\label{sec:model}

Let $\F=\{1,\ldots,m\}$ be the set of firms and $\W=\{1,\ldots,n\}$ the set of workers. Firm $i$ values worker $j$ at $a_{ij}\ge0$. Values are additive, so $V_i(T)=\sum_{j\in T}a_{ij}$. A complete matching $X=(X_1,\ldots,X_m)$ is a partition of $\W$, and $\mu(j)$ denotes worker $j$'s current employer. Define
\begin{equation}\label{eq:basic-values}
M_j=\max_i a_{ij},\qquad d_j=a_{\mu(j)j},\qquad B_i=V_i(X_i).
\end{equation}
Social welfare and unconstrained optimal welfare are
\begin{equation}\label{eq:welfare}
\SW(X)=\sum_iB_i=\sum_jd_j,\qquad \OPT=\sum_jM_j.
\end{equation}
If at least one positive-value edge exists, define the positive-edge quality
\begin{equation}\label{eq:delta}
\delta(A)=\min_{a_{ij}>0}\frac{a_{ij}}{M_j};
\end{equation}
for the all-zero market, set $\delta(A)=1$.

Firm profits $x_i\ge0$ and worker wages $y_j\ge0$ satisfy local financing if
\begin{equation}\label{eq:local-finance}
x_i+\sum_{j\in X_i}y_j=B_i\qquad\forall i.
\end{equation}
For a coalition of firms $I\subseteq\F$ and a set of workers $T\subseteq\W$, define the deviation value
\begin{equation}\label{eq:coalition-value}
v(I,T)=\sum_{j\in T}\max_{i\in I}a_{ij}.
\end{equation}
An outcome $(X,x,y)$ is in the $\alpha$-core if, for all $I,T$,
\begin{equation}\label{eq:alpha-core-def}
x(I)+y(T)\ge\alpha v(I,T).
\end{equation}
For a fixed matching $X$, let $\alpha(X)$ denote the largest $\alpha$ that can be supported by nonnegative payments satisfying local financing.

\begin{remark}[Worker set in a single-firm deviation]\label{rem:deviation-set}
For a single firm $i$, the deviation set $T\subseteq\W$ is chosen from the entire worker set; we do \emph{not} require $T\cap X_i=\varnothing$. The firm may keep some current workers, release others, and recruit workers from other firms. Therefore, the positive-part expression $\sum_j\pos{\alpha a_{ij}-y_j}$ is exactly the maximum over all $T\subseteq\W$. Only after we use the wage-clipping result in Proposition~\ref{prop:clipping}, with $d_j\le z_j\le M_j$, do current workers cease to make a positive contribution to $\pos{a_{ij}-z_j}$ in the unconstrained model. This is a consequence of clipping, not a restriction on the deviation set.
\end{remark}

\begin{definition}[EF1, $\beta$-EF1, and $\gamma$-$\EFXp$]\label{def:fairness}
A matching $X$ is EF1 if, for every pair of firms $i,k$ with $X_k\neq\varnothing$, there exists $g\in X_k$ such that
\begin{equation*}
V_i(X_i)\ge V_i(X_k\setminus\{g\}).
\end{equation*}
For $\beta\in(0,1]$, $X$ is $\beta$-EF1 if, for every $i,k$ with $X_k\neq\varnothing$, there exists $g\in X_k$ such that
\begin{equation*}
V_i(X_i)\ge \beta V_i(X_k\setminus\{g\}).
\end{equation*}
Thus, $1$-EF1 is EF1. For $\gamma\in[0,1]$, $X$ is $\gamma$-$\EFXp$ if, for every $i,k$ and every $g\in X_k$ with $a_{ig}>0$,
\begin{equation*}
V_i(X_i)\ge\gamma V_i(X_k\setminus\{g\}).
\end{equation*}
When $\gamma=1$, we simply write $\EFXp$.
\end{definition}

\begin{remark}[Fairness interpretation]\label{rem:fairness-interpretation}
The EF1/$\EFXp$ notion used here is \emph{firm-side bundle fairness}: firm $i$ compares its own worker bundle with other firms' bundles using its own valuations. This is natural for a central platform or organization in which several teams compete for a common talent pool. It is not the same as worker-side individual fairness or justified envy. Stability is measured by a transferable-utility coalition core. We therefore study a quantitative trade-off between two different design objectives rather than using EF1 as a replacement for standard two-sided stability.
\end{remark}

\begin{lemma}[Single-firm deviation condition]\label{lem:deficit}
Under local financing, $(X,x,y)$ belongs to the $\alpha$-core if and only if, for every firm $i$,
\begin{equation}\label{eq:deficit}
x_i\ge\max_{T\subseteq\W}\sum_{j\in T}(\alpha a_{ij}-y_j)
=\sum_{j\in\W}\pos{\alpha a_{ij}-y_j}.
\end{equation}
The set $T$ may contain current workers of firm $i$. When $\alpha=1$, the middle expression is the largest profit firm $i$ can retain after selecting a new team and preserving the selected workers' current payoffs. For a general $\alpha$-core, it is an $\alpha$-scaled deviation benchmark.
\end{lemma}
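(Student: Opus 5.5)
The plan is to prove the equality in \eqref{eq:deficit} first and then the two directions of the equivalence.

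\textbf{The equality.} The objective $\sum_{j\in T}(\alpha a_{ij}-y_j)$ is additive over $T$. Its maximum over $T\subseteq\W$ is therefore attained by taking exactly the workers with $\alpha a_{ij}-y_j>0$, which gives $\sum_{j\in\W}\pos{\alpha a_{ij}-y_j}$. The empty set is allowed, so this value is nonnegative. This is consistent with $x_i\ge0$.

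\textbf{Necessity.} Take the single-firm coalitions $I=\{i\}$. Then $v(\{i\},T)=\sum_{j\in T}a_{ij}$, and the $\alpha$-core condition \eqref{eq:alpha-core-def} reads $x_i+\sum_{j\in T}y_j\ge\alpha\sum_{j\in T}a_{ij}$ for all $T$. Rearranging gives $x_i\ge\sum_{j\in T}(\alpha a_{ij}-y_j)$ for every $T$. Maximizing over $T$ yields \eqref{eq:deficit}. No restriction $T\cap X_i=\varnothing$ is needed, in line with Remark~\ref{rem:deviation-set}.

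\textbf{Sufficiency.} Fix a coalition $I$ and a worker set $T$. For each $j\in T$, choose $i(j)\in\argmaxop_{i\in I}a_{ij}$, breaking ties arbitrarily. This partitions $T$ into sets $T_i=\{j\in T:i(j)=i\}$ for $i\in I$, some possibly empty. Then
\[
\alpha v(I,T)-y(T)=\sum_{i\in I}\sum_{j\in T_i}(\alpha a_{ij}-y_j)\le\sum_{i\in I}\sum_{j\in\W}\pos{\alpha a_{ij}-y_j}\le\sum_{i\in I}x_i=x(I),
\]
which is \eqref{eq:alpha-core-def}. Local financing plays no role in this equivalence; it only fixes the feasible payments. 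The step needing the most care is this decomposition: a coalition's deviation value splits firm by firm because each worker is credited to a single best firm in $I$. Everything else is routine rearrangement.

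\textbf{Interpretation.} For $\alpha=1$, $\sum_{j\in T}(a_{ij}-y_j)$ is the profit firm $i$ retains if it employs $T$ and pays each worker $y_j$. The maximum over $T$ is therefore its best deviation profit, which justifies the interpretive sentence in the statement.
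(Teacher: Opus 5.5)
Your proposal is correct and follows essentially the same route as the paper's proof: the singleton coalition $I=\{i\}$ plus separability gives necessity, and crediting each worker in $T$ to a best firm in $I$ gives sufficiency. Your side remark that local financing is not actually used in the equivalence is accurate.
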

\begin{proof}
Necessity follows by considering the singleton coalition $I=\{i\}$. For a fixed firm $i$, the core condition is
\begin{equation*}
x_i\ge\sum_{j\in T}(\alpha a_{ij}-y_j)\qquad\forall T\subseteq\W.
\end{equation*}
The right-hand side is separable across workers, so its maximum is attained by selecting all positive terms. This gives~\eqref{eq:deficit}.

Conversely, suppose~\eqref{eq:deficit} holds for every firm. Consider any coalition $I,T$. For each $j\in T$, choose a firm attaining $\max_{i\in I}a_{ij}$ and partition the workers into disjoint sets $(T_i)_{i\in I}$ according to these choices. Then
\begin{align*}
\alpha v(I,T)-y(T)
&=\sum_{i\in I}\sum_{j\in T_i}(\alpha a_{ij}-y_j)\\
&\le\sum_{i\in I}\sum_{j\in\W}\pos{\alpha a_{ij}-y_j}\\
&\le\sum_{i\in I}x_i=x(I).
\end{align*}
Rearranging gives~\eqref{eq:alpha-core-def}. Hence the single-firm conditions are equivalent to all coalition constraints.
\end{proof}

\begin{proposition}[Exact core and welfare upper bound]\label{prop:exact}
Every fixed matching satisfies $\alpha(X)\le\SW(X)/\OPT$. Moreover, $X$ supports the exact core if and only if $d_j=M_j$ for every worker $j$.
\end{proposition}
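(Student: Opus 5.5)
The welfare upper bound comes from testing the core condition on a single coalition: all firms together with all workers. Take $I=\F$ and $T=\W$ in \eqref{eq:alpha-core-def}. Then $v(\F,\W)=\sum_j\max_i a_{ij}=\sum_j M_j=\OPT$, and the core condition reads $x(\F)+y(\W)\ge\alpha\,\OPT$. Summing the local financing identities \eqref{eq:local-finance} over all firms, and using that $X$ partitions $\W$, gives $x(\F)+y(\W)=\sum_i B_i=\SW(X)$. So any supportable $\alpha$ satisfies $\alpha\,\OPT\le\SW(X)$. When $\OPT>0$ this yields $\alpha(X)\le\SW(X)/\OPT$. In the all-zero market $\OPT=0$ and the statement is degenerate; I would either read the ratio as $1$ there or note that $\alpha(X)$ is then unbounded by convention and exclude this trivial case.

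For the characterization, the ``only if'' direction follows from the bound just proved. If $X$ supports the exact core, then $1\le\SW(X)/\OPT$, so $\sum_j d_j\ge\sum_j M_j$. Since $d_j\le M_j$ termwise, equality must hold for every $j$. In the all-zero case $d_j=M_j=0$ trivially, so nothing extra is needed there.

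For the ``if'' direction I would build explicit payments and check them with Lemma~\ref{lem:deficit} at $\alpha=1$. Suppose $d_j=M_j$ for all $j$. Set $y_j=M_j$ for every worker and $x_i=B_i-\sum_{j\in X_i}y_j$. Because each $j\in X_i$ has $a_{ij}=d_j=M_j$, this gives $x_i=0\ge0$ and local financing holds. The single-firm condition requires
\[
x_i\ge\sum_{j\in\W}\pos{a_{ij}-M_j}.
\]
Every term on the right is zero since $a_{ij}\le M_j$, so the condition holds with equality. Lemma~\ref{lem:deficit} then places $(X,x,y)$ in the $1$-core.

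None of these steps is a real obstacle. The only points that need care are identifying the grand-coalition value $v(\F,\W)$ with $\OPT$ and handling the degenerate $\OPT=0$ case.
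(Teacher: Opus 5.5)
Your proposal is correct and matches the paper's proof: the grand coalition $(\F,\W)$ combined with summed local financing gives $\SW(X)\ge\alpha\OPT$ (which also yields the only-if direction termwise), and the payments $y_j=M_j$, $x_i=0$ establish the converse. The only differences are cosmetic: you verify the converse through Lemma~\ref{lem:deficit} rather than checking $v(I,T)\le y(T)$ directly, and you also address the degenerate $\OPT=0$ case, which the paper leaves implicit.
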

\begin{proof}
Local financing gives $x(\F)+y(\W)=\SW(X)$. Applying the $\alpha$-core condition to the grand coalition $(\F,\W)$ yields $\SW(X)\ge\alpha\OPT$, so $\alpha\le\SW(X)/\OPT$.

If $X$ supports the exact core, then $\SW(X)\ge\OPT$. Since $d_j\le M_j$ worker by worker, we also have $\SW(X)=\sum_jd_j\le\sum_jM_j=\OPT$. Equality of the sums and the termwise inequalities imply $d_j=M_j$ for every $j$.

Conversely, if $d_j=M_j$ for every $j$, set $y_j=M_j$ and $x_i=0$. Local financing holds, and for every $I,T$,
$v(I,T)\le\sum_{j\in T}M_j=y(T)$. Hence the exact-core inequalities hold.
\end{proof}

\begin{proposition}[Numerical characterization of EF1]\label{prop:ef1-iff}
A matching $X$ is EF1 if and only if, for every ordered pair of firms $i,k$,
\begin{equation}\label{eq:ef1-iff}
B_i\ge\sum_{j\in X_k}a_{ij}-\max_{g\in X_k}a_{ig}.
\end{equation}
\end{proposition}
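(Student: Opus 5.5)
The plan is to unfold the existential quantifier in the definition of EF1 using additivity of $V_i$. Fix an ordered pair $(i,k)$ with $X_k\neq\varnothing$. For each $g\in X_k$ we have
\[
V_i(X_k\setminus\{g\})=\sum_{j\in X_k}a_{ij}-a_{ig}.
\]
The EF1 condition for $(i,k)$ asks that $B_i=V_i(X_i)$ be at least this quantity for some $g\in X_k$. Since $X_k$ is finite and nonempty, the minimum of the right-hand side over $g\in X_k$ is attained, at a $g$ maximizing $a_{ig}$. Hence the existential condition is equivalent to $B_i$ dominating that minimum, which is exactly \eqref{eq:ef1-iff}.

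For the forward direction, take the witness $g$ given by EF1. Then $B_i\ge\sum_{j\in X_k}a_{ij}-a_{ig}\ge\sum_{j\in X_k}a_{ij}-\max_{g'\in X_k}a_{ig'}$. For the converse, choose $g^\ast\in\argmaxop_{g\in X_k}a_{ig}$; then \eqref{eq:ef1-iff} says precisely $V_i(X_i)\ge V_i(X_k\setminus\{g^\ast\})$, so $g^\ast$ is a valid witness.

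The only point requiring care is the convention for empty bundles, and I expect this to be the main (minor) obstacle. EF1 imposes nothing when $X_k=\varnothing$. In \eqref{eq:ef1-iff} I would read the empty sum as $0$ and adopt the convention $\max_{\varnothing}=0$, so the inequality becomes $B_i\ge 0$, which always holds since $a_{ij}\ge0$. Equivalently, one can restrict \eqref{eq:ef1-iff} to pairs with $X_k\neq\varnothing$. The case $i=k$ is also harmless: it reduces to $a_{i\cdot}$-values being nonnegative, which holds trivially, and it is likewise trivially satisfied under EF1.
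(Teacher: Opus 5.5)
Your proposal is correct and follows the same route as the paper: by additivity the EF1 condition for $(i,k)$ becomes $B_i\ge V_i(X_k)-a_{ig}$ for some $g\in X_k$, and the right-hand side is minimized by removing the worker with the highest value to firm $i$. Your treatment of empty bundles (reading the empty maximum as $0$, or restricting to $X_k\neq\varnothing$) and of the case $i=k$ makes explicit a convention the paper leaves implicit.
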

\begin{proof}
For fixed $i,k$, EF1 requires a worker $g\in X_k$ such that
$B_i=V_i(X_i)\ge V_i(X_k)-a_{ig}$. The right-hand side is minimized by removing the worker in $X_k$ with the highest value to firm $i$. Therefore such a $g$ exists if and only if~\eqref{eq:ef1-iff} holds. Imposing the condition for all ordered pairs gives the result.
\end{proof}

\section{The financing--stability identity: exact stability of a fixed matching}\label{sec:identity}

For normalized wages $z=(z_j)\ge0$, define firm $i$'s normalized deviation benchmark
\begin{equation}\label{eq:G}
G_i(z):=\max_{T\subseteq\W}\sum_{j\in T}(a_{ij}-z_j)
=\sum_{j\in\W}\pos{a_{ij}-z_j}.
\end{equation}
The set $T$ may contain current workers and workers employed elsewhere. If $\alpha=1$ and actual wages are $y=z$, then $G_i(z)$ is the largest profit firm $i$ can retain after forming a new team while preserving the selected workers' current payoffs. For a general $\alpha$-core, it is a normalized deviation benchmark generated by scaling; it should not be read as an actual profit that the firm receives.

Define firm $i$'s total stabilization requirement
\begin{equation}\label{eq:D}
D_i(X,z)=\underbrace{\sum_{j\in X_i}z_j}_{\text{normalized wages of current workers}}
+\underbrace{G_i(z)}_{\text{normalized deviation benchmark}}.
\end{equation}
Both terms must be supported by current output: current output pays current workers, while the residual firm profit must be large enough to cover the deviation benchmark. If $B_i>0$, define the stabilization ratio
\begin{equation}\label{eq:R}
R_i(X,z):=\frac{D_i(X,z)}{B_i}.
\end{equation}
A larger $R_i$ means that the stabilization requirement is tighter relative to the firm's current production base. If $B_i=0<D_i$, set $R_i=+\infty$; if $B_i=D_i=0$, set $R_i=0$.

\begin{theorem}[Financing--stability identity]\label{thm:identity}
A fixed matching $X$ supports an $\alpha$-core if and only if there exists $z\ge0$ such that $\alpha R_i(X,z)\le1$ for all firms. Let
\begin{equation}\label{eq:Phi}
\Phi(X)=\min_{z\ge0}\max_iR_i(X,z).
\end{equation}
Then
\begin{equation}\label{eq:alpha-Phi}
\alpha(X)=\frac{1}{\Phi(X)}.
\end{equation}
\end{theorem}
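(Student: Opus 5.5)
The plan is to reduce everything to Lemma~\ref{lem:deficit} through the rescaling $y=\alpha z$, for $\alpha>0$ (the case $\alpha=0$ is trivial, since $x_i=B_i$, $y=0$ always works). Fix $\alpha>0$. By Lemma~\ref{lem:deficit}, $X$ supports an $\alpha$-core if and only if there exist $y\ge0$ and $x\ge0$ with local financing $x_i=B_i-\sum_{j\in X_i}y_j$ and
\[
x_i\ge\sum_{j\in\W}\pos{\alpha a_{ij}-y_j}\quad\forall i.
\]
Since the right-hand side is nonnegative, the constraint $x_i\ge0$ is implied. Eliminating $x$ leaves the single condition
\[
\sum_{j\in X_i}y_j+\sum_{j}\pos{\alpha a_{ij}-y_j}\le B_i\quad\forall i .
\]
Substituting $y=\alpha z$ with $z\ge0$ (a bijection of the nonnegative orthant because $\alpha>0$) and using positive homogeneity of $\pos{\cdot}$, the left side becomes $\alpha D_i(X,z)$. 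The condition is therefore $\alpha D_i(X,z)\le B_i$ for all $i$, which is exactly $\alpha R_i(X,z)\le 1$ under the stated conventions. Indeed, if $B_i>0$ it is division by $B_i$. If $B_i=0$ it forces $D_i=0$, i.e.\ $R_i=0$; the alternative $D_i>0$ gives $R_i=+\infty$, which violates the inequality, matching the convention. This proves the first claim.

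For the identity, $\alpha(X)=\sup\{\alpha:\exists z\ge0,\ \max_iR_i(X,z)\le1/\alpha\}$, so formally $\alpha(X)=1/\inf_{z\ge0}\max_iR_i(X,z)$. Two points remain: the infimum is attained (so the $\min$ in \eqref{eq:Phi} is legitimate and the supremum defining $\alpha(X)$ is a maximum), and the degenerate values behave. For attainment, I will show that we may restrict to the compact box $0\le z_j\le M_j$. Replacing $z_j$ by $\min\{z_j,M_j\}$ leaves every $\pos{a_{ij}-z_j}$ unchanged, since $a_{ij}\le M_j$, and does not increase $\sum_{j\in X_i}z_j$, so no $D_i$ increases. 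On the box, each $D_i$ is continuous, so $\max_i R_i$ is lower semicontinuous as an extended-real function. This holds even for firms with $B_i=0$, because $\{D_i=0\}$ is closed. Hence the minimum is attained.

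Finally I will note finiteness. Taking $z_j=M_j$ makes $G_i=0$ and $D_i=\sum_{j\in X_i}M_j$, but this can still be infinite in ratio when $B_i=0$ while some $M_j>0$ for $j\in X_i$. So $\Phi(X)=+\infty$ is possible, and the convention $1/\infty=0$ matches $\alpha(X)=0$. When $\Phi(X)=0$ (all-zero values), the identity reads $\alpha(X)=+\infty$ and should be flagged as the trivial case. The main obstacle is exactly this bookkeeping of zero-output firms and the extended-value conventions, together with the attainment argument. The core equivalence itself is a direct substitution into Lemma~\ref{lem:deficit}.
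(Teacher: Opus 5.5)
Your proposal is correct and follows the paper's own route: rescale $y=\alpha z$, apply Lemma~\ref{lem:deficit}, eliminate $x$ through local financing to get $\alpha D_i(X,z)\le B_i$, and take reciprocals. Your clipping-to-$[0,M]$ plus lower-semicontinuity argument for attainment, and your handling of zero-output firms, supply details that the paper only asserts. One small point: evaluating at $z=M$ alone does not show that $\Phi(X)=+\infty$ can occur, although it can (for example, two zero-output firms that each value only the other's worker), and in that case your first claim directly gives $\alpha(X)=0$.
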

\begin{proof}
Fix $\alpha>0$ and write actual wages as $y_j=\alpha z_j$. By Lemma~\ref{lem:deficit}, firm $i$'s core condition is equivalent to
\begin{equation*}
\begin{aligned}
x_i
&\ge\max_{T\subseteq\W}\left\{\alpha\sum_{j\in T}a_{ij}-\sum_{j\in T}y_j\right\}\\
&=\alpha\max_{T\subseteq\W}\sum_{j\in T}(a_{ij}-z_j)
=\alpha G_i(z).
\end{aligned}
\end{equation*}
This also clarifies the meaning of $G_i(z)$: it is the normalized deviation benchmark obtained after factoring the $\alpha$ scaling out of deviation value. Local financing gives
$x_i=B_i-\alpha\sum_{j\in X_i}z_j$. Substitution yields
\begin{equation*}
B_i\ge\alpha\left(\sum_{j\in X_i}z_j+\sum_j\pos{a_{ij}-z_j}\right)
=\alpha D_i(X,z).
\end{equation*}
Thus $X$ supports an $\alpha$-core if and only if there exists $z\ge0$ with $D_i(X,z)/B_i\le1/\alpha$ for every $i$.

Let $\rho(X,z)=\max_iR_i(X,z)$. For fixed $z$, the largest supportable core factor is $1/\rho(X,z)$. Optimizing over wages redistributes stabilization pressure across firms so that the largest $R_i$ is as small as possible. Hence the optimal core factor is
$1/\min_z\rho(X,z)=1/\Phi(X)$. The zero-output conventions are exactly consistent with the inequalities $B_i\ge\alpha D_i$.
\end{proof}

\begin{proposition}[Wage clipping]\label{prop:clipping}
When computing $\Phi(X)$, we may restrict without loss of generality to $d_j\le z_j\le M_j$ for every worker $j$.
\end{proposition}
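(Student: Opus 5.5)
We will show that replacing any $z\ge0$ by its clipped version $\hat z_j=\min\{\max\{z_j,d_j\},M_j\}$ does not increase any ratio $R_i(X,z)$. Then $\rho(X,\hat z)\le\rho(X,z)$, and the minimum in~\eqref{eq:Phi} can be taken over the box $\prod_j[d_j,M_j]$. This box is compact and $\rho$ is continuous where finite, so the minimum is attained there. The natural route is coordinatewise: we change one $z_j$ at a time and track each firm's requirement $D_i(X,z)=\sum_{j'\in X_i}z_{j'}+\sum_{j'}\pos{a_{ij'}-z_{j'}}$. Since the denominators $B_i$ do not depend on $z$, it suffices to show that every $D_i$ weakly decreases.

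\textbf{Upper clip.} Suppose $z_j>M_j$ and we lower it to $M_j$. For every firm $i$ we have $a_{ij}\le M_j$, so the term $\pos{a_{ij}-z_j}$ is zero both before and after the change. The only other place $z_j$ appears is the wage term of the employer $\mu(j)$, which strictly decreases. Hence no $D_i$ increases.

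\textbf{Lower clip.} Suppose $z_j<d_j$ and we raise it to $d_j$. For a firm $i\ne\mu(j)$, $z_j$ enters only through $\pos{a_{ij}-z_j}$, which is nonincreasing in $z_j$, so $D_i$ does not increase. For the employer $k=\mu(j)$, $z_j$ enters as $z_j+\pos{a_{kj}-z_j}=\max\{z_j,a_{kj}\}=\max\{z_j,d_j\}$. This quantity is constant, equal to $d_j$, on $[0,d_j]$, so $D_k$ is unchanged.

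This employer identity is the key step, and the only point where the clipping direction matters. It also explains Remark~\ref{rem:deviation-set}: after clipping, the employer's own term $\pos{d_j-z_j}$ vanishes.

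Applying the two operations to each worker in turn, which act on disjoint coordinates, gives $D_i(X,\hat z)\le D_i(X,z)$ for all $i$. The zero-output conventions need a separate check. If $B_i=0$, then $R_i$ is determined by whether $D_i>0$, and a weak decrease of $D_i$ cannot move $R_i$ from $0$ to $+\infty$. So $R_i(X,\hat z)\le R_i(X,z)$ in every case.

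The main obstacle I anticipate is not the monotonicity but making sure the minimum in~\eqref{eq:Phi} is well defined. The value $+\infty$ can occur for firms with $B_i=0$, and the infimum over the unbounded set $z\ge0$ must be shown to be attained. I would handle this by noting that a firm with $B_i=0$ has $d_j=0$ for all its workers. Moreover, $\max_iR_i(X,z)$ is lower semicontinuous on the compact box, so it attains its infimum there. By the clipping argument, that infimum equals the infimum over all $z\ge0$.
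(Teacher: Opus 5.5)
Your proposal is correct and follows the paper's proof essentially step for step. The upper clip is harmless because $a_{ij}\le M_j$ makes every gap term vanish, and the lower clip leaves the employer's requirement unchanged via $z_j+\pos{d_j-z_j}=\max\{z_j,d_j\}$ while only shrinking other firms' gaps; your extra checks on the zero-output conventions and on attainment of the minimum (via lower semicontinuity on the compact box) are sound and simply make explicit what the paper leaves implicit.
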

\begin{proof}
If $z_j>M_j$, lower it to $M_j$. The current employer's wage term decreases. For every firm $i$, since $a_{ij}\le M_j$, the positive-gap term is zero both before and after the change. Thus no $D_i$ increases.

If $z_j<d_j=a_{\mu(j)j}$, raise it to $d_j$. For the current employer $k=\mu(j)$, the wage term increases by $d_j-z_j$, while the positive gap for worker $j$ falls from $d_j-z_j$ to zero, so the two changes cancel. For every other firm, a higher wage can only reduce the positive gap. Repeating worker by worker gives an optimal solution satisfying $d_j\le z_j\le M_j$.
\end{proof}

\begin{corollary}[Current workers create no positive deviation term under clipped optimal wages]\label{cor:own-worker-zero}
In the unconstrained model, $\Phi(X)$ admits optimal wages satisfying Proposition~\ref{prop:clipping}. For any firm $i$ and current worker $j\in X_i$,
\begin{equation}\label{eq:own-zero}
\pos{a_{ij}-z_j}=\pos{d_j-z_j}=0.
\end{equation}
Thus, although the maximization in~\eqref{eq:G} formally allows $T$ to include current workers, under a clipped optimal representation they contribute no positive term to $G_i(z)$. Positive contributions come only from non-current workers whose wages are below firm $i$'s valuations.
\end{corollary}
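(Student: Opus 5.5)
The corollary has two parts. The first is that optimal wages can be taken clipped. The second is that, under such wages, a firm's own workers contribute no positive term to its deviation benchmark.

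For the first part, I would begin by confirming that the minimum in the definition of $\Phi(X)$ is attained. Proposition~\ref{prop:clipping} shows that we may restrict to the compact box $\prod_j[d_j,M_j]$ without loss. On that box each $D_i(X,\cdot)$ is continuous and piecewise linear in $z$. Hence $\max_iR_i(X,\cdot)$ is lower semicontinuous, even if some $B_i=0$, because the convention $R_i=+\infty$ for $B_i=0<D_i$ makes $\{z: R_i\le c\}=\{z: D_i\le cB_i\}$ closed.

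An alternative that avoids attainment is to take any optimal $z$ from the linear program implied by Theorem~\ref{thm:identity}. The clipping operations in Proposition~\ref{prop:clipping} do not increase any $D_i$, and $B_i$ is unchanged. So each $R_i$ does not increase, and the maximum does not increase. Applying the operations coordinate by coordinate therefore turns an optimal $z$ into an optimal clipped $z$. This gives the first sentence of the corollary.

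For the identity~\eqref{eq:own-zero}, fix a firm $i$ and a worker $j\in X_i$. By definition $\mu(j)=i$, so $a_{ij}=d_j$. This gives the first equality $\pos{a_{ij}-z_j}=\pos{d_j-z_j}$. Clipping gives $z_j\ge d_j$, so $d_j-z_j\le0$ and the positive part is $0$.

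The interpretive remark then follows by splitting $G_i(z)=\sum_{j\in\W}\pos{a_{ij}-z_j}$ into $j\in X_i$ and $j\notin X_i$. The first block vanishes. Any surviving positive term has $j\notin X_i$ and $a_{ij}>z_j$.

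The only point needing care is attainment of the minimum, which is handled by the compactness and semicontinuity argument above. If the minimum were only an infimum, the statement would have to be read as "there exist optimal wages". In that case I would fall back on the LP formulation, whose optimum is attained because the feasible region is a nonempty polyhedron and the objective is bounded below by $0$.
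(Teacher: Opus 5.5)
Your proposal is correct and takes the same route as the paper: for $j\in X_i$ you have $a_{ij}=d_j$, and the clipped optimal wages from Proposition~\ref{prop:clipping} give $z_j\ge d_j$, so the positive part vanishes. Your attainment argument (compactness with lower semicontinuity, or the LP optimum) is extra care that the paper skips; it simply takes an optimal clipped wage vector as given by Proposition~\ref{prop:clipping}.
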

\begin{proof}
If $j\in X_i$, then $a_{ij}=d_j$. Proposition~\ref{prop:clipping} gives $z_j\ge d_j$, so $a_{ij}-z_j\le0$.
\end{proof}

\begin{remark}[Boundary for the capacity extension]
Corollary~\ref{cor:own-worker-zero} relies on the clipping result for the unconstrained model. The capacity model uses a Top-$r_i$ aggregation of deviation gains. We therefore do not assume without proof that current workers make no positive contribution in the capacity case; the Top-$r_i$ expression is always taken over the full worker set $\W$.
\end{remark}

\begin{proposition}[Linear-programming formulation]\label{prop:LP}
$\Phi(X)$ is the optimal value of
\begin{equation}\tag{$\mathsf{F}_X$}\label{eq:FX}
\begin{aligned}
\min\quad &\rho\\
\text{s.t.}\quad
&\sum_{j\in X_i}z_j+\sum_js_{ij}\le\rho B_i &&\forall i,\\
&s_{ij}\ge a_{ij}-z_j &&\forall i,j,\\
&s_{ij}\ge0,\quad z_j\ge0,\quad \rho\ge0.
\end{aligned}
\end{equation}
Hence $\alpha(X)$ and a set of optimal wages can be computed exactly in polynomial time.
\end{proposition}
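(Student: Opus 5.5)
The plan is to show that the LP \eqref{eq:FX} is an epigraph reformulation of the min--max problem \eqref{eq:Phi}, and then read off polynomial-time solvability from standard LP theory together with Theorem~\ref{thm:identity}.

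\textbf{Step 1: from $\Phi(X)$ to a feasible LP point.} Fix any $z\ge0$ and set $\rho=\max_i R_i(X,z)$, assuming this is finite. Put $s_{ij}=\pos{a_{ij}-z_j}$. Then each constraint reads $D_i(X,z)\le \rho B_i$, which holds by definition of $R_i$. The zero-output conventions must be checked here. If $B_i=0$ and $R_i<\infty$, then $D_i=0$, so the constraint $D_i\le 0$ holds. So $(z,s,\rho)$ is feasible, and the LP value is at most $\Phi(X)$.

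\textbf{Step 2: from an LP point back to $\Phi(X)$.} Take any feasible $(z,s,\rho)$. Since $s_{ij}\ge\max\{a_{ij}-z_j,0\}=\pos{a_{ij}-z_j}$, we get
\[
D_i(X,z)\le\sum_{j\in X_i}z_j+\sum_j s_{ij}\le\rho B_i
\]
for every $i$. For firms with $B_i>0$ this gives $R_i(X,z)\le\rho$. For firms with $B_i=0$ it forces $D_i=0$, hence $R_i=0\le\rho$. Therefore $\Phi(X)\le\max_i R_i(X,z)\le\rho$, so the LP value is at least $\Phi(X)$.

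\textbf{Step 3: the optimum is attained.} The min in \eqref{eq:Phi} and the LP optimum must both exist and coincide. I expect this to be the main, if mild, obstacle. Three facts settle it:
\begin{itemize}
\item The LP is feasible. Take $z_j=M_j$ and $s_{ij}=0$; then $D_i=B_i$, so $\rho=1$ works. This uses $\sum_{j\in X_i}M_j\ge B_i$ with equality only when $d_j=M_j$; if needed, take $\rho$ large enough, or use $z_j=d_j$ and a large $\rho$.
\item If some $B_i=0$ while $X_i$ contains workers with $d_j=0$, the wages $z_j=d_j=0$ from Proposition~\ref{prop:clipping} keep the constraint satisfiable. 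More generally, clipping shows that the infimum is attained on the compact box $d_j\le z_j\le M_j$.
\item The objective is bounded below by $0$.
\end{itemize}
A feasible LP whose objective is bounded below attains its optimum, so the two values are equal.

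\textbf{Step 4: computation.} The LP has $n+mn+1$ variables and $O(mn)$ constraints, with coefficients drawn from the input. It is therefore solvable in polynomial time by the ellipsoid or interior-point method. An optimal $z$ gives supporting wages $y=\alpha(X)z$ with $\alpha(X)=1/\Phi(X)$ by Theorem~\ref{thm:identity}. The profits are then $x_i=B_i-\alpha\sum_{j\in X_i}z_j$, which are nonnegative because the LP constraint gives $\alpha\sum_{j\in X_i}z_j\le \alpha D_i\le B_i$. The case $\Phi(X)=0$ arises only when all $B_i$-relevant quantities vanish, as in the all-zero market; there $\alpha(X)$ is read as unbounded or as $1$ by convention.
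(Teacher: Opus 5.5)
Steps~1--2 are essentially the paper's proof. For fixed $z$ the cheapest slacks are $s_{ij}=\pos{a_{ij}-z_j}$, so the budget rows read exactly $R_i(X,z)\le\rho$, and you handle the zero-output conventions correctly. Together these two steps already show that the LP value and $\Phi(X)$ coincide as infima in $[0,+\infty]$. Step~4 is fine.

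Step~3, however, claims that the LP is always feasible, and that is false.
\begin{itemize}
\item At $z=M$, $s=0$ you get $D_i(X,M)=\sum_{j\in X_i}M_j$. This equals $B_i$ only when $X$ is welfare-maximizing.
\item Raising $\rho$ repairs every row with $B_i>0$. It does not repair a row with $B_i=0$: that row forces $z_j=0$ for $j\in X_i$ \emph{and} $s_{ij}=0$, hence $z_j\ge a_{ij}$, for every worker $j$. Your second bullet checks only the own-wage part of such a row.
\item If a worker of one zero-output firm is valued positively by another zero-output firm, the two rows are incompatible. Take $m=2$, $n=1$, $a_{11}=1$, $a_{21}=0$, $X_1=\varnothing$, $X_2=\{1\}$. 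Row~2 forces $z_1=0$, while row~1 forces $z_1\ge 1$.
\end{itemize}
In such cases $\Phi(X)=+\infty$ and $\alpha(X)=0$. The proposition therefore holds only under the usual convention that an infeasible minimization has value $+\infty$, with the solver's infeasibility report as the computational output.

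The repair is to drop the feasibility claim and the compactness detour; the latter would also need lower semicontinuity of the extended-valued $R_i$. If the LP is feasible, it is bounded below by $0$ and attains an optimum $(z^*,s^*,\rho^*)$. Step~2 then gives $\max_iR_i(X,z^*)\le\rho^*=\Phi(X)$, so $z^*$ attains the minimum defining $\Phi(X)$. If the LP is infeasible, Step~1 shows that no $z$ makes all $R_i$ finite.
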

\begin{proof}
For fixed $z$, an optimal solution sets $s_{ij}=\pos{a_{ij}-z_j}$. The first group of constraints is therefore exactly $R_i(X,z)\le\rho$ for all $i$. The LP chooses wages to minimize the largest firm ratio, so its optimal value is $\Phi(X)$. The number of variables and constraints is polynomial.
\end{proof}

\begin{remark}[Worker moves and wage adjustment in the LP]\label{rem:LP-R-dynamics}
Suppose worker $j$ moves from firm $k$ to firm $\ell$. Only two firm rows in the first group of LP constraints change directly: firm $k$ removes $z_j$ from its current-wage term and its output base becomes $B_k-v_k$, while firm $\ell$ adds $z_j$ and its output base becomes $B_\ell+v_\ell$. Other firm rows are unchanged. The constraints $s_{ih}\ge a_{ih}-z_h$ do not depend on the current employer and therefore do not change under the discrete move. One should first check whether the old wages already satisfy the target bottleneck after the move. If so, no wage adjustment is needed. If further optimization is desired, the full wage vector is re-optimized in the same LP. Lowering $z_j$ is only one analytically useful local direction; when it crosses valuation thresholds, some constraints $s_{ij}\ge a_{ij}-z_j$ can become binding and create a new bottleneck.
\end{remark}

For the detailed three-firm analysis and the upper-bound certificates used later, it is convenient to record a primal--dual formulation equivalent to the financing representation. We use $h_{ij}$ for the linearized gap of firm $i$ on worker $j$ to distinguish it from normalized wages $z_j$.

\begin{proposition}[Primal--dual formulation for a fixed matching]\label{prop:primal-dual}
If $\OPT>0$, then $\alpha(X)$ equals the optimal value of
\begin{equation}\tag{$P_X$}\label{eq:PX-journal}
\begin{aligned}
\max\quad &\alpha\\
\text{s.t.}\quad
&\sum_{j\in X_i}y_j+\sum_{j\in\W}h_{ij}\le B_i &&\forall i,\\
&y_j+h_{ij}\ge\alpha a_{ij} &&\forall i,j,\\
&\alpha\ge0,\quad y_j\ge0,\quad h_{ij}\ge0.
\end{aligned}
\end{equation}
Its dual is
\begin{equation}\tag{$D_X$}\label{eq:DX-journal}
\begin{aligned}
\min\quad &\sum_iB_ip_i\\
\text{s.t.}\quad
&\sum_{i,j}a_{ij}q_{ij}\ge1,\\
&\sum_iq_{ij}\le p_{\mu(j)} &&\forall j,\\
&q_{ij}\le p_i &&\forall i,j,\\
&p_i\ge0,\quad q_{ij}\ge0.
\end{aligned}
\end{equation}
The two programs satisfy strong duality and have common optimal value $\alpha(X)$.
\end{proposition}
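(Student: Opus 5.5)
Three things need to be shown: (i) the optimal value of \eqref{eq:PX-journal} equals $\alpha(X)$; (ii) \eqref{eq:DX-journal} is its LP dual; (iii) strong duality holds, so the two share that value.

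\textbf{Step (i).} For fixed $\alpha>0$ and $y\ge0$, the smallest feasible $h_{ij}$ is $\pos{\alpha a_{ij}-y_j}$. The first constraint group therefore becomes
\[
\sum_{j\in X_i}y_j+\sum_j\pos{\alpha a_{ij}-y_j}\le B_i .
\]
Setting $x_i:=B_i-\sum_{j\in X_i}y_j$, this inequality is exactly local financing~\eqref{eq:local-finance} together with the single-firm condition~\eqref{eq:deficit}. The condition also forces $x_i\ge0$, since the positive-part sum is nonnegative. By Lemma~\ref{lem:deficit}, a triple $(\alpha,y,h)$ is feasible for \eqref{eq:PX-journal} if and only if $X$ supports an $\alpha$-core with wages $y$. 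The case $\alpha=0$ is covered by $y=h=0$. Hence $\max\alpha=\alpha(X)$, consistent with Theorem~\ref{thm:identity}.

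\textbf{Step (ii).} This is mechanical dualization. Assign multipliers $p_i\ge0$ to the firm rows and $q_{ij}\ge0$ to the rows $y_j+h_{ij}-\alpha a_{ij}\ge0$.
\begin{itemize}
\item The column of $\alpha$ gives $\sum_{i,j}a_{ij}q_{ij}\ge1$.
\item The column of $y_j$ appears only in the row of its employer $\mu(j)$, giving $p_{\mu(j)}-\sum_iq_{ij}\ge0$.
\item The column of $h_{ij}$ gives $p_i-q_{ij}\ge0$.
\item The objective is $\sum_iB_ip_i$.
\end{itemize}
I will write the primal in the standard form $\max c^\top x$ subject to $Ax\le b$, $x\ge0$, with the covering rows negated, so the sign bookkeeping is transparent.

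\textbf{Step (iii).} Strong duality follows once the primal is feasible and bounded.
\begin{itemize}
\item Feasibility is immediate from the zero solution.
\item For boundedness, use $\OPT>0$. Sum the firm rows and use $\sum_iB_i=\SW(X)$. For each worker $j$, pick $i^*$ with $a_{i^*j}=M_j$; then $y_j+h_{i^*j}\ge\alpha M_j$. Since $y_j$ is counted in its employer's row and the $h_{i^*j}$ are distinct nonnegative variables, $\SW(X)\ge\alpha\OPT$. This gives $\alpha\le\SW(X)/\OPT<\infty$, mirroring Proposition~\ref{prop:exact}.
\end{itemize}
LP strong duality then yields equal finite optimal values, and by step (i) this common value is $\alpha(X)$.

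The only delicate point is step (i): checking that the $h$-reduction matches Lemma~\ref{lem:deficit} exactly, including zero-output firms. When $B_i=0$, the firm row forces $y_j=0$ for $j\in X_i$ and $h_{ij}=0$. These are precisely the constraints the $\alpha$-core imposes, and they match the convention $R_i=+\infty$ in Theorem~\ref{thm:identity}.
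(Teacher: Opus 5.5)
Your proposal is correct and follows the paper's proof. The paper likewise reduces $h_{ij}$ to positive parts and invokes Lemma~\ref{lem:deficit}, dualizes mechanically, and applies strong duality from feasibility and boundedness. Your explicit derivation of $\alpha\le\SW(X)/\OPT$ from the LP rows spells out the boundedness step that the paper only asserts.
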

\begin{proof}
By Lemma~\ref{lem:deficit}, a fixed matching supports an $\alpha$-core if and only if there exist nonnegative wages and profits satisfying
\begin{equation*}
x_i\ge\sum_j\pos{\alpha a_{ij}-y_j}\qquad\forall i.
\end{equation*}
For each $(i,j)$, introduce $h_{ij}\ge0$ with
$h_{ij}\ge\alpha a_{ij}-y_j$. At an optimal feasible solution, $h_{ij}$ can be reduced to the positive part $\pos{\alpha a_{ij}-y_j}$, so the linearization is exact. Local financing gives
\begin{equation*}
x_i=B_i-\sum_{j\in X_i}y_j.
\end{equation*}
Substituting into $x_i\ge\sum_jh_{ij}$ gives the first constraints of $(P_X)$; the second constraints are the gap linearization. Thus the feasible values of $\alpha$ coincide with those of the original core problem.

For the dual, associate $p_i\ge0$ with each firm budget constraint and $q_{ij}\ge0$ with each covering constraint $y_j+h_{ij}\ge\alpha a_{ij}$. Variable $y_j$ appears only in the budget constraint of its current employer $\mu(j)$ and in all covering constraints for worker $j$, giving
\begin{equation*}
\sum_iq_{ij}\le p_{\mu(j)}.
\end{equation*}
Variable $h_{ij}$ gives $q_{ij}\le p_i$. Since $\alpha$ has coefficient $a_{ij}$ in the covering constraints and objective coefficient $1$, we obtain
\begin{equation*}
\sum_{i,j}a_{ij}q_{ij}\ge1.
\end{equation*}
The dual objective is $\sum_iB_ip_i$, which gives $(D_X)$. The primal is feasible and bounded when $\OPT>0$, so standard LP strong duality applies.
\end{proof}

Homogenizing $(D_X)$ gives a form that is useful for structural analysis.

\begin{corollary}[Homogeneous blocking pressure and a worker-wise fractional knapsack]\label{cor:homogeneous-pressure}
Define
\begin{equation}\label{eq:R-journal}
\begin{aligned}
R_A(X)=\max\quad &\sum_{i,j}a_{ij}q_{ij}\\
\text{s.t.}\quad
&\sum_iB_ip_i\le1,\\
&\sum_iq_{ij}\le p_{\mu(j)} &&\forall j,\\
&q_{ij}\le p_i &&\forall i,j,\\
&p_i,q_{ij}\ge0.
\end{aligned}
\end{equation}
Then $\alpha(X)=1/R_A(X)$. Moreover, for fixed firm prices $p$, the maximization over $q$ separates completely by worker. If $j\in X_k$, worker $j$ contributes
\begin{equation}\label{eq:worker-knapsack}
\phi_j(p)=\max\left\{\sum_i a_{ij}q_i:\ 0\le q_i\le p_i,\ \sum_iq_i\le p_k\right\}.
\end{equation}
Hence
\begin{equation}\label{eq:R-worker}
R_A(X)=\max_{\substack{p\ge0\\\sum_iB_ip_i\le1}}\sum_j\phi_j(p),
\end{equation}
and~\eqref{eq:worker-knapsack} is a fractional knapsack problem with total capacity $p_k$, item capacity $p_i$, and unit value $a_{ij}$.
\end{corollary}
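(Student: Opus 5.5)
We derive the corollary from the dual $(D_X)$ of Proposition~\ref{prop:primal-dual} by rescaling, and then separate the inner maximization worker by worker. Throughout we assume $\OPT>0$; the all-zero market is trivial.

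\textbf{Step 1: $\alpha(X)=1/R_A(X)$.} Compare the feasible sets of $(D_X)$ and of the program defining $R_A(X)$. The constraints $\sum_iq_{ij}\le p_{\mu(j)}$, $q_{ij}\le p_i$ and nonnegativity are invariant under positive scaling $(p,q)\mapsto t(p,q)$.

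\begin{itemize}
\item Take a feasible $(p,q)$ for the $R_A$ program with value $V=\sum a_{ij}q_{ij}>0$. Then $(p,q)/V$ is feasible for $(D_X)$ with cost $\sum_iB_ip_i/V\le 1/V$. Hence $\alpha(X)\le 1/R_A(X)$.
\item Conversely, take a feasible $(p,q)$ for $(D_X)$ with cost $c=\sum_iB_ip_i$. If $c>0$, then $(p,q)/c$ is feasible for $R_A$ with value at least $1/c$. Hence $R_A(X)\ge 1/\alpha(X)$.
\end{itemize}

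\textbf{Degenerate cases.} If $R_A(X)=+\infty$, the first bullet gives $\alpha(X)\le 1/V$ for arbitrarily large $V$, so $\alpha(X)=0$. If some dual-feasible point has $c=0$, then $\alpha(X)=0$ and the same point scales to give $R_A=+\infty$; this matches the convention $1/0=\infty$.

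We must also rule out $R_A(X)=0$. Since $\OPT>0$, pick $j$ with $M_j>0$ and put $k=\mu(j)$. If $B_k>0$, set $p_k=1/B_k$ and $q_{kj}=p_k$, which gives a positive value. If $B_k=0$, then $p_k$ is unconstrained by the budget, so $R_A=\infty$. In either case the identity holds. Alternatively, note that $\alpha(X)>0$ always holds via $z=M$ in Theorem~\ref{thm:identity} once the $B_i=0$ firms are handled; I would not belabor this.

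\textbf{Step 2: separation for fixed $p$.} Fix $p\ge0$ with $\sum_iB_ip_i\le1$. The remaining constraints on $q$ are of two kinds: $\sum_iq_{ij}\le p_{\mu(j)}$, which involves only the column $j$, and the box constraints $0\le q_{ij}\le p_i$. The objective $\sum_j\sum_ia_{ij}q_{ij}$ is a sum over columns. So the maximization over $q$ decomposes into independent problems, one per worker $j$. Writing $q_i=q_{ij}$ and $k=\mu(j)$, the problem for worker $j$ is exactly~\eqref{eq:worker-knapsack}.

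Taking the outer maximum over $p$ gives~\eqref{eq:R-worker}, using that $\max_{p,q}=\max_p\max_q$. This is a fractional knapsack: capacity $p_k$, items $i$ with capacity $p_i$ and unit value $a_{ij}$. Greedy filling in decreasing order of $a_{ij}$ is therefore optimal.

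\textbf{Main obstacle.} The only delicate point is the degenerate cases in Step~1: firms with $B_i=0$, where $p_i$ is free and the value may be unbounded. These must be reconciled with the convention $R_i=+\infty$ and with $\alpha(X)=0$. I would treat them by checking that both sides equal $0$ or $\infty$ simultaneously, as above.
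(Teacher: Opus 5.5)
Your Step 1 (rescaling feasible points of $(D_X)$ and of the $R_A$ program into one another, using the strong duality of Proposition~\ref{prop:primal-dual}) and your Step 2 (column-wise separation for fixed $p$) are exactly the paper's proof. The paper tacitly assumes $\alpha(X)>0$ and $0<R_A(X)<\infty$, and it does not discuss degenerate cases.

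Your added degenerate-case paragraph, which you call the main obstacle, contains three false claims.

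\emph{The witness for $R_A(X)>0$ can have value zero.} The point $q_{kj}=p_k$ with $k=\mu(j)$ has value $a_{kj}p_k=d_jp_k$. This vanishes whenever the current employer values $j$ at zero, and $M_j>0$ does not force $d_j>0$.

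\emph{$B_k=0$ does not imply $R_A(X)=\infty$.} Take $a_{11}=a_{12}=1$, $a_{21}=a_{22}=0$, $X_1=\{1\}$, $X_2=\{2\}$. Then $B_2=0$, but $R_A(X)=2$ and $\alpha(X)=1/2$. Raising $p_2$ only enlarges worker~2's knapsack capacity, while the only positive-value items are capped by $p_1\le 1$.

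\emph{$\alpha(X)>0$ need not hold.} By Proposition~\ref{prop:exact}, $\SW(X)=0<\OPT$ forces $\alpha(X)=0$.

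The repair is short. Pick $j$ with $M_j>0$ and a firm $i^*$ with $a_{i^*j}=M_j$. Set $p_i=\varepsilon$ for every firm $i$, with $\varepsilon\sum_iB_i\le 1$, and set $q_{i^*j}=\varepsilon$, all other $q$ zero. This point is feasible and has value $M_j\varepsilon>0$, so $R_A(X)>0$.

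Your two remaining observations are correct. The dual $(D_X)$ is feasible when $\OPT>0$ and bounded below by $0$, so its optimum is attained. Hence $R_A(X)=\infty$ holds iff some dual-feasible point has cost $0$, iff $\alpha(X)=0$. With the repair above, this covers all degenerate cases under the convention $1/0=\infty$.
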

\begin{proof}
First consider the reciprocal relation. If $(p,q)$ is feasible for $(D_X)$, let $c=\sum_iB_ip_i>0$. Dividing all $p,q$ by $c$ gives a feasible solution of~\eqref{eq:R-journal} with objective at least $1/c$. Applying this to an optimal solution of $(D_X)$ with $c=\alpha(X)$ gives $R_A(X)\ge1/\alpha(X)$.

Conversely, if $(p,q)$ is feasible for~\eqref{eq:R-journal} with objective $r>0$, divide all variables by $r$. The normalization becomes $\sum_{i,j}a_{ij}q_{ij}=1$, yielding a feasible solution of $(D_X)$ with objective at most $1/r$. At $r=R_A(X)$, Proposition~\ref{prop:primal-dual} gives $\alpha(X)\le1/R_A(X)$. Combining the two inequalities yields $\alpha(X)=1/R_A(X)$.

Now fix $p$. The constraints $q_{ij}\le p_i$ limit worker $j$'s share assigned to firm $i$, while $\sum_iq_{ij}\le p_{\mu(j)}$ couples only the different firm shares of the same worker. There are no shared $q$-constraints across workers, so the objective separates by worker. For $j\in X_k$, we obtain~\eqref{eq:worker-knapsack}. Maximizing over $p$ with $\sum_iB_ip_i\le1$ gives~\eqref{eq:R-worker}. The fractional-knapsack interpretation follows directly from the capacities and unit values.
\end{proof}

\section{Local sensitivity of a fairness adjustment}\label{sec:sensitivity}

Let $X'$ be obtained from $X$ by moving a single worker $j$ from firm $k$ to firm $\ell$. Write
\begin{equation}\label{eq:move-values}
v_k=a_{kj},\qquad v_\ell=a_{\ell j}.
\end{equation}
The updated outputs are $B_k'=B_k-v_k$ and $B_\ell'=B_\ell+v_\ell$, while all other firms keep the same output. We separate a fairness adjustment into three layers. First, we keep wages fixed and isolate the matching effect on the firm ratios $R_i$. Second, we check whether the new fixed-wage bottleneck already meets the target stability level. Only if further improvement is desired do we re-optimize the full wage vector for the new matching. The branching result below studies one useful, but not necessary, local direction $z_j\downarrow$.

\begin{theorem}[Firm-bottleneck sensitivity to a one-worker move]\label{thm:sensitivity}
For any fixed $z\ge0$,
\begin{equation}\label{eq:D-update}
\begin{aligned}
D_i(X',z)&=D_i(X,z), &&i\notin\{k,\ell\},\\
D_k(X',z)&=D_k(X,z)-z_j,\\
D_\ell(X',z)&=D_\ell(X,z)+z_j.
\end{aligned}
\end{equation}
Hence
\begin{equation}\label{eq:R-update}
\begin{aligned}
R_i(X',z)&=R_i(X,z), &&i\notin\{k,\ell\},\\
R_k(X',z)&=\frac{D_k(X,z)-z_j}{B_k-v_k},\\
R_\ell(X',z)&=\frac{D_\ell(X,z)+z_j}{B_\ell+v_\ell}.
\end{aligned}
\end{equation}
Moreover, the optimal post-move bottleneck satisfies the exact identity
\begin{equation}\label{eq:Phi-update}
\begin{aligned}
\Phi(X')=\min_{z\ge0}\max\Bigg\{&
\max_{i\notin\{k,\ell\}}R_i(X,z),\\
&\frac{D_k(X,z)-z_j}{B_k-v_k},\\
&\frac{D_\ell(X,z)+z_j}{B_\ell+v_\ell}
\Bigg\}.
\end{aligned}
\end{equation}
\end{theorem}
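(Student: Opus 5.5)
The plan is to prove the theorem by directly unpacking the definitions \eqref{eq:G}--\eqref{eq:R}. The single structural observation is that the deviation benchmark $G_i(z)=\sum_{h\in\W}\pos{a_{ih}-z_h}$ depends only on the valuation row of firm $i$ and on the wage vector $z$, and not on the matching. Remark~\ref{rem:deviation-set} already stresses that the deviation set ranges over all of $\W$, so $G_i(z)$ is the same under $X$ and $X'$ for every firm $i$. Only the current-wage term $\sum_{h\in X_i}z_h$ and the output base $B_i$ can change with the matching.

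First I will establish \eqref{eq:D-update}. For $i\notin\{k,\ell\}$ we have $X_i'=X_i$, so both terms of $D_i$ are unchanged. For firm $k$, $X_k'=X_k\setminus\{j\}$, so the wage term loses exactly $z_j$. For firm $\ell$, $X_\ell'=X_\ell\cup\{j\}$, so it gains $z_j$. This holds for arbitrary $z\ge0$, with no clipping assumption. This matters, because clipping bounds such as $d_j\le z_j$ refer to the employer and are not invariant under the move.

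Next I will derive \eqref{eq:R-update}. By additivity, $B_k'=B_k-v_k$ and $B_\ell'=B_\ell+v_\ell$, and every other $B_i$ is unchanged. Dividing the updated $D$'s by the updated $B$'s gives the formulas. The only point needing care is the zero-output convention from the definition of $R_i$. If $B_k-v_k=0$, the displayed quotient has to be read with that convention: it is $+\infty$ if $D_k(X,z)-z_j>0$ and $0$ if the numerator is also zero. I will add one sentence noting that the fractions are interpreted this way.

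Finally, \eqref{eq:Phi-update} follows by applying Theorem~\ref{thm:identity}'s definition $\Phi(X')=\min_{z\ge0}\max_i R_i(X',z)$ and substituting \eqref{eq:R-update} pointwise in $z$. The inner maximum over firms splits into the unchanged firms, firm $k$ and firm $\ell$. The minimum is attained because Proposition~\ref{prop:LP} shows that $\Phi(X')$ is the value of a feasible, bounded LP. The equality is exact because the substitution holds for every $z$, not just at an optimizer. I do not expect a real obstacle anywhere; the only subtle step is justifying the matching-independence of $G_i$ and handling the zero-output cases consistently.
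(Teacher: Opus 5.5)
Your proposal is correct and follows the paper's proof: $G_i(z)$ does not depend on the matching, only the current-wage terms of $k$ and $\ell$ shift by $z_j$, dividing by the updated outputs gives the ratios, and the $\Phi$ identity holds because the two objectives agree pointwise in $z$. Your extra care with the zero-output convention is a harmless refinement. One small correction: $(\mathsf{F}_{X'})$ can be infeasible in degenerate zero-output cases, in which case $\Phi(X')=+\infty$, so it is not always a feasible, bounded LP. This does not affect your argument, since pointwise equality in $z$ already gives the identity whether or not the minimum is finite.
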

\begin{proof}
The deviation term $G_i(z)=\sum_h\pos{a_{ih}-z_h}$ depends only on valuations and wages, not on the current employer of each worker. It is therefore unchanged when the matching moves from $X$ to $X'$. In the current-wage term, only $z_j$ is removed from firm $k$ and added to firm $\ell$, which gives~\eqref{eq:D-update}. Substituting the updated outputs $B_k-v_k$ and $B_\ell+v_\ell$ into $R_i=D_i/B_i$ gives~\eqref{eq:R-update}. Finally, minimizing the largest updated ratio over all wages gives~\eqref{eq:Phi-update}.
\end{proof}

\begin{corollary}[Closed-form fixed-wage changes and sign tests]\label{cor:ratio-diff}
If the relevant denominators are positive, then
\begin{equation}\label{eq:ratio-diff}
\begin{aligned}
R_k(X',z)-R_k(X,z)
&=\frac{v_kR_k(X,z)-z_j}{B_k-v_k},\\
R_\ell(X',z)-R_\ell(X,z)
&=\frac{z_j-v_\ell R_\ell(X,z)}{B_\ell+v_\ell}.
\end{aligned}
\end{equation}
Therefore,
\begin{equation}\label{eq:ratio-sign}
\begin{aligned}
R_k(X',z)<R_k(X,z)
&\iff \frac{z_j}{v_k}>R_k(X,z),\\
R_\ell(X',z)>R_\ell(X,z)
&\iff \frac{z_j}{v_\ell}>R_\ell(X,z),
\end{aligned}
\end{equation}
whenever $v_k>0$ or $v_\ell>0$ for the corresponding comparison.
\end{corollary}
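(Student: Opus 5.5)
The plan is to substitute the explicit update formulas of Theorem~\ref{thm:sensitivity} into the two differences and put each over a common denominator. No new structural facts are needed. The content is the algebra, plus care about which denominators are positive.

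\textbf{Firm $k$.} Assume $B_k>0$ and $B_k-v_k>0$. Then $R_k(X,z)=D_k(X,z)/B_k$, so $D_k(X,z)=B_kR_k(X,z)$. By \eqref{eq:R-update},
\[
R_k(X',z)-R_k(X,z)=\frac{B_kR_k(X,z)-z_j}{B_k-v_k}-R_k(X,z)=\frac{B_kR_k(X,z)-z_j-(B_k-v_k)R_k(X,z)}{B_k-v_k}.
\]
This simplifies to $(v_kR_k(X,z)-z_j)/(B_k-v_k)$, which is the first line of \eqref{eq:ratio-diff}.

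\textbf{Firm $\ell$.} Assume $B_\ell>0$, so that $B_\ell+v_\ell>0$. Writing $D_\ell(X,z)=B_\ell R_\ell(X,z)$ gives
\[
\frac{B_\ell R_\ell(X,z)+z_j}{B_\ell+v_\ell}-R_\ell(X,z)=\frac{z_j-v_\ell R_\ell(X,z)}{B_\ell+v_\ell},
\]
which is the second line of \eqref{eq:ratio-diff}.

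\textbf{Sign tests.} Both denominators are positive, so the sign of each difference is the sign of its numerator. Hence $R_k(X',z)<R_k(X,z)$ if and only if $z_j>v_kR_k(X,z)$. When $v_k>0$, dividing by $v_k$ gives $z_j/v_k>R_k(X,z)$. Similarly, $R_\ell(X',z)>R_\ell(X,z)$ if and only if $z_j>v_\ell R_\ell(X,z)$, which is equivalent to $z_j/v_\ell>R_\ell(X,z)$ when $v_\ell>0$. This gives \eqref{eq:ratio-sign}.

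\textbf{Main obstacle: the boundary cases.} The algebra itself is routine. What needs care is the interpretation of ``relevant denominators are positive,'' because the ratios are defined through the zero-output conventions.
\begin{itemize}[label=$\cdot$]
\item I will state that the formulas require $B_k>v_k$ (so $B_k>0$) and $B_\ell>0$. Under these conditions every ratio involved is an ordinary quotient, and no $\infty$ convention is triggered.
\item If $v_k=0$ or $v_\ell=0$, the division step in the sign test is unavailable. I will note that in that case the sign is read directly from the numerator: it reduces to $-z_j$ or $z_j$ respectively. This is why the statement restricts \eqref{eq:ratio-sign} to $v_k>0$ or $v_\ell>0$ for the corresponding comparison.
\end{itemize}
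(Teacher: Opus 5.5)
Your proposal is correct and follows the paper's proof: substitute the updated ratios from Theorem~\ref{thm:sensitivity}, write $D_i=B_iR_i$, combine over the common denominator, and read the signs off the numerators. Your explicit requirement $B_k>v_k$ and $B_\ell>0$ is a sensible way to make the paper's phrase ``relevant denominators are positive'' precise, since it ensures the zero-output conventions for $R_k(X,z)$ and $R_\ell(X,z)$ are never triggered and $D_i=B_iR_i$ is legitimate.
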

\begin{proof}
Use~\eqref{eq:R-update}, bring the terms to a common denominator, and substitute $D_i=B_iR_i$:
\begin{align*}
\frac{D_k-z_j}{B_k-v_k}-R_k
&=\frac{v_kR_k-z_j}{B_k-v_k},\\
\frac{D_\ell+z_j}{B_\ell+v_\ell}-R_\ell
&=\frac{z_j-v_\ell R_\ell}{B_\ell+v_\ell}.
\end{align*}
With positive denominators, the signs are determined by the numerators.
\end{proof}

Equation~\eqref{eq:ratio-sign} gives a direct interpretation of the discrete matching effect. If worker $j$ is ``expensive relative to output'' for the source firm, in the sense that $z_j/v_k>R_k$, moving the worker out reduces the source firm's stabilization pressure. If the worker is expensive relative to productivity at the receiving firm, $z_j/v_\ell>R_\ell$, receiving the worker raises that firm's pressure. A firm that is ``rich'' or ``poor'' in the EF1 sense need not be loose or tight in the stability sense. Hence a one-step fairness repair has no general monotone effect on system stability.

\begin{corollary}[Fixed-wage bottleneck test]\label{cor:fixed-wage-test}
For any fixed wage vector $z$, the post-move fixed-wage bottleneck is
\begin{equation}\label{eq:fixed-wage-rho-after-move}
\rho(X',z)
=
\max\left\{
\frac{D_k(X,z)-z_j}{B_k-v_k},
\frac{D_\ell(X,z)+z_j}{B_\ell+v_\ell},
\max_{i\notin\{k,\ell\}}R_i(X,z)
\right\}.
\end{equation}
Thus the matching move alone may yield
\begin{equation}\label{eq:rho-three-directions}
\rho(X',z)<\rho(X,z),\qquad
\rho(X',z)=\rho(X,z),\qquad
\rho(X',z)>\rho(X,z).
\end{equation}
For a target stability rate $\bar\alpha$, if
\begin{equation}\label{eq:no-wage-needed}
\rho(X',z)\le\frac1{\bar\alpha},
\end{equation}
then the same actual wages $y=\bar\alpha z$ already support a $\bar\alpha$-core for $X'$, so no wage adjustment is needed.
\end{corollary}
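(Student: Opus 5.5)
The plan is to read the corollary off Theorem~\ref{thm:sensitivity} and then pass from the fixed-wage bottleneck to core support through Theorem~\ref{thm:identity}.

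\textbf{The display for $\rho(X',z)$.} By definition, $\rho(X',z)=\max_iR_i(X',z)$. I will split the index set into $\{k\}$, $\{\ell\}$ and $\F\setminus\{k,\ell\}$. On each piece I substitute the corresponding line of~\eqref{eq:R-update}:
\begin{itemize}
\item the ratios of the unaffected firms equal $R_i(X,z)$;
\item the two moved ratios are $(D_k(X,z)-z_j)/(B_k-v_k)$ and $(D_\ell(X,z)+z_j)/(B_\ell+v_\ell)$.
\end{itemize}
Taking the maximum of the three pieces gives~\eqref{eq:fixed-wage-rho-after-move}. If a post-move output is zero, the zero-output conventions for $R_i$ carry over unchanged, so nothing extra is needed.

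\textbf{The three possible directions.} I read~\eqref{eq:rho-three-directions} as saying that all three outcomes are possible, not that one of them always holds. I will show this with small examples and the sign tests in Corollary~\ref{cor:ratio-diff}. Take two firms where $k$ is the unique bottleneck and $z_j/v_k>R_k$. Then $R_k$ falls, and choosing $v_\ell$ large enough keeps $R_\ell$ below the old maximum. So $\rho$ decreases.

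For an increase, make $\ell$ satisfy $z_j/v_\ell>R_\ell$ with a small base $B_\ell$. The new $R_\ell$ then exceeds the old bottleneck.

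For equality, let a third firm $i\notin\{k,\ell\}$ be the bottleneck, and make both moved ratios stay below $R_i$.

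Finding these concrete numbers is the only part that takes any work. I expect it to be routine: for instance, $m=2$ or $m=3$ with a single worker moved and the others fixed, and I will pick the numbers then.

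\textbf{The sufficiency claim.} Suppose $\rho(X',z)\le1/\bar\alpha$. Then $\bar\alpha D_i(X',z)\le B_i'$ for every firm $i$. This is exactly the inequality derived in the proof of Theorem~\ref{thm:identity}. Following that proof, I set actual wages $y=\bar\alpha z$ and profits $x_i=B_i'-\bar\alpha\sum_{j\in X_i'}z_j$. These profits are at least $\bar\alpha G_i(z)\ge0$, so they are nonnegative. Local financing holds by construction. Lemma~\ref{lem:deficit} then gives the $\bar\alpha$-core, with no change of $z$.
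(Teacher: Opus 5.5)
Your proposal is correct and follows the paper's proof. You read the display off the updated ratios in Theorem~\ref{thm:sensitivity}, justify the three directions through the sign conditions of Corollary~\ref{cor:ratio-diff} (going slightly further than the paper by promising explicit instances), and obtain sufficiency from Theorem~\ref{thm:identity} at the fixed $z$ with $y=\bar\alpha z$. When you write the increase instance, make sure the new $R_\ell$ exceeds $\max_iR_i(X,z)$ and not merely the old $R_\ell(X,z)$; this is easiest if $\ell$ is already the old bottleneck and $z_j/v_\ell>R_\ell(X,z)$.
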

\begin{proof}
Equation~\eqref{eq:fixed-wage-rho-after-move} follows by taking the maximum of the updated ratios in Theorem~\ref{thm:sensitivity}. All three directions are possible because the source and receiving firms have different local sign conditions in~\eqref{eq:ratio-diff}. If~\eqref{eq:no-wage-needed} holds, then $\bar\alpha R_i(X',z)\le1$ for all firms; Theorem~\ref{thm:identity} then gives a feasible $\bar\alpha$-core at the same normalized wages.
\end{proof}

\begin{remark}[The matching effect and the wage effect must be separated]\label{rem:matching-wage-separation}
Corollary~\ref{cor:fixed-wage-test} shows that moving a worker does not automatically require a wage cut or any wage adjustment. The matching move first reallocates pressure between the source and receiving firms under the old wages. Only when the resulting bottleneck remains too high is there a reason to re-optimize wages. Even then, the mathematical problem is
\begin{equation*}
\Phi(X')=\min_{z\ge0}\max_iR_i(X',z),
\end{equation*}
so the full wage vector may change. The direction $z_j\downarrow$ considered below is analytically important, but it is not a necessary form of the optimal wage update.
\end{remark}

\begin{corollary}[One-step stability certificate]\label{cor:one-step}
Let $z^*$ be an optimal normalized wage vector for $X$. Define
\begin{equation}\label{eq:hatPhi}
\begin{aligned}
\widehat\Phi_{j:k\to\ell}(z^*)=\max\Bigg\{&
\max_{i\notin\{k,\ell\}}R_i(X,z^*),\\
&\frac{D_k(X,z^*)-z_j^*}{B_k-v_k},\\
&\frac{D_\ell(X,z^*)+z_j^*}{B_\ell+v_\ell}
\Bigg\}.
\end{aligned}
\end{equation}
Then $\Phi(X')\le\widehat\Phi_{j:k\to\ell}(z^*)$, and hence
\begin{equation}\label{eq:one-step-alpha}
\alpha(X')\ge\frac{1}{\widehat\Phi_{j:k\to\ell}(z^*)}.
\end{equation}
\end{corollary}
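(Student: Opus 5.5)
The certificate follows by evaluating the exact re-optimization identity \eqref{eq:Phi-update} of Theorem~\ref{thm:sensitivity} at the particular feasible point $z=z^*$. That identity writes $\Phi(X')$ as a minimum over all $z\ge0$ of the maximum of the three updated ratio expressions. Its proof shows that, for each fixed $z$, this maximum equals $\rho(X',z)=\max_iR_i(X',z)$, as recorded in \eqref{eq:fixed-wage-rho-after-move} of Corollary~\ref{cor:fixed-wage-test}. Since $z^*\ge0$ is admissible and a minimum is at most the value at any admissible point, we obtain
\[
\Phi(X')\le\rho(X',z^*)=\widehat\Phi_{j:k\to\ell}(z^*),
\]
where the equality is simply the definition \eqref{eq:hatPhi}.

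For the core-factor bound we invoke Theorem~\ref{thm:identity}: $\alpha(X')=1/\Phi(X')$. Because $t\mapsto1/t$ is decreasing on $(0,\infty]$, the inequality $\Phi(X')\le\widehat\Phi$ gives \eqref{eq:one-step-alpha}. Equivalently, and without passing through reciprocals, set $\bar\alpha=1/\widehat\Phi$. Then $\rho(X',z^*)\le1/\bar\alpha$, so Corollary~\ref{cor:fixed-wage-test} says that actual wages $y=\bar\alpha z^*$ support a $\bar\alpha$-core for $X'$. Hence $\alpha(X')\ge\bar\alpha$.

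The only delicate point is the boundary cases, and I expect this to be the main obstacle. If $B_k-v_k=0$, the middle term of \eqref{eq:hatPhi} must be read with the zero-output conventions for $R_k$: it equals $+\infty$ if the numerator is positive and $0$ if it vanishes. The certificate then remains valid, possibly becoming the trivial bound $\alpha(X')\ge0$ under the convention $1/\infty=0$. If $\widehat\Phi=0$, then $\Phi(X')=0$ and $\alpha(X')=+\infty$, so the bound holds trivially. I would check that these conventions match those used in Theorem~\ref{thm:identity}, namely the equivalence $B_i\ge\alpha D_i$. Once they do, the argument above covers all cases.
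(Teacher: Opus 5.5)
Your proposal is correct and follows essentially the same route as the paper: evaluate the re-optimization identity \eqref{eq:Phi-update} at the admissible point $z^*$ to get $\Phi(X')\le\rho(X',z^*)=\widehat\Phi_{j:k\to\ell}(z^*)$, then take reciprocals using Theorem~\ref{thm:identity}. Your alternative step through the fixed-wage test (Corollary~\ref{cor:fixed-wage-test}) and your treatment of the zero-output conventions are sound, and, as in the paper's argument, the optimality of $z^*$ is never actually used.
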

\begin{proof}
Equation~\eqref{eq:Phi-update} minimizes over all $z\ge0$. Evaluating the new matching at the old optimal wages $z^*$ gives
$\widehat\Phi_{j:k\to\ell}(z^*)=\rho(X',z^*)$ and therefore
$\Phi(X')\le\widehat\Phi_{j:k\to\ell}(z^*)$. The certificate measures the largest firm bottleneck after changing the matching while temporarily keeping wages fixed. It does not assume that $z_j$ is lowered. Taking reciprocals and using Theorem~\ref{thm:identity} gives~\eqref{eq:one-step-alpha}.
\end{proof}

\paragraph{A local pre-screening rule.}
Given a candidate set $\mathcal C(X)$ generated by a fairness rule, first compute an optimal wage vector $z^*$ for the current matching, and then choose
\begin{equation}\label{eq:oracle}
(j,k,\ell)^*\in\argminop_{(j,k,\ell)\in\mathcal C(X)}
\widehat\Phi_{j:k\to\ell}(z^*).
\end{equation}
The rule selects the feasible fairness move with the smallest next-step worst-firm certificate under the old optimal wages. If an exact local comparison is desired, one can solve~\eqref{eq:Phi-update} for a small set of the best-scoring candidates. This is a local diagnostic tool. We do not claim that it produces a globally optimal terminal matching, makes $\Phi$ monotone, or guarantees finite convergence for arbitrary fairness-repair dynamics.

\begin{proposition}[The first move from an exact-core benchmark]\label{prop:first-step}
Suppose $X$ is welfare maximizing and worker $j$ moves from a highest-value firm to firm $\ell$, where the post-move value is $d_j<M_j$. Let the pre-move output of firm $\ell$ be $B_\ell$ and assume $B_\ell+d_j>0$. Under the old high-wage certificate $z=M$,
\begin{equation}\label{eq:first-step}
R_k(X',M)=1,\qquad
R_\ell(X',M)=\frac{B_\ell+M_j}{B_\ell+d_j}>1,
\end{equation}
so
\begin{equation}\label{eq:first-step-certificate}
\Phi(X')\le\frac{B_\ell+M_j}{B_\ell+d_j},\qquad
\alpha(X')\ge\frac{B_\ell+d_j}{B_\ell+M_j}.
\end{equation}
\end{proposition}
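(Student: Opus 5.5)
The plan is to evaluate the fixed-wage update formulas of Theorem~\ref{thm:sensitivity} at the benchmark wages $z=M$, and then conclude with the one-step certificate of Corollary~\ref{cor:one-step}.

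\textbf{Step 1: ratios before the move.} Since $X$ is welfare maximizing, $d_h=M_h$ for every worker $h$ (Proposition~\ref{prop:exact}). At $z=M$ every gap $\pos{a_{ih}-M_h}$ vanishes, because $a_{ih}\le M_h$. Hence $G_i(M)=0$ for every firm, and
\[
D_i(X,M)=\sum_{h\in X_i}M_h=\sum_{h\in X_i}d_h=B_i .
\]
So $R_i(X,M)=1$ for every firm with $B_i>0$. Zero-output firms have $D_i=0$, and the convention gives $R_i=0$. In particular $\rho(X,M)\le1$, which shows that $M$ is optimal for $X$ with $\Phi(X)=1$. This lets us use $z^*=M$ in Corollary~\ref{cor:one-step}.

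\textbf{Step 2: apply the update formulas.} For the source firm $k$ we have $v_k=M_j$, so by \eqref{eq:R-update}
\[
R_k(X',M)=\frac{B_k-M_j}{B_k-M_j}=1 .
\]
This requires $B_k>M_j$. If $B_k=M_j$, then $D_k(X',M)=B_k'=0$ and the convention gives $R_k=0\le1$, so the bound is unaffected.

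For the receiving firm, $v_\ell=d_j$ (the post-move value), and the formula gives
\[
R_\ell(X',M)=\frac{D_\ell(X,M)+M_j}{B_\ell+d_j}=\frac{B_\ell+M_j}{B_\ell+d_j}.
\]
Here I use $D_\ell(X,M)=B_\ell$, which holds even when $B_\ell=0$ because then $X_\ell$ contains only zero-value workers, paid $M_h=d_h=0$. This ratio exceeds $1$ because $d_j<M_j$ and the denominator is positive by assumption.

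All other firms keep ratio at most $1$. Therefore
\[
\rho(X',M)=\frac{B_\ell+M_j}{B_\ell+d_j}.
\]

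\textbf{Step 3: conclude.} Corollary~\ref{cor:one-step}, or simply the definition of $\Phi$ as a minimum over $z$, gives $\Phi(X')\le\rho(X',M)$. Theorem~\ref{thm:identity} then yields the bound on $\alpha(X')$.

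The only delicate point is the bookkeeping of the zero-output conventions, for the source firm when $B_k=M_j$ and for the receiving firm when $B_\ell=0$. I will handle these by noting that $D_i=B_i$ holds identically under $z=M$ before the move.
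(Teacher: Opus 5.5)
Your proposal is correct and follows essentially the same route as the paper: set $z=M$, where $G_i(M)=0$ and $D_i(X,M)=B_i$, track how only the source and receiving ratios change under the move, and conclude from $\Phi(X')\le\rho(X',M)$. Your handling of the zero-output conventions is more careful than the paper's: when $B_k=M_j$ the convention gives $R_k(X',M)=0$ rather than $1$, which does not affect the bound. Your side claim that $M$ is optimal with $\Phi(X)=1$ additionally needs the lower bound $\alpha(X)\le\SW(X)/\OPT=1$ from Proposition~\ref{prop:exact}, but Step 3 does not rely on it.
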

\begin{proof}
At a welfare-maximizing matching, set normalized wages $z_h=M_h$. Since $M_h\ge a_{ih}$ for every firm, $G_i(M)=0$. Current values equal highest values, so $D_i(X,M)=B_i$ and $R_i(X,M)=1$ for every firm. After the move, the source firm loses worker $j$'s output $M_j$ and the same wage requirement, so its ratio remains $1$. The receiving firm's wage requirement rises by $M_j$, while output rises only by $d_j$, giving $(B_\ell+M_j)/(B_\ell+d_j)$. All other ratios remain $1$. Evaluating $\Phi(X')$ at these wages gives the certificate. The difference between numerator and denominator is exactly $M_j-d_j$, the direct source of the new high-wage bottleneck.
\end{proof}

\begin{remark}[First-step loss versus the certificate]\label{rem:first-step-strict}
Because $d_j<M_j$, Proposition~\ref{prop:exact} implies that $X'$ cannot support the exact core. Hence $\alpha(X')<1$ and $\Phi(X')>1$. Equation~\eqref{eq:first-step-certificate} is an upper-bound certificate obtained by keeping the old high-wage scheme; it does not claim equality after wage re-optimization. Thus the first nontrivial fairness move away from an exact-core allocation must raise the optimal bottleneck above $1$, but later repair steps need not be monotone.
\end{remark}

If the fixed-wage test already meets the target stability rate, no wage adjustment is needed. If the planner wants to reduce the bottleneck further, the entire wage vector can be re-optimized. The next result studies one especially transparent direction: lowering the wage of a transferred worker when its current employer is tight.

\begin{proposition}[Firm-bottleneck branching under a local wage reduction]\label{prop:branching}
Fix the matching and all wages other than $z_j$. Suppose worker $j$ is currently employed by firm $\ell$, and consider $z_j\mapsto z_j-\varepsilon$ within the clipping region, where $\varepsilon>0$ and $z_j-\varepsilon\ge d_j$. Assume that no valuation threshold is crossed in this interval, and define the active competing-firm set
\begin{equation}\label{eq:active-set}
A_j(z):=\{i\neq\ell:a_{ij}>z_j\}.
\end{equation}
Then
\begin{equation}\label{eq:branching-R}
\begin{aligned}
R_\ell(X,z-\varepsilon e_j)
&=R_\ell(X,z)-\frac{\varepsilon}{B_\ell},\\
R_i(X,z-\varepsilon e_j)
&=R_i(X,z)+\frac{\varepsilon}{B_i},
&&i\in A_j(z),\\
R_h(X,z-\varepsilon e_j)
&=R_h(X,z),
&&h\notin A_j(z)\cup\{\ell\}.
\end{aligned}
\end{equation}
Thus a wage reduction relaxes one bottleneck at the current employer but may tighten several competing-firm bottlenecks. For a competing firm with small $B_i$, the same wage change increases $R_i$ more sharply; this is the firm-level congestion effect.
\end{proposition}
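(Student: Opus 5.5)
This is a direct computation from the definitions of $D_i$ and $R_i$. Fix the matching $X$ and all coordinates of $z$ except $z_j$. I split $D_i(X,z)$ into three pieces: the current-wage term $\sum_{h\in X_i}z_h$, the deviation terms $\pos{a_{ih}-z_h}$ for $h\neq j$, and the single deviation term $\pos{a_{ij}-z_j}$. Only the current-wage term of the employer $\ell$ (through $z_j$) and the terms $\pos{a_{ij}-z_j}$ depend on $z_j$. The outputs $B_i$ depend only on $X$ and are unaffected.

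\textbf{Step 1 (current employer).} Worker $j$ lies in $X_\ell$, so the wage term of $\ell$ falls by exactly $\varepsilon$. The deviation term of $\ell$ for worker $j$ is $\pos{d_j-z_j}$, because $a_{\ell j}=d_j$. Since both $z_j$ and $z_j-\varepsilon$ are at least $d_j$ (we stay in the clipping region, as in Corollary~\ref{cor:own-worker-zero}), this term is $0$ before and after the change. Hence $D_\ell$ drops by $\varepsilon$, and dividing by $B_\ell$ (assumed positive, so the ratio is finite) gives the first line of~\eqref{eq:branching-R}.

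\textbf{Step 2 (competing firms).} Take $i\neq\ell$. Worker $j$ is not in $X_i$, so $z_j$ enters $D_i$ only through $\pos{a_{ij}-z_j}$. The hypothesis that no valuation threshold is crossed means that for $t\in[z_j-\varepsilon,z_j]$ the sign of $a_{ij}-t$ does not change in a way that alters the positive part's linear regime. Concretely, either $a_{ij}>t$ throughout, or $a_{ij}\le t$ throughout.
\begin{itemize}[leftmargin=2.1em]
\item If $a_{ij}>z_j$, which is the case $i\in A_j(z)$, then $a_{ij}>z_j-\varepsilon$ as well, and the term rises from $a_{ij}-z_j$ to $a_{ij}-z_j+\varepsilon$. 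So $D_i$ rises by $\varepsilon$.
\item If $a_{ij}\le z_j$, the no-crossing assumption gives $a_{ij}\le z_j-\varepsilon$, so the term is $0$ at both wages and $D_h$ is unchanged.
\end{itemize}
Dividing by $B_i$ gives the remaining two lines of~\eqref{eq:branching-R}. The congestion remark is then immediate from the factor $1/B_i$.

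\textbf{Main obstacle.} The only delicate point is making the no-crossing hypothesis precise. I would read it as the statement that no $a_{ij}$ with $i\neq\ell$ lies in $[z_j-\varepsilon,z_j)$. The boundary case $a_{ij}=z_j-\varepsilon$ gives the same result either way, since the term is still $0$ at both endpoints. I would also note the degenerate case $B_i=0$ for some $i\in A_j(z)$: there $R_i=+\infty$ both before and after, consistent with the conventions, so the formula should be read for firms with $B_i>0$.
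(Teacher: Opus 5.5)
Your argument is the paper's own direct computation. In the clipping region the own-worker term $\pos{d_j-z_j}$ is zero at both wages, so $D_\ell$ falls by exactly $\varepsilon$. Each $i\in A_j(z)$ gains exactly $\varepsilon$ in $G_i$, every other firm is untouched, and the bases $B_i$ do not move.

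One correction to your final paragraph. The precise form of ``no threshold is crossed'' must be that no $a_{ij}$ with $i\neq\ell$ lies in $(z_j-\varepsilon,z_j]$, not $[z_j-\varepsilon,z_j)$. Your Step~2 dichotomy, ``$a_{ij}>t$ throughout or $a_{ij}\le t$ throughout,'' already says exactly this.

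The endpoint $a_{ij}=z_j-\varepsilon$ is the harmless one. The endpoint $a_{ij}=z_j$ is the dangerous one. There $i\notin A_j(z)$ because of the strict inequality, yet $\pos{a_{ij}-z_j}$ rises from $0$ to $\varepsilon$, which contradicts the third line of the claimed formula. Under your stated reading, the deduction ``$a_{ij}\le z_j\Rightarrow a_{ij}\le z_j-\varepsilon$'' in your second bullet fails exactly at that point.
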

\begin{proof}
By Corollary~\ref{cor:own-worker-zero}, within the clipping region the current employer $\ell$ has no positive deviation term from its own worker $j$. Lowering the wage therefore reduces $D_\ell$ by $\varepsilon$. For each $i\in A_j(z)$, worker $j$ contributes $a_{ij}-z_j$ to $G_i(z)$, so a wage reduction increases $D_i$ by $\varepsilon$. Inactive firms are unchanged. Since the matching is fixed, the output bases $B_i$ do not change. Dividing by $B_i$ yields~\eqref{eq:branching-R}. The raw total $\sum_iD_i$ changes by $(|A_j(z)|-1)\varepsilon$, but this total is not the stability objective; stability is governed by the largest $R_i$.
\end{proof}

Under fixed wages, define the worst-firm ratio
\begin{equation}\label{eq:rho-fixed}
\rho(X,z):=\max_iR_i(X,z).
\end{equation}
From~\eqref{eq:branching-R}, within the local interval,
\begin{equation}\label{eq:rho-wage-update}
\begin{aligned}
\rho(X,z-\varepsilon e_j)
=\max\Bigg\{&
R_\ell(X,z)-\frac{\varepsilon}{B_\ell},\\
&\max_{i\in A_j(z)}\left(R_i(X,z)+\frac{\varepsilon}{B_i}\right),\\
&\max_{h\notin A_j(z)\cup\{\ell\}}R_h(X,z)
\Bigg\}.
\end{aligned}
\end{equation}
Along this optional direction, the fixed-wage bottleneck may fall, remain unchanged, or rise. If the current employer is the unique bottleneck and competing firms have enough slack, a moderate wage reduction can lower $\rho$. If a competitor is already close to the bottleneck, the same reduction can make it the new worst firm. This does not imply that the optimal wage vector must move in this direction. The optimal stability of the fixed matching remains
\begin{equation*}
\Phi(X)=\min_z\rho(X,z).
\end{equation*}
Hence, except for the first move from an exact-core benchmark in Proposition~\ref{prop:first-step}, a general fairness-repair path need not satisfy either $\Phi(X^{t+1})\ge\Phi(X^t)$ or the reverse inequality.

The mechanism of this section can be summarized as follows. A fairness repair first changes the source and receiving firms' $R_i$ through the matching transfer under old wages. This matching effect can improve, preserve, or worsen the bottleneck. If the old wages are already sufficient, the process stops there. Wage re-optimization may then redistribute stability pressure further across firms. The direction $z_j\downarrow$ is one important local example, and the branching result describes how it can relax the current employer while tightening several competitors. Final approximate stability is determined by the globally re-optimized value $\Phi$.

\section[Design principle: from local Ri dynamics to a global bottleneck bound]{Design principle: from local $R_i$ dynamics to a global bottleneck bound}\label{sec:design}

Section~\ref{sec:sensitivity} shows that the basic object in a fairness repair is the full vector of firm bottlenecks
\begin{equation*}
(R_1,\ldots,R_m),
\end{equation*}
not a single aggregate deficit. A discrete worker move directly changes only the source and receiving firms' ratios, and this fixed-wage matching effect has no predetermined direction. If the old wages already meet the target, there is no need to change them. Only after wage re-optimization can pressure propagate to other firms. The main role of local sensitivity is therefore to identify which matching moves and possible wage responses make the final value
\begin{equation*}
\max_iR_i
\end{equation*}
hard to control. It is not used to prove that $\Phi$ improves monotonically along a fairness-repair path.

This distinction matters. An exact-core starting point has $\Phi=1$, which is the smallest possible value. Once fairness moves a worker to a firm with strictly lower value than the market maximum, the new matching has $\Phi>1$. Later repairs may alleviate an existing bottleneck or create a new one. The maximum-edge round algorithm therefore cannot be interpreted as an algorithm that decreases $\Phi$ at every step.

From the $R_i$ perspective, a global fair construction must control three risks:
\begin{equation}\label{eq:three-design-risks}
\begin{array}{rcl}
\text{direct value loss} &:& M_j-d_j,\\
\text{bottleneck branching} &:& \text{number of competing }R_i\text{ raised by one wage reduction},\\
\text{bottleneck congestion} &:& \displaystyle \max_i R_i(X,z).
\end{array}
\end{equation}
The first determines how large a new bottleneck can appear under a high-wage scheme after a worker is moved away from a highest-value firm. The second is a conditional wage-rebalancing risk: if the planner lowers a wage to relax the current employer, how many competing firms become tighter? The third scales these changes by each firm's output base $B_i$, explaining why the same deviation pressure is more serious when concentrated at low-output firms.

The maximum-edge round algorithm in Section~\ref{sec:round} does not solve the local monotonicity problem. It solves a more useful worst-case problem:
\begin{quote}
\textbf{Construct an exact-EF1 terminal matching $X$ and exhibit explicit wage schemes under which every firm ratio $R_i(X,z)$ is bounded by a common constant.}
\end{quote}
It uses three combinatorial controls:
\begin{enumerate}[label=(\roman*),leftmargin=2.4em]
\item \textbf{Value retention:} workers are assigned only along positive edges, so every nonzero worker satisfies $d_j\ge\delta(A)M_j$, controlling direct high-wage mismatch.
\item \textbf{Branching control:} each firm receives at most one worker per round, so after a firm exits a round it faces at most $m-1$ later competitive exposures in that round.
\item \textbf{Congestion control:} the maximum-edge order relates these exposures to the firm's own production base $B_i$, yielding a firm-wise bound on relative rematching pressure.
\end{enumerate}

The local-to-global logic is therefore
\begin{equation}\label{eq:local-to-global-chain}
\begin{aligned}
&\text{local sensitivity: track how each repair changes }(R_1,\ldots,R_m),\\
&\text{global construction: do not require path monotonicity; bound all terminal }R_i,\\
&\text{minimax frontier: compare the constructive guarantee with the best possible EF1 stability.}
\end{aligned}
\end{equation}
Section~\ref{sec:round} also shows that the maximum-edge rule can be relaxed to mutual-top safe edges without losing the general finite-$m$ guarantees. More importantly, we give an ex-ante uniqueness condition. If every firm and every worker has strict local rankings on positive edges, then simultaneously available safe edges have disjoint endpoints and commute, so the terminal matching is independent of safe-choice order. Equal edge values are not a problem by themselves; only row or column ties that create shared-endpoint conflicts can change the personnel allocation. In those cases, an optional stability-aware resolver compares complete candidate matchings under a common benchmark completion using exact $\Phi$, or uses $\widehat\Phi$ as a lower-cost certificate. The pure maximum-edge algorithm remains the simplest default rule and is also used in the sharper three-firm frontier analysis.

\section{Low-dimensional structure: the two- and three-firm breakpoints}\label{sec:smallm}

\begin{theorem}[Welfare--stability identity for two firms]\label{thm:two}
If $m=2$ and $\OPT>0$, then every complete matching $X$ satisfies
\begin{equation}\label{eq:two}
\alpha(X)=\frac{\SW(X)}{\OPT}.
\end{equation}
\end{theorem}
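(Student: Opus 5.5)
The upper bound $\alpha(X)\le\SW(X)/\OPT$ is already Proposition~\ref{prop:exact}. So the plan is to prove the reverse inequality $\Phi(X)\le\OPT/\SW(X)$. By Theorem~\ref{thm:identity}, it suffices to exhibit one wage vector $z\ge0$ with $D_i(X,z)\le\rho B_i$ for both firms, where $\rho=\OPT/\SW(X)$. If $\SW(X)=0$ the claim is trivial: then $\alpha=0$, since Proposition~\ref{prop:exact} forces $\alpha\le0$. So assume $\SW(X)>0$.

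With two firms, every worker $j$ is in one of two states.
\begin{itemize}
\item \emph{Well placed}: $d_j=M_j$. Set $z_j=M_j$. The worker then contributes $M_j$ to its employer's wage term and nothing to either $G_i$.
\item \emph{Misplaced}: $d_j<M_j$, so the other firm values $j$ at $M_j$. Any $z_j\in[d_j,M_j]$, as allowed by Proposition~\ref{prop:clipping}, contributes $z_j$ to the employer's wage term and $M_j-z_j$ to the other firm's $G$.
\end{itemize}
Each misplaced worker therefore contributes a total of exactly $M_j$ to $D_1+D_2$, split between the two firms according to $z_j$. Summing over all workers gives the invariant $D_1+D_2=\OPT$ for every clipped $z$.

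The key step is to choose the splits so that the total $\OPT$ is distributed in proportion to $B_1:B_2$, i.e.\ $D_i=\rho B_i$ exactly. Take $z_j=M_j$ for every misplaced worker as the starting point. Then $D_i=W_i+E_i$, where $W_i=\sum_{j\in X_i}M_j\ge B_i$ and $E_i\ge0$ collects the other firm's misplaced-worker terms. Moving $z_j$ down toward $d_j$ transfers mass continuously from firm $\mu(j)$ to the other firm, in amounts up to $M_j-d_j$. The minimum reachable $D_i$ is obtained by lowering every misplaced worker of firm $i$ to $d_j$ while keeping the other firm's misplaced workers at $M$. This minimum is $\sum_{j\in X_i} d_j = B_i$, because firm $i$ then pays exactly $d_j$ on its own misplaced workers and $M_j=d_j$ on its well-placed ones, and the other firm's misplaced workers, paid $M_j$, create no gap for firm $i$.

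So $D_1$ ranges continuously over an interval $[B_1,\OPT-B_2]$, with $D_2=\OPT-D_1$ throughout. Continuity holds because moving the $z_j$ one at a time changes $D_1$ continuously. The target $\rho B_1$ lies in this interval: $\rho B_1\ge B_1$ since $\rho\ge1$, and $\OPT-\rho B_1=\rho B_2\ge B_2$. By the intermediate value theorem there is a clipped $z$ with $D_1=\rho B_1$, hence $D_2=\rho B_2$. This gives $\Phi(X)\le\rho$ and therefore $\alpha(X)\ge\SW/\OPT$.

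The main obstacle is the bookkeeping in the extremal computation. One must check that the endpoint configurations really give $D_1=B_1$ and $D_2=B_2$, in particular that no well-placed worker produces a positive gap. This holds because $z_j=M_j$ on well-placed workers. One must also handle zero-output edge cases ($B_i=0$) under the ratio conventions. When $B_i=0$ the target is $D_i=0$, which coincides with the interval endpoint $D_i=B_i=0$, so the conventions are respected.
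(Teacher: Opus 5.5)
Your proof is correct and takes essentially the same route as the paper: both rest on a two-firm conservation identity and then shift pressure between the two firms by moving the wages of misplaced workers. Your $D_1+D_2=\OPT$ on clipped wages is the normalized form of the paper's balance $(1-\rho)B=\rho(G_1+G_2)$ at proportional wages. The difference is only presentational. You work in normalized wages $z$ and apply the intermediate value theorem over the clipped box, with endpoints $D_1\in[B_1,\OPT-B_2]$. The paper instead starts from $y=\rho d$ and raises the surplus firm's misplaced workers' wages by exactly the surplus. Your version also treats the $\SW(X)=0$ and $B_i=0$ cases explicitly.
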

\begin{proof}
The upper bound follows from Proposition~\ref{prop:exact}. It remains to construct payments that achieve the welfare ratio. Let $B_i=V_i(X_i)$, $B=B_1+B_2$, and $\rho=B/\OPT$. Define the positive rematching gains
\begin{equation*}
G_1=\sum_{j\in X_2}\pos{a_{1j}-a_{2j}},\qquad
G_2=\sum_{j\in X_1}\pos{a_{2j}-a_{1j}}.
\end{equation*}
For each worker, the difference between the highest value and the current value is exactly the positive gain from the other firm. Hence
\begin{equation*}
\OPT=B+G_1+G_2.
\end{equation*}
Multiplying by $\rho=B/\OPT$ and rearranging gives
\begin{equation}\label{eq:two-balance}
(1-\rho)B=\rho(G_1+G_2).
\end{equation}
Start with proportional wages $y_j^0=\rho a_{\mu(j)j}$ and profits $x_i^0=(1-\rho)B_i$. Firm $1$'s total positive gap is $\rho G_1$, and firm $2$'s is $\rho G_2$. By~\eqref{eq:two-balance}, total firm profit equals total gap.

If $x_i^0\ge\rho G_i$ for both firms, Lemma~\ref{lem:deficit} already gives a $\rho$-core. Otherwise, exactly one firm has a profit surplus and the other has an equal deficit. Suppose firm $1$ has surplus $s=x_1^0-\rho G_1>0$. Firm $2$ then has an equal shortfall $s$. Its gap comes only from workers in $X_1$ with $a_{2j}>a_{1j}$. Increase these workers' wages gradually by a total amount $s$. Each unit of wage increase reduces firm $1$'s profit by one unit and reduces firm $2$'s gap on that worker by one unit. Stop when firm $1$'s surplus is exhausted. Because firm $2$'s total shortfall is exactly $s$, the increases can be chosen within the original gaps to remove the shortfall exactly. Higher wages do not create a gap for firm $1$ on its own workers, and with only two firms no third-party gap can appear. The adjusted profits cover both firms' gaps, so a $\rho$-core is supported. Therefore $\alpha(X)\ge\rho$, and the welfare upper bound gives equality.
\end{proof}

The special feature of two firms is that a worker has at most one competitor other than the current employer. If global wage re-optimization moves locally in a direction $z_j\downarrow$, at most one current-employer ratio falls and one competing-firm ratio rises. With three firms, the same wage direction may raise two external firms' ratios at once. Thus one relaxed bottleneck can correspond to several tightened bottlenecks. This explains why wage rebalancing can create stability pressure beyond welfare loss when $m\ge3$. Branching is the local mechanism behind this difficulty; it is not a required step after every matching move. The next example makes the separation strict.

\begin{proposition}[A strict welfare--stability gap with three firms]\label{prop:three-gap}
Let
\begin{equation}\label{eq:three-matrix}
A=\begin{pmatrix}
1&99/100&79/100\\
99/100&4/5&79/100\\
99/100&99/125&16/25
\end{pmatrix},\qquad X_i=\{i\}.
\end{equation}
Then
\begin{equation}\label{eq:three-gap}
\alpha(X)=\frac{44}{51}<\frac{\SW(X)}{\OPT}=\frac{122}{139}.
\end{equation}
\end{proposition}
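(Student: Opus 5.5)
The plan is to compute $\Phi(X)$ exactly and then apply Theorem~\ref{thm:identity}. The upper bound on $\Phi(X)$ will come from an explicit wage vector. The matching lower bound will come from a weighted sum of the firm inequalities, which is the same thing as a dual certificate for $(D_X)$ in Proposition~\ref{prop:primal-dual}.

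\textbf{Welfare ratio.} Reading rows as firms and columns as workers, we have $d=(1,4/5,16/25)$ and $M=(1,99/100,79/100)$. So $\SW(X)=61/25$ and $\OPT=139/50$, which gives $\SW(X)/\OPT=122/139$.

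\textbf{Upper bound on $\Phi$.} By Proposition~\ref{prop:clipping} we may take $z_1=1$, $z_2\in[4/5,99/100]$ and $z_3\in[16/25,79/100]$. In this region the positive parts simplify as follows.
\begin{itemize}
\item Firm 1 has $D_1=1+(99/100-z_2)+(79/100-z_3)$ and $B_1=1$.
\item Firm 2 has $D_2=z_2+(79/100-z_3)$ and $B_2=4/5$. Its term for worker 1 vanishes because $a_{21}<1=z_1$.
\item Firm 3 has $D_3=z_3$ and $B_3=16/25$. Its terms vanish because $a_{31}<1$ and $a_{32}=99/125<4/5\le z_2$.
\end{itemize}
Setting $R_1=R_2=R_3=\rho$ is a linear system. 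It gives $z_3=\tfrac{16}{25}\rho$, $z_2=\tfrac{36}{25}\rho-\tfrac{79}{100}$, and $\tfrac{357}{100}-\tfrac{52}{25}\rho=\rho$. Hence $\rho=357/308=51/44$. I would then verify that the resulting $z_2\approx0.879$ and $z_3\approx0.742$ lie in the clipping ranges, so every simplification above is legitimate. This gives $\Phi(X)\le 51/44$.

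\textbf{Lower bound on $\Phi$.} Take any $z\ge0$ and use only $\pos{t}\ge t$ and $\pos{t}\ge0$. This gives three inequalities:
\begin{itemize}
\item $D_1\ge z_1+\pos{1-z_1}+(99/100-z_2)+(79/100-z_3)\ge 1+99/100+79/100-z_2-z_3$;
\item $D_2\ge z_2+79/100-z_3$;
\item $D_3\ge z_3$.
\end{itemize}
The weights $(1,1,2)$ cancel all wage variables: $D_1+D_2+2D_3\ge 357/100$. On the other side, $D_i\le\rho(X,z)B_i$ gives $D_1+D_2+2D_3\le\rho(X,z)(1+4/5+32/25)=\rho(X,z)\cdot 308/100$. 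Therefore $\rho(X,z)\ge 51/44$ for every $z$, so $\Phi(X)\ge51/44$.

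Equivalently, $p\propto(1,1,2)$ together with the matching $q$ is feasible for $(D_X)$, with $q_{12},q_{13},q_{23}$ active and the own-worker entries $q_{11},q_{22},q_{33}$ balancing them. If a formal dual solution is preferred, I would write this out explicitly.

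\textbf{Conclusion.} Theorem~\ref{thm:identity} gives $\alpha(X)=44/51$. The strict inequality $44/51<122/139$ follows by cross-multiplication: $6116<6222$.

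The main obstacle is guessing the right tight set, namely which positive parts are active and which firms bind. Once the equalizing wages are found, the dual weights are read off by cancelling $z_2$ and $z_3$. Both certificates are then routine checks.
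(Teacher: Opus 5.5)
Your proof is correct and follows the same route as the paper: an explicit wage vector gives $\alpha(X)\ge 44/51$ (you take $z_1=1$ while the paper's wages correspond to $z_1=99/100$; both work), and firm weights $p\propto(1,1,2)$ give $\alpha(X)\le 44/51$, which you verify by a direct weak-duality sum instead of citing strong duality. One small correction to your parenthetical remark: the dual solution your weighted sum encodes has $q_{11}=q_{12}=q_{13}=q_{23}=p_1$ and $q_{22}=q_{33}=0$, because the capacities $p_2$ and $p_3$ of workers 2 and 3 are already used up; this is exactly the paper's certificate, so the own-worker entries $q_{22}$ and $q_{33}$ do not play the balancing role you describe.
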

\begin{proof}
Take $\alpha=44/51$ and wages
\begin{equation*}
(y_1,y_2,y_3)=\left(\frac{363}{425},\frac{967}{1275},\frac{16}{25}\right).
\end{equation*}
Local financing gives profits $x_1=62/425$, $x_2=53/1275$, and $x_3=0$. Direct calculation of $\pos{\alpha a_{ij}-y_j}$ shows that firm $1$'s total positive gap is exactly $62/425$, firm $2$'s is $53/1275$, and firm $3$ has no positive gap. Lemma~\ref{lem:deficit} therefore gives $\alpha(X)\ge44/51$.

For the upper bound, use a dual certificate for the fixed-matching LP. Set
\begin{equation*}
p=\left(\frac{100}{357},\frac{100}{357},\frac{200}{357}\right)
\end{equation*}
and let $q_{11}=q_{12}=q_{13}=q_{23}=100/357$, with all other $q_{ij}=0$. These variables satisfy the worker-capacity and firm upper-bound constraints in the dual, and the normalized blocking value is $1$. The dual objective is $44/51$. Strong duality gives $\alpha(X)\le44/51$, so equality holds. Finally, $\SW(X)=61/25$ and $\OPT=139/50$, yielding $\SW/\OPT=122/139$.
\end{proof}

\section{Global construction: from local financing risks to EF1--core guarantees}\label{sec:round}

Section~\ref{sec:design} identified the design objective. The goal is not to decrease $\Phi$ step by step along a repair path. Instead, we seek a simple construction whose terminal matching admits wage schemes that bound every firm's ratio $R_i$. The maximum-edge round algorithm uses only valuation rankings but produces two complementary bottleneck certificates. The first keeps wages high, eliminates all deviation benchmarks, and controls direct value loss. The second evaluates wages at current productivity and uses the round structure to control each firm's external rematching pressure $P_i/B_i$. These two certificates give firm-wise bounds on $R_i$ and together yield $\Gamma_m(\delta)$.

\begin{algorithm}[ht]
\caption{Global maximum-edge round algorithm}\label{alg:round}
\begin{algorithmic}[1]
\State Set $X_i\gets\varnothing$ for all firms and let $U\gets\W$ be the unassigned workers
\While{there exist $i\in\F$ and $j\in U$ with $a_{ij}>0$}
  \State Start a new round and set the active-firm set $S\gets\F$
  \While{there exist $i\in S$ and $j\in U$ with $a_{ij}>0$}
    \State Choose a maximum-value edge $(i^*,j^*)$ among all such edges
    \State Set $X_{i^*}\gets X_{i^*}\cup\{j^*\}$ and $U\gets U\setminus\{j^*\}$
    \State Set $S\gets S\setminus\{i^*\}$; firm $i^*$ leaves the current round
  \EndWhile
\EndWhile
\State Assign all-zero workers arbitrarily
\end{algorithmic}
\end{algorithm}

\begin{lemma}[Common first-round removal witness]\label{lem:common-witness}
Suppose firm $k$ receives positive-value workers $p_k^1,p_k^2,\ldots$ across rounds. If $p_k^1$ exists, then for every firm $i$,
\begin{equation}\label{eq:common-witness}
V_i(X_i)\ge V_i(X_k\setminus\{p_k^1\}).
\end{equation}
Hence Algorithm~\ref{alg:round} outputs an exact-EF1 matching.
\end{lemma}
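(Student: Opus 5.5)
The plan is a round-by-round charging argument of the standard round-robin type, adapted to the maximum-edge order. Fix firms $i$ and $k$, and write $X_k\setminus\{p_k^1\}=\{p_k^2,p_k^3,\ldots\}$, where $p_k^r$ is the positive-value worker firm $k$ receives in round $r$. All-zero workers contribute nothing to either side, so they can be ignored. I will match each $p_k^{r+1}$ (taken in round $r+1$) with the worker $p_i^r$ that firm $i$ takes in round $r$. The aim is to show $a_{ip_i^r}\ge a_{ip_k^{r+1}}$ for every $r\ge1$; summing over $r$ then gives \eqref{eq:common-witness}.

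The key inequality comes from the round structure. Consider round $r$.
\begin{enumerate}[label=(\roman*)]
\item The worker $p_k^{r+1}$ is still unassigned throughout round $r$, since it is assigned only in a later round.
\item Firm $i$ is active at the start of round $r$.
\item Suppose $a_{ip_k^{r+1}}>0$. Then firm $i$ cannot leave round $r$ empty-handed: the inner loop only stops when no active firm has a positive edge to an unassigned worker, and the edge $(i,p_k^{r+1})$ remains available. So $p_i^r$ exists.
\item When firm $i$ was selected in round $r$, the edge $(i,p_k^{r+1})$ was available. By the maximum-edge rule, the chosen edge satisfies $a_{ip_i^r}\ge a_{ip_k^{r+1}}$.
\end{enumerate}
If instead $a_{ip_k^{r+1}}=0$, the inequality is trivial, with $a_{ip_i^r}$ read as $0$ if $p_i^r$ does not exist. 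Summing over $r$ gives
\[
V_i(X_i)\ge\sum_r a_{ip_i^r}\ge\sum_{r\ge1}a_{ip_k^{r+1}}=V_i(X_k\setminus\{p_k^1\}).
\]
The case $i=k$ is immediate.

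To conclude exact EF1, I split on whether $X_k$ contains a positive-value worker.
\begin{itemize}
\item If $X_k$ has a positive-value worker, then $p_k^1$ exists and is the common removal witness $g$ required by Definition~\ref{def:fairness}.
\item If $X_k$ contains only workers assigned in the final step, those are all-zero workers, so $V_i(X_k\setminus\{g\})=0\le V_i(X_i)$ for any choice of $g$.
\end{itemize}

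The main obstacle is the bookkeeping in step (iii). The argument must use that firm $i$ stays in $S$ until it is itself chosen, and that the availability of $(i,p_k^{r+1})$ guarantees $i$ is chosen in round $r$ before that round ends. One subtle point also needs checking: $p_k^{r+1}$ might be taken in round $r+1$ but could not have been taken by another firm earlier in round $r$. This holds simply because $p_k^{r+1}$ is by definition still unassigned after round $r$.
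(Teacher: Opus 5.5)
Your proposal is correct and follows essentially the same argument as the paper. You pair each $p_k^{t+1}$ with $p_i^t$ and note that $p_k^{t+1}$ is still unassigned and $i$ still active when $i$ picks in round $t$, so the maximum-edge rule gives $a_{i,p_i^t}\ge a_{i,p_k^{t+1}}$; a missing $p_i^t$ forces $a_{i,p_k^{t+1}}=0$ because $i$ then has no positive edge to any remaining worker (your step (iii) is the contrapositive of the paper's version). The only point left implicit in both versions is that a firm's positive-value workers occupy consecutive rounds starting from round~1, which follows from the same end-of-round observation applied to firm $k$.
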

\begin{proof}
Let $p_i^t$ denote the worker received by firm $i$ in round $t$, if such a positive-value worker exists. Consider worker $p_k^{t+1}$. When firm $i$ chooses $p_i^t$ in round $t$, worker $p_k^{t+1}$ is still unassigned because it is not assigned to $k$ until the next round; firm $i$ is also still active immediately before its own choice. The maximum-edge rule therefore gives
\begin{equation*}
a_{i,p_i^t}\ge a_{i,p_k^{t+1}}.
\end{equation*}
If $p_i^t$ does not exist, then firm $i$ remains active until the end of the round and has no positive edge to any remaining worker. Hence $a_{i,p_k^{t+1}}=0$ for every worker assigned only in a later round, and the same inequality still holds. Summing over $t\ge1$, the left-hand side is at most $V_i(X_i)$, while the right-hand side covers all positive-value workers in $X_k$ except $p_k^1$. All-zero workers have zero value to every firm. Thus~\eqref{eq:common-witness} holds and provides a common EF1 removal witness.
\end{proof}

\begin{lemma}[Value retention and a direct stability certificate]\label{lem:positive-edge}
Suppose every worker with $M_j>0$ is assigned along a positive-value edge in matching $X$, and $\delta(A)\ge t$. Then $d_j\ge tM_j$ for every worker, so $M_j-d_j\le(1-t)M_j$. Moreover, $X$ supports a $t$-core and satisfies $\SW(X)/\OPT\ge t$.
\end{lemma}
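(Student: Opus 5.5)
The plan is to prove the three claims in order: the pointwise value retention, then the $t$-core via an explicit wage scheme, and finally the welfare bound as a corollary of the core bound. For value retention, take a worker $j$. If $M_j=0$, then $d_j=0=tM_j$ and there is nothing to show. If $M_j>0$, the hypothesis says $j$ sits on a positive edge, so $a_{\mu(j)j}>0$. The definition of $\delta(A)$ in~\eqref{eq:delta} is a minimum over positive edges, so $d_j/M_j\ge\delta(A)\ge t$. Hence $d_j\ge tM_j$, and rearranging gives $M_j-d_j\le(1-t)M_j$. The all-zero market convention $\delta=1$ is harmless here, since every $M_j=0$ in that case.

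For the $t$-core, I will use the high-wage certificate already used in Proposition~\ref{prop:first-step}, namely normalized wages $z_j=M_j$. Under these wages, $a_{ij}\le M_j$ for every $i,j$, so every term $\pos{a_{ij}-M_j}$ vanishes and $G_i(M)=0$ for all firms. Therefore
\[
D_i(X,M)=\sum_{j\in X_i}M_j\le\frac1t\sum_{j\in X_i}d_j=\frac{B_i}{t},
\]
using the first part worker by worker. This gives $R_i(X,M)\le1/t$ whenever $B_i>0$. If $B_i=0$, then every $j\in X_i$ has $d_j=0$, hence $M_j=0$ (assuming $t>0$), so $D_i=0$ and $R_i=0$ by convention. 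Theorem~\ref{thm:identity} then gives $\Phi(X)\le1/t$, i.e.\ $\alpha(X)\ge t$. Equivalently, the actual payments $y_j=tM_j\le d_j$ and $x_i=B_i-t\sum_{X_i}M_j\ge0$ meet Lemma~\ref{lem:deficit} directly, because $\pos{ta_{ij}-tM_j}=0$.

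The welfare bound then follows either from Proposition~\ref{prop:exact}, since $\SW(X)/\OPT\ge\alpha(X)\ge t$, or directly by summing $d_j\ge tM_j$ over all workers. If $\OPT=0$, the ratio is read as $1$ and the claim is trivial.

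The only delicate point is the degenerate case $t=0$. There a $0$-core is trivially supported (take $y=0$ and $x_i=B_i$), and the argument above, which divides by $t$, should be bypassed. I will state that case separately rather than rely on the zero-output conventions of $R_i$.
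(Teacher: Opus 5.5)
Your proposal is correct and is essentially the paper's proof: the same certificate $y_j=tM_j$ (normalized $z=M$) with profits $x_i=B_i-t\sum_{j\in X_i}M_j\ge0$, and the welfare bound from summing $d_j\ge tM_j$. The only difference is presentational: the paper checks the coalition inequalities directly via $v(I,T)\le\sum_{j\in T}M_j$, whereas you route the check through Lemma~\ref{lem:deficit} or Theorem~\ref{thm:identity}; the direct check never divides by $t$, so it covers $t=0$ without a separate case.
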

\begin{proof}
Set each worker's wage to $y_j=tM_j$. Since every current edge is positive, the definition of $\delta(A)$ gives $d_j\ge tM_j$. Define firm profits by
\begin{equation*}
x_i=B_i-t\sum_{j\in X_i}M_j\ge0.
\end{equation*}
For any coalition $I,T$, we have $v(I,T)\le\sum_{j\in T}M_j$. Hence
\begin{equation*}
x(I)+y(T)\ge y(T)=t\sum_{j\in T}M_j\ge t\,v(I,T),
\end{equation*}
which gives a $t$-core. Summing $d_j\ge tM_j$ over workers yields $\SW(X)\ge t\OPT$.
\end{proof}

To strengthen the finite-$m$ stability guarantee using the round structure, define firm $i$'s total positive rematching pressure by
\begin{equation}\label{eq:Pi}
P_i(X)=\sum_{j\in\W}\pos{a_{ij}-d_j},\qquad
\kappa(X)=\max_{i:B_i>0}\frac{P_i(X)}{B_i}.
\end{equation}
For zero-output firms, we use the same infinite-ratio convention as before.

\begin{lemma}[Proportional-wage bridge]\label{lem:kappa-bridge}
Every fixed matching $X$ supports a $1/(1+\kappa(X))$-core.
\end{lemma}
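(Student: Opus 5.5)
We will use the financing--stability identity (Theorem~\ref{thm:identity}) and exhibit one explicit normalized wage vector. The name of the lemma suggests the choice: set $z_j=d_j$, wages equal to current productivity. This lies at the lower end of the clipping region of Proposition~\ref{prop:clipping}, so no clipping is needed. The aim is to show $R_i(X,d)\le 1+\kappa(X)$ for every firm. The identity then gives $\Phi(X)\le 1+\kappa(X)$, and hence $\alpha(X)\ge 1/(1+\kappa(X))$.

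\textbf{Step 1 (wage term).} At $z=d$ the current-wage term of firm $i$ is
\[
\sum_{j\in X_i}z_j=\sum_{j\in X_i}a_{ij}=B_i .
\]

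\textbf{Step 2 (deviation term).} At $z=d$ we have
\[
G_i(d)=\sum_{j\in\W}\pos{a_{ij}-d_j}=P_i(X)
\]
by the definition in~\eqref{eq:Pi}. The terms with $j\in X_i$ vanish, as in Corollary~\ref{cor:own-worker-zero}. So $D_i(X,d)=B_i+P_i(X)$, and for $B_i>0$,
\[
R_i(X,d)=1+\frac{P_i(X)}{B_i}\le 1+\kappa(X).
\]

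\textbf{Step 3 (conventions).} If $\kappa(X)=+\infty$, the claim is the trivial $0$-core: take $\alpha=0$, zero wages and full profits. So assume $\kappa(X)<\infty$. A firm with $B_i=0$ then has $P_i(X)=0$ under the stated infinite-ratio convention, and it has $D_i=0$ because its own workers have $d_j=0$. The convention $R_i=0$ then applies, so such firms impose no constraint. Taking the maximum over $i$ gives $\rho(X,d)\le 1+\kappa(X)$.

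\textbf{Direct verification (alternative).} One can instead check the core conditions via Lemma~\ref{lem:deficit}. Set $\alpha=1/(1+\kappa)$, $y_j=\alpha d_j$ and $x_i=(1-\alpha)B_i=\alpha\kappa B_i$. The required inequality is
\[
x_i\ge\sum_j\pos{\alpha a_{ij}-\alpha d_j}=\alpha P_i ,
\]
which holds because $\kappa B_i\ge P_i$.

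\textbf{Main obstacle.} The only delicate point is the boundary cases: firms with zero output, and the interpretation of $\kappa=\infty$. The core inequality itself is immediate once the wages are chosen.
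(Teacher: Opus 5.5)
Your proof is correct and follows the paper's argument. The paper sets $y_j=\alpha d_j$ and $x_i=(1-\alpha)B_i$, checks the single-firm condition of Lemma~\ref{lem:deficit}, and then notes the equivalent financing form $D_i(X,d)=B_i+P_i(X)\le(1+\kappa)B_i$, which are your two arguments in reverse order; your explicit treatment of zero-output firms and of $\kappa(X)=+\infty$ through the stated convention is a small addition that the paper leaves implicit.
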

\begin{proof}
Let $\alpha=1/(1+\kappa)$, set $y_j=\alpha d_j$, and set $x_i=(1-\alpha)B_i$. Firm $i$'s total positive gap is
\begin{equation*}
\sum_j\pos{\alpha a_{ij}-\alpha d_j}=\alpha P_i(X)\le\alpha\kappa B_i.
\end{equation*}
Since $1-\alpha=\alpha\kappa$, the profit exactly covers the gap. Lemma~\ref{lem:deficit} gives the claim. Equivalently, in the financing representation one may set $z_j=d_j$ and obtain $D_i(X,d)=B_i+P_i(X)\le(1+\kappa)B_i$, hence $\Phi(X)\le1+\kappa$.
\end{proof}

\begin{lemma}[Incoming pressure under the round algorithm]\label{lem:round-pressure}
Let $X$ be the output of Algorithm~\ref{alg:round} and let $t=\delta(A)$. Then for every firm $i$,
\begin{equation}\label{eq:round-pressure}
P_i(X)\le(m-1)(1-t)B_i.
\end{equation}
\end{lemma}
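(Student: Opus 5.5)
The plan is to bound $P_i(X)=\sum_j\pos{a_{ij}-d_j}$ worker by worker, charging each worker with a positive gap to a specific worker $p_i^s$ that firm $i$ itself received. Two pointwise facts come first. Workers in $X_i$ contribute nothing, since $a_{ij}=d_j$. All-zero workers contribute nothing either. Every other worker $j$ is assigned along a positive edge, so $d_j\ge tM_j\ge ta_{ij}$ by Lemma~\ref{lem:positive-edge}, and hence
\begin{equation*}
\pos{a_{ij}-d_j}\le(1-t)a_{ij}.
\end{equation*}
It therefore suffices to show that the sum of $a_{ij}$ over the workers with a positive gap for $i$ is at most $(m-1)\sum_s a_{i,p_i^s}=(m-1)B_i$.

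Next I will classify each worker $j\notin X_i$ with $M_j>0$ by when it was assigned relative to firm $i$'s picks. There are three cases.

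\emph{(a) Assigned in some round $s$ before firm $i$ picked in that round.} At that moment $i$ was still active and $j$ was still available. The maximum-edge rule then gives $d_j=a_{kj}\ge a_{ij}$, so the gap is zero. This covers all workers assigned in round $1$ before $i$'s pick, and all workers assigned in round $s+1$ before $i$'s pick.

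\emph{(b) Assigned in round $s$ after $i$ picked $p_i^s$.} Since $j$ was unassigned when $i$ chose, the maximum-edge rule gives $a_{ij}\le a_{i,p_i^s}$. Each other firm receives at most one worker per round, so at most $m-1$ such workers exist per round $s$.

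\emph{(c) Assigned in or after a round in which $i$ makes no pick.} Then $i$ stayed active with no positive edge to any remaining worker, so $a_{ij}=0$. Every later worker likewise has $a_{ij}=0$.

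Summing over rounds, only case (b) contributes. This gives
\begin{equation*}
P_i(X)\le(1-t)\sum_s(m-1)a_{i,p_i^s}=(m-1)(1-t)B_i.
\end{equation*}

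The main point to verify carefully is case (a): the claim that a worker assigned before $i$'s pick in a round has $d_j\ge a_{ij}$. Both $i$ and $j$ were available at that step, so the selected edge $(k,j)$ had value at least $a_{ij}$; if $a_{ij}=0$ the conclusion is trivial. The second subtle point is the boundary between rounds. Workers taken in round $s+1$ before $i$'s next pick fall under (a), not (b). This is what keeps the count per round at $m-1$ rather than $2(m-1)$. I expect this charging argument, rather than the arithmetic, to be the step needing the most care.
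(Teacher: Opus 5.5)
Your proof is correct and follows the paper's round-by-round charging argument essentially verbatim: workers picked while $i$ is still active have zero gap, and each round in which $i$ picks contributes at most $m-1$ post-exit workers, each with gap at most $(1-t)a_{i,p_i^s}$. One small slip is in case (c): a worker assigned \emph{during} a round in which $i$ makes no pick need not have $a_{ij}=0$, but its gap still vanishes by your case-(a) argument because $i$ stays active for the whole round, and only workers in strictly later rounds have $a_{ij}=0$.
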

\begin{proof}
Fix firm $i$ and analyze positive gains entering $i$ round by round. If $i$ does not receive a positive-value worker in a round, it remains active throughout that round. Whenever another firm selects worker $j$, the chosen edge is a maximum positive edge among active firms and unassigned workers, so the current employer value satisfies $d_j\ge a_{ij}$. Thus worker $j$ creates no positive rematching gain for firm $i$.

Now suppose $i$ receives a worker of value $b$ in a round and then exits. Workers selected before $i$ exits again create no positive gain for $i$. After $i$ exits, at most $m-1$ other firms can each select one worker in that round. Any such worker $j$ was still unassigned when $i$ selected its own worker, so the maximum-edge rule gives $a_{ij}\le b$. Since the selected edge for $j$ is positive, $d_j\ge tM_j\ge t a_{ij}$. Hence
\begin{equation*}
\pos{a_{ij}-d_j}\le(1-t)a_{ij}\le(1-t)b.
\end{equation*}
The round contributes at most $(m-1)(1-t)b$ to $P_i$. Summing over the rounds in which firm $i$ receives a worker, and noting that the corresponding $b$ values sum to $B_i$, proves~\eqref{eq:round-pressure}.
\end{proof}

\begin{theorem}[Two certificates from local bottlenecks to global stability]\label{thm:local-global-bridge}
Let $X$ be the output of Algorithm~\ref{alg:round} and let $t=\delta(A)>0$. There are two explicit normalized wage vectors under which every firm satisfies
\begin{equation}\label{eq:two-funding-certificates}
\begin{aligned}
R_i(X,M)&\le \frac{1}{t} &&\forall i,\\
R_i(X,d)&\le m-(m-1)t &&\forall i.
\end{aligned}
\end{equation}
Consequently,
\begin{equation}\label{eq:Phi-round-bridge}
\Phi(X)\le
\min\left\{\frac1t,\,m-(m-1)t\right\},
\qquad
\alpha(X)\ge
\max\left\{t,\frac1{m-(m-1)t}\right\}.
\end{equation}
The first certificate controls direct mismatch after workers leave highest-value edges. The second uses the low-wage benchmark $z=d$ to control the external rematching pressure $P_i/B_i$. This pressure is related to the local branching and congestion mechanisms, but the certificate does not require wages to move gradually from $M$ to $d$.
\end{theorem}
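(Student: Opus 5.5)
The proof evaluates the financing ratios $R_i(X,z)$ at two explicit wage vectors, $z=M$ and $z=d$, and then passes from $\rho$ to $\Phi$ and $\alpha$ via Theorem~\ref{thm:identity}. Throughout, we use that Algorithm~\ref{alg:round} assigns every worker with $M_j>0$ along a positive edge. Such a worker is assignable while some firm still has a positive edge to it, so the loop cannot terminate with it unassigned. All-zero workers contribute nothing to any $B_i$, $D_i$, or $G_i$.

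\textbf{First certificate ($z=M$).} Since $a_{ij}\le M_j$ for all $i,j$, we have $G_i(M)=\sum_j\pos{a_{ij}-M_j}=0$. Hence $D_i(X,M)=\sum_{j\in X_i}M_j$. By Lemma~\ref{lem:positive-edge} (with $t=\delta(A)$), every current worker satisfies $d_j\ge tM_j$, i.e.\ $M_j\le d_j/t$. Summing over $j\in X_i$ gives $D_i(X,M)\le B_i/t$, so $R_i(X,M)\le 1/t$.

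Zero-output firms are handled by the convention on $R_i$. If $B_i=0$, every worker of $i$ has $d_j=0$. Such workers are all-zero workers, because positive-$M_j$ workers sit on positive edges. Then $D_i=\sum_{j\in X_i}M_j=0$ and $R_i=0$.

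\textbf{Second certificate ($z=d$).} Here $D_i(X,d)=\sum_{j\in X_i}d_j+\sum_j\pos{a_{ij}-d_j}=B_i+P_i(X)$, as already noted in Lemma~\ref{lem:kappa-bridge}. Lemma~\ref{lem:round-pressure} gives $P_i(X)\le(m-1)(1-t)B_i$. Therefore
\[
D_i(X,d)\le\bigl(1+(m-1)(1-t)\bigr)B_i=\bigl(m-(m-1)t\bigr)B_i,
\]
which is the claimed bound on $R_i(X,d)$.

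\textbf{Zero-output firms under $z=d$.} This is the only delicate point. We must show $P_i=0$ whenever $B_i=0$, and the bound $P_i\le(m-1)(1-t)B_i=0$ from Lemma~\ref{lem:round-pressure} delivers exactly this. The reason is visible in that lemma's proof: a firm that never receives a positive worker stays active in every round, so every worker taken from under it satisfies $d_j\ge a_{ij}$. Hence $D_i=0$ and $R_i=0$ by convention.

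\textbf{Conclusion.} Each certificate bounds $\rho(X,z)=\max_iR_i(X,z)$ at a specific $z\ge0$. Since $\Phi(X)=\min_z\rho(X,z)$ is at most either value, we get $\Phi(X)\le\min\{1/t,\,m-(m-1)t\}$. Theorem~\ref{thm:identity} then gives $\alpha(X)=1/\Phi(X)\ge\max\{t,\,1/(m-(m-1)t)\}$. The main obstacle is not technical, since Lemmas~\ref{lem:positive-edge} and~\ref{lem:round-pressure} carry the weight. The step requiring the most care is the degenerate bookkeeping: the all-zero workers and zero-output firms, and making sure the $\infty$ convention is never triggered.
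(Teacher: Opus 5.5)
Your proof is correct and follows the paper's argument: evaluate $R_i$ at $z=M$, where $G_i(M)=0$ and $d_j\ge tM_j$ give $R_i\le 1/t$, and at $z=d$, where $R_i=1+P_i/B_i$ and Lemma~\ref{lem:round-pressure} give the second bound, then pass to $\Phi$ and $\alpha$ via Theorem~\ref{thm:identity}. Your extra handling of all-zero workers and zero-output firms, which the paper leaves implicit, is accurate and shows that the $+\infty$ convention for $R_i$ is never triggered.
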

\begin{proof}
First set $z_j=M_j$. Since $M_j\ge a_{ij}$ for every firm, $G_i(M)=0$ and
\begin{equation*}
D_i(X,M)=\sum_{j\in X_i}M_j.
\end{equation*}
The algorithm assigns workers only along positive edges, so $d_j\ge tM_j$ for every $j\in X_i$. Hence
\begin{equation*}
R_i(X,M)
=
\frac{\sum_{j\in X_i}M_j}{B_i}
\le\frac1t
\qquad\forall i.
\end{equation*}
This is the high-wage extreme: deviation benchmarks are zero, and the only issue is whether the firm's output can support the market wage requirements of its current workers. The direct mismatch $M_j-d_j$ is the worker-level source of this pressure.

Next set $z_j=d_j$. Current normalized wages then equal current output,
\begin{equation*}
\sum_{j\in X_i}d_j=B_i.
\end{equation*}
Therefore
\begin{equation}\label{eq:R-pressure}
R_i(X,d)
=
\frac{B_i+P_i(X)}{B_i}
=
1+\frac{P_i(X)}{B_i}.
\end{equation}
By Lemma~\ref{lem:round-pressure},
\begin{equation*}
P_i(X)\le(m-1)(1-t)B_i,
\end{equation*}
so
\begin{equation*}
R_i(X,d)\le1+(m-1)(1-t)=m-(m-1)t
\qquad\forall i.
\end{equation*}
This is a direct evaluation at the low-wage benchmark $z=d$: current wage requirements equal production, and all additional stability pressure appears as external rematching pressure $P_i$. The algorithmic round structure controls $P_i/B_i$; it does not require an actual wage path from $M$ to $d$.

Since $\Phi(X)$ re-optimizes over all wages, it is no larger than the worst-firm ratio under either explicit wage vector. Taking the better of the two certificates yields the bound on $\Phi(X)$, and $\alpha(X)=1/\Phi(X)$ gives~\eqref{eq:Phi-round-bridge}.
\end{proof}

Equation~\eqref{eq:Phi-round-bridge} gives a direct bottleneck interpretation of the function $\Gamma_m(t)$:
\begin{equation}\label{eq:Gamma-interpretation}
\Gamma_m(t)=
\max\left\{
\underbrace{t}_{\text{direct-mismatch certificate at high wages}},\,
\underbrace{\frac1{m-(m-1)t}}_{\text{propagated-pressure certificate at low wages}}
\right\}.
\end{equation}
The algorithm does not make $\Phi$ decrease during construction. It guarantees that, at the terminal matching, at least two explicit wage schemes bound all firm ratios at the stated levels. The optimal wage vector may rebalance pressure further and give an even larger stability rate.

\begin{theorem}[Joint guarantee of EF1 and approximate stability]\label{thm:round-guarantee}
Algorithm~\ref{alg:round} outputs an exact-EF1 matching $X$. Let $t=\delta(A)$ and define
\begin{equation}\label{eq:Gamma-u}
\Gamma_m(t)=\max\left\{t,\frac{1}{m-(m-1)t}\right\},\qquad
u_m(t)=t+\frac{1-t}{m}.
\end{equation}
Then
\begin{equation}\label{eq:round-main}
\alpha(X)\ge\Gamma_m(t),\qquad \frac{\SW(X)}{\OPT}\ge u_m(t).
\end{equation}
\end{theorem}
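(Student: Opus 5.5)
The theorem has three parts: EF1, the stability bound, and the welfare bound. The first two are already available. EF1 is exactly Lemma~\ref{lem:common-witness}. The bound $\alpha(X)\ge\Gamma_m(t)$ is the conclusion \eqref{eq:Phi-round-bridge} of Theorem~\ref{thm:local-global-bridge} when $t>0$. When $t=0$, $\Gamma_m(0)=1/m$, and I would get this from Lemma~\ref{lem:kappa-bridge} together with Lemma~\ref{lem:round-pressure}. Lemma~\ref{lem:round-pressure} gives $P_i\le(m-1)B_i$, hence $\kappa\le m-1$ and $\alpha\ge1/m$. For zero-output firms I would check the convention separately: if $B_i=0$, then by the round argument $i$ never received a positive worker. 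It stayed active in every round, so every positive worker $j$ satisfies $d_j\ge a_{ij}$ and $P_i=0$. The all-zero market is trivial. The real work is therefore the welfare bound $\SW(X)/\OPT\ge t+(1-t)/m$.

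\textbf{Welfare bound.} I would compare $\OPT$ with $\SW$ through rematching gains. For each worker $j$, let $i_j$ be a firm attaining $M_j$. Then
\[
M_j-d_j=\pos{a_{i_jj}-d_j}\le(1-t)M_j .
\]
I would split the workers by the round structure. Fix a firm $i$ and a round in which $i$ takes a worker of value $b$. Only the at most $m-1$ workers taken by other firms after $i$ exits that round can have $a_{ij}>d_j$. Each such worker has $a_{ij}\le b$, and by the pressure argument its gain to $i$ is at most $(1-t)a_{ij}$.

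\textbf{Per-round charging.} The aim is $\OPT-\SW\le(1-t)(m-1)/m\cdot\OPT$, which is equivalent to the claim. In one round, let the selected workers be $w_1,\dots,w_s$ with values $b_1\ge\dots\ge b_s$ in selection order (decreasing by the max-edge rule), and suppose $w_r$ has top firm $i$. There are two cases. If $i$ was still active at step $r$, then $b_r\ge M_{w_r}$, so $w_r$ loses nothing. Otherwise $i$ had already selected at some earlier step $q<r$, and $M_{w_r}\le b_q$. So a lossy worker's $M$ is bounded by an earlier selection value in the same round. Each lossy worker has loss at most $(1-t)M_{w_r}$ and gain $d_{w_r}\ge tM_{w_r}$. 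For each round I would then prove
\[
\sum_{r} M_{w_r}\le \frac{m}{1+(m-1)t}\sum_r b_r .
\]
The worst case I expect is one firm's worker of value $1$ followed by $m-1$ workers with $M=1$ and $b=t$. That gives $m$ versus $1+(m-1)t$, and $\SW/\OPT=(1+(m-1)t)/m=u_m(t)$ exactly.

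\textbf{Main obstacle.} Proving this per-round inequality cleanly is the hard step. I would treat it as a small LP in the values $M_{w_r}$. Using $M_{w_r}\le\min\{b_r/t,\ b_1\}$ and $b_r\le b_1$, the ratio $\sum_r M_{w_r}/\sum_r b_r$ is maximized by making $b_1$ the only large value and pushing every other $b_r$ down to $tb_1$. Writing $\sum_r M_{w_r}\le b_1+\sum_{r\ge2}\min\{b_r/t,b_1\}$ and using that $s\le m$, a short convexity/monotonicity argument gives the bound $m/(1+(m-1)t)$. Summing over rounds and adding the zero-value workers then finishes the proof.
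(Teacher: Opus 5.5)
Your proposal is correct and follows the paper's proof: EF1 comes from Lemma~\ref{lem:common-witness}, the core bound from Theorem~\ref{thm:local-global-bridge}, and welfare from grouping by rounds using exactly the two facts $b_r\ge tM_{w_r}$ and $M_{w_r}\le M_{w_1}=b_1$ (your separate $t=0$ case is vacuous, since $\delta(A)>0$ for every instance). The ``main obstacle'' you anticipate goes away if you bound in the other direction, as the paper does, $\sum_r b_r\ge M_{w_1}+t\sum_{r\ge2}M_{w_r}=t\sum_rM_{w_r}+(1-t)M_{w_1}\ge\bigl(t+\tfrac{1-t}{s}\bigr)\sum_rM_{w_r}$, so no LP or convexity step is needed.
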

\begin{proof}
Exact EF1 follows from Lemma~\ref{lem:common-witness}. The stability guarantee follows directly from Theorem~\ref{thm:local-global-bridge}: the high-wage certificate controls direct mismatch, while the low-wage certificate controls external rematching pressure $P_i/B_i$. Taking the better certificate gives $\alpha(X)\ge\Gamma_m(t)$.

For welfare, group assignments by round. Suppose a round assigns $r\le m$ positive-value workers. Let their market maxima be $M_1,\ldots,M_r$, where worker $1$ is chosen first. The first edge is a global maximum positive edge among the remaining workers, so its current value equals $M_1$, and $M_1\ge M_s$ for all other workers in the round. Every other selected positive edge has current value at least $t$ times the corresponding market maximum. Hence round welfare is at least
\begin{align*}
M_1+t\sum_{s=2}^rM_s
&=t\sum_{s=1}^rM_s+(1-t)M_1\\
&\ge\left(t+\frac{1-t}{r}\right)\sum_{s=1}^rM_s\\
&\ge\left(t+\frac{1-t}{m}\right)\sum_{s=1}^rM_s.
\end{align*}
Summing over rounds gives $\SW(X)\ge u_m(t)\OPT$. All-zero workers have no effect.
\end{proof}

\subsection{Safe-round robustness, ex-ante uniqueness, and conflict-triggered secondary stability resolution}\label{sec:safe-confluence}

The maximum-edge round algorithm gives a simple worst-case construction. We now identify the local structure that actually supports its guarantees and answer a more practical question: when does a safe choice really matter? The answer has four parts. First, maximum-edge selection can be relaxed to mutual-top safe edges without weakening the general finite-$m$ guarantees. Second, safe candidates with disjoint endpoints commute and differ only in execution order. Third, strict row and column rankings provide an ex-ante uniqueness condition; unrelated equal edge values are harmless. Fourth, only candidates that share a firm or a worker can create a genuine personnel choice, and only then do we use $\Phi$ as a secondary stability objective.

Given the current uncommitted worker set $U$ and active firm set $S$ within a round, define the mutual-top safe-edge set
\begin{equation}\label{eq:safe-edge-set}
\mathcal S(U,S)=
\left\{
(i,j)\in S\times U:
\begin{array}{l}
a_{ij}>0,\\
a_{ij}=\max_{h\in U}a_{ih},\\
a_{ij}=\max_{r\in S}a_{rj}
\end{array}
\right\}.
\end{equation}
Also define the set of current global maximum positive edges
\begin{equation}\label{eq:global-edge-set}
\mathcal G(U,S)
=
\argmaxop_{\substack{r\in S,\ h\in U\\a_{rh}>0}}a_{rh}.
\end{equation}

\begin{lemma}[Existence of a safe edge]\label{lem:safe-existence}
Whenever $S\times U$ contains a positive edge,
\begin{equation}\label{eq:global-safe-inclusion}
\mathcal G(U,S)\subseteq\mathcal S(U,S),
\end{equation}
and therefore $\mathcal S(U,S)\neq\varnothing$.
\end{lemma}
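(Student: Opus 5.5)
The plan is to verify the inclusion directly from the definitions. Since $S\times U$ contains a positive edge, the set $\mathcal G(U,S)$ of positive edges of maximum value among active firms and uncommitted workers is nonempty. This is a maximum over a finite nonempty set, since $\F$ and $\W$ are finite. So it suffices to show that an arbitrary $(i,j)\in\mathcal G(U,S)$ satisfies the three conditions in~\eqref{eq:safe-edge-set}. Nonemptiness of $\mathcal S(U,S)$ then follows immediately from the inclusion.

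Fix $(i,j)\in\mathcal G(U,S)$ and write $c=a_{ij}$.

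\textbf{Positivity.} We have $c>0$ by the definition of $\mathcal G(U,S)$.

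\textbf{Row-top condition.} Let $h\in U$. There are two cases:
\begin{enumerate}[label=(\alph*)]
\item If $a_{ih}>0$, then $(i,h)$ is a positive edge in $S\times U$, so $a_{ih}\le c$ by maximality of $c$.
\item If $a_{ih}=0$, then $a_{ih}<c$ because $c>0$.
\end{enumerate}
Hence $a_{ij}=\max_{h\in U}a_{ih}$, and the maximum is attained since $j\in U$.

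\textbf{Column-top condition.} Let $r\in S$. The same two cases apply:
\begin{enumerate}[label=(\alph*)]
\item If $a_{rj}>0$, then $(r,j)$ is a positive edge in $S\times U$, so $a_{rj}\le c$.
\item If $a_{rj}=0$, then $a_{rj}<c$.
\end{enumerate}
Hence $a_{ij}=\max_{r\in S}a_{rj}$, attained at $r=i\in S$.

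The only subtle point is that $\mathcal G(U,S)$ is a maximum over \emph{positive} edges only, whereas the row and column maxima in $\mathcal S(U,S)$ range over all of $U$ and $S$, including zero entries. The case split above handles this, and it uses $c>0$ in an essential way. Apart from that, the argument is immediate and I expect no real obstacle.
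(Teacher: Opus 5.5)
Your proof is correct and follows the same direct route as the paper: take any edge of $\mathcal G(U,S)$ and note that maximality among the positive edges of $S\times U$ forces both the row-top and the active-column-top conditions. Your case split on zero entries, which uses $a_{ih}\ge 0$ and $c>0$, makes explicit a step that the paper leaves implicit.
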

\begin{proof}
Take $(i,j)\in\mathcal G(U,S)$. Since it attains the maximum positive value over all of $S\times U$, it is in particular a maximum among firm $i$'s edges to workers in $U$, and a maximum among active firms' edges to worker $j$. Thus it satisfies both the row-top and active-column-top conditions in~\eqref{eq:safe-edge-set}.
\end{proof}

It is useful to state the relaxed rule as a separate algorithm. Algorithm~\ref{alg:safe-round-optional} has two layers. The safe set determines which edges are admissible. An optional stability resolver chooses among safe edges only when a genuine endpoint conflict occurs. If the secondary objective is not used, any admissible safe edge may be selected.

\begin{algorithm}[ht]
\caption{Safe-round algorithm with an optional stability conflict resolver}\label{alg:safe-round-optional}
\begin{algorithmic}[1]
\State Set $X_i\gets\varnothing$ for all firms and let $U\gets\W$ be the uncommitted workers
\While{there exist $i\in\F$ and $j\in U$ with $a_{ij}>0$}
  \State Start a new round; set $S\gets\F$ and $\mathsf{first}\gets\text{true}$
  \While{there exist $i\in S$ and $j\in U$ with $a_{ij}>0$}
    \If{$\mathsf{first}$}
      \State $\mathcal C_t\gets\mathcal G(U,S)$
    \Else
      \State $\mathcal C_t\gets\mathcal S(U,S)$
    \EndIf
    \State Check whether candidates in $\mathcal C_t$ share a firm or a worker
    \If{the secondary stability objective is enabled and a nontrivial conflict component $\mathcal K_t$ exists}
      \State $e_t\gets\Call{StabilityConflictResolver}{X,U,\mathcal K_t}$
    \Else
      \State Choose any $e_t=(i,j)\in\mathcal C_t$
    \EndIf
    \State Commit $(i,j)$: $X_i\gets X_i\cup\{j\}$, $U\gets U\setminus\{j\}$, $S\gets S\setminus\{i\}$
    \State $\mathsf{first}\gets\text{false}$
  \EndWhile
\EndWhile
\State Assign all-zero workers by any fixed rule
\end{algorithmic}
\end{algorithm}

\begin{theorem}[Robustness of mutual-top safe rounds]\label{thm:safe-round-family}
Every execution of Algorithm~\ref{alg:safe-round-optional}, with or without the secondary stability resolver, outputs an exact-EF1 matching $X$. Let $t=\delta(A)$. Then
\begin{equation}\label{eq:safe-pressure}
P_i(X)\le(m-1)(1-t)B_i\qquad\forall i.
\end{equation}
Consequently,
\begin{equation}\label{eq:safe-core}
\alpha(X)\ge
\Gamma_m(t)
=
\max\left\{
t,\frac1{m-(m-1)t}
\right\}.
\end{equation}
Because the first edge of every round is still a current global maximum positive edge, we also have
\begin{equation}\label{eq:safe-welfare}
\frac{\SW(X)}{\OPT}
\ge
u_m(t)
=
t+\frac{1-t}{m}.
\end{equation}
\end{theorem}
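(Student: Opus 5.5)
The plan is to rerun the three arguments behind the maximum-edge algorithm (Lemma~\ref{lem:common-witness}, Lemma~\ref{lem:round-pressure}, and the welfare part of Theorem~\ref{thm:round-guarantee}). In each, the maximum-edge rule should only be used through two local consequences of the chosen edge $(i^*,j^*)$. The first is \emph{row-top}: $a_{i^*j^*}\ge a_{i^*h}$ for every still-uncommitted $h$. The second is \emph{active-column-top}: $a_{i^*j^*}\ge a_{rj^*}$ for every firm $r$ still in $S$. Every edge in $\mathcal S(U,S)$ has both properties by definition~\eqref{eq:safe-edge-set}. Edges selected by the resolver or by the first-step rule lie in $\mathcal G\subseteq\mathcal S$ by Lemma~\ref{lem:safe-existence}, or in $\mathcal S$. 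So every committed edge in any execution is mutual-top safe, and the resolver only chooses among admissible edges, which makes it irrelevant to the bounds. Once the three arguments are checked to need nothing more, \eqref{eq:safe-core} follows by feeding \eqref{eq:safe-pressure} and value retention into the two certificates of Theorem~\ref{thm:local-global-bridge}. Those certificates use only the facts that edges are positive, which gives $R_i(X,M)\le 1/t$, and the pressure bound, which gives $R_i(X,d)\le m-(m-1)t$.

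\textbf{EF1.} I repeat the common-witness argument. Let $p_k^{t+1}$ be firm $k$'s worker in round $t+1$. When firm $i$ takes $p_i^t$ in round $t$, the worker $p_k^{t+1}$ is still uncommitted. Row-top then gives $a_{i,p_i^t}\ge a_{i,p_k^{t+1}}$. If $i$ takes nobody in round $t$, the round ends with $i$ still active and having no positive edge into $U$. This forces $a_{i,p_k^{t+1}}=0$. Summing over $t$ gives $V_i(X_i)\ge V_i(X_k\setminus\{p_k^1\})$.

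\textbf{Pressure.} Fix firm $i$ and take a worker $j$ committed to some firm $r\ne i$ in a given round. If $i$ is still active at that moment, active-column-top gives $d_j=a_{rj}\ge a_{ij}$, so $j$ contributes nothing to $P_i$. This covers workers taken before $i$ exits the round, and all workers in a round where $i$ takes nobody. If $i$ has already taken a worker of value $b$ in this round, then $j$ was uncommitted at that time, and row-top for $i$ gives $a_{ij}\le b$. Positivity gives $d_j\ge tM_j\ge ta_{ij}$, so $\pos{a_{ij}-d_j}\le(1-t)b$. At most $m-1$ such $j$ occur per round, since each other firm commits at most once per round. Summing over rounds gives \eqref{eq:safe-pressure}. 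The all-zero workers assigned at the end have $a_{ij}=0$ and contribute nothing.

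\textbf{Welfare.} This is the one place where the first-step restriction $\mathcal C_t=\mathcal G(U,S)$ is needed, and I expect it to be the main point of care. The first edge of a round has value equal to the global maximum positive value among the uncommitted workers. Each later worker $s$ in that round was uncommitted at that time. So $M_s=a_{r s}$ for some firm $r$ with $a_{rs}>0$, and since $S=\F$ at the round start, $M_s\le$ the first edge's value. Also $M_1$ equals the first edge's current value, since that edge is column-maximal among all firms. The round-welfare computation $M_1+t\sum_{s\ge2}M_s\ge(t+(1-t)/m)\sum_s M_s$ then goes through verbatim, and summing over rounds gives \eqref{eq:safe-welfare}. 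The subtle check is that $M_s$ is a maximum over \emph{all} firms, not just active ones. Comparing it with the round-start global maximum avoids that issue.
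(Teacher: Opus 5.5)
Your proposal is correct and follows essentially the same route as the paper's proof. Like the paper, you isolate row-top and active-column-top as the only properties of a committed edge that the common-witness and round-pressure arguments need, then feed the resulting pressure bound and value retention into the $z=M$ and $z=d$ certificates, and invoke the first-step restriction to $\mathcal G(U,\F)$ only for the per-round welfare comparison. The paper also remarks that a firm's positive-value workers occupy a consecutive block of rounds starting at round one; your round-indexed pairing handles this implicitly, because a firm that receives nobody in round one never receives a positive-value worker.
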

\begin{proof}
We first prove EF1. Index rounds globally and let $p_i^q$ be the positive-value worker received by firm $i$ in round $q$, if one exists. Once a firm remains active to the end of a round without receiving a positive-value worker, there is no positive edge from that firm to any remaining worker, so it will never receive a positive-value worker in a later round. Hence a firm's positive-value workers, if any, occur in a consecutive block of rounds starting from the first round.

If firm $k$ has a worker $p_k^{q+1}$ in round $q+1$, then $p_k^{q+1}$ is still in $U$ when firm $i$ selects $p_i^q$ in round $q$. The row-top condition gives
\begin{equation*}
a_{i,p_i^q}\ge a_{i,p_k^{q+1}}.
\end{equation*}
If $i$ receives no positive-value worker in a round, it has zero value for every worker committed only later. Summing over rounds yields
\begin{equation*}
V_i(X_i)\ge V_i(X_k\setminus\{p_k^1\}),
\end{equation*}
so the output is exact EF1.

Now fix firm $i$ and prove~\eqref{eq:safe-pressure}. Before firm $i$ exits a round, if another active firm $r$ selects worker $j$, the active-column-top condition gives
\begin{equation*}
d_j=a_{rj}\ge a_{ij},
\end{equation*}
so worker $j$ contributes zero to $P_i$. Suppose $i$ receives worker $p$ at value $b=a_{ip}>0$ and exits. For any worker $j$ selected later in the same round, the row-top condition gives $a_{ij}\le b$. Since the selected edge for $j$ is positive and $t=\delta(A)$,
\begin{equation*}
d_j\ge tM_j\ge t a_{ij},
\end{equation*}
and therefore
\begin{equation*}
\pos{a_{ij}-d_j}\le(1-t)a_{ij}\le(1-t)b.
\end{equation*}
At most $m-1$ other firms can commit one worker each after $i$ exits the round. Thus this round contributes at most $(m-1)(1-t)b$ to $P_i$. Summing over rounds and using that the corresponding $b$ values sum to $B_i$ gives~\eqref{eq:safe-pressure}.

All workers with $M_j>0$ are committed along positive edges, so $d_j\ge tM_j$. Evaluating at $z=M$ gives $R_i(X,M)\le1/t$, while evaluating at $z=d$ and using~\eqref{eq:safe-pressure} gives
\begin{equation*}
R_i(X,d)=1+\frac{P_i(X)}{B_i}\le m-(m-1)t.
\end{equation*}
Taking the better certificate gives~\eqref{eq:safe-core}. Finally, the first edge of each round belongs to $\mathcal G(U,S)$ with $S=\F$, so its current value equals the worker's market maximum and is at least the market maximum of any later worker selected in the same round. All later positive edges satisfy $d_j\ge tM_j$. The welfare calculation from Theorem~\ref{thm:round-guarantee} therefore applies unchanged, proving~\eqref{eq:safe-welfare}.
\end{proof}

Theorem~\ref{thm:safe-round-family} separates feasibility from choice. Any safe-edge execution preserves exact EF1, the $\Gamma_m(t)$ core guarantee, and the $u_m(t)$ welfare guarantee. We now ask a secondary decision question: when do different safe candidates represent only different execution orders, and when can they lead to different personnel allocations?

\begin{lemma}[Disjoint safe choices commute]\label{lem:safe-disjoint-commute}
At a safe-round decision state, let the admissible candidate set be
\begin{equation}\label{eq:admissible-candidate-set}
\mathcal C(U,S)=
\begin{cases}
\mathcal G(U,S), & \text{at the first step of a round},\\
\mathcal S(U,S), & \text{at later steps of the round}.
\end{cases}
\end{equation}
Suppose
\begin{equation*}
e=(i,j),\qquad f=(r,h)\in\mathcal C(U,S)
\end{equation*}
have disjoint endpoints,
\begin{equation}\label{eq:disjoint-safe}
i\neq r,\qquad j\neq h.
\end{equation}
Then choosing $e$ followed by $f$, or $f$ followed by $e$, commits the same two edges and reaches the same next round state.
\end{lemma}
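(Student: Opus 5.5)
The plan is a direct check that admissibility survives one commitment, followed by the observation that the commitment updates are order-independent set operations. Fix a decision state $(U,S)$ and two admissible edges $e=(i,j)$ and $f=(r,h)$ satisfying~\eqref{eq:disjoint-safe}. By symmetry it suffices to show two things: after committing $e$, the edge $f$ is admissible at the resulting state; and the two-step outcomes agree.

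\textbf{Step 1: $f$ remains safe after $e$.} Committing $e$ produces the state $U'=U\setminus\{j\}$ and $S'=S\setminus\{i\}$, and the next step is not the first of the round. The admissible set there is therefore $\mathcal S(U',S')$.
\begin{itemize}
\item By~\eqref{eq:disjoint-safe}, $r\in S'$ and $h\in U'$.
\item Positivity $a_{rh}>0$ is unchanged.
\item If the original step was the first of the round, then $f\in\mathcal G(U,S)\subseteq\mathcal S(U,S)$ by Lemma~\ref{lem:safe-existence}. So in every case $f$ satisfies the row-top and active-column-top conditions at $(U,S)$.
\item Row-top: $a_{rh}=\max_{h'\in U}a_{rh'}\ge\max_{h'\in U'}a_{rh'}\ge a_{rh}$, since $U'\subseteq U$ and $h\in U'$. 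Hence $a_{rh}=\max_{h'\in U'}a_{rh'}$.
\item Column-top: the same argument with $S'\subseteq S$ and $r\in S'$ gives $a_{rh}=\max_{r'\in S'}a_{r'h}$.
\end{itemize}
Thus $f\in\mathcal S(U',S')$. In particular $S'\times U'$ still contains a positive edge, so the inner loop does not terminate before $f$ can be committed.

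\textbf{Step 2: the same state is reached.} Exchanging the roles of $e$ and $f$ in Step 1 shows that $e$ is admissible after $f$, so both orders are legal executions. Each commitment only adds a worker to one firm's bundle, removes that worker from $U$, and removes that firm from $S$. These updates act on disjoint coordinates: different firms $i\ne r$ and different workers $j\ne h$. Hence both orders yield the same state:
\begin{itemize}
\item bundles $X_i\cup\{j\}$ and $X_r\cup\{h\}$, with all other bundles unchanged;
\item uncommitted workers $U\setminus\{j,h\}$;
\item active firms $S\setminus\{i,r\}$;
\item the flag $\mathsf{first}$ equal to false.
\end{itemize}
This is the same next state, which proves the lemma.

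The only delicate point is the first step of a round, where the candidate set is $\mathcal G$ rather than $\mathcal S$. After either first commitment, the second step uses $\mathcal S$, so all that is needed is the inclusion $\mathcal G\subseteq\mathcal S$ from Lemma~\ref{lem:safe-existence}. Notably, $f$ need not remain a \emph{global} maximum edge after $e$, and the argument never requires it to.
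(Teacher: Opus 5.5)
Your proposal is correct and follows essentially the same route as the paper: you show that the other edge stays row-top and active-column-top after the first commitment, because removing a different firm and a different worker can only shrink the relevant maxima. You then observe that the two commitments act on disjoint firms and workers, so both orders reach the same state. The only cosmetic difference is that you fold the first-step case into the general one via the inclusion $\mathcal G(U,S)\subseteq\mathcal S(U,S)$ from Lemma~\ref{lem:safe-existence}, whereas the paper treats the first step as a separate case with the same reasoning.
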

\begin{proof}
If the current step is not the first step of the round, then $e,f\in\mathcal S(U,S)$. Choosing $e$ removes only firm $i$ and worker $j$. Since $r\neq i$ and $h\neq j$, both endpoints of $f$ remain. Removing another firm can only reduce worker $h$'s active column competitors, and removing another worker can only reduce firm $r$'s row competitors. Hence $f$ remains mutual-top. The reverse order is symmetric.

If the current step is the first step of the round, then $e,f\in\mathcal G(U,S)$. After selecting one of them, the round moves to a later step. The other edge still has both endpoints available and was originally a global maximum positive edge. After the first pair of endpoints is removed, it remains row-top and active-column-top, hence belongs to the new $\mathcal S$. In either order, the same two firms leave $S$ and the same two workers leave $U$, so the resulting state is identical.
\end{proof}

This lemma gives a useful rule of thumb: equal edge values do not matter by themselves; endpoint conflicts do. For example, $a_{11}=a_{22}$ creates no genuine choice if candidates $(1,1)$ and $(2,2)$ have disjoint endpoints.

\begin{definition}[Strict local rankings on positive edges]\label{def:strict-row-column}
An instance has \emph{strict row rankings} if, for every firm $i$ and distinct workers $j\neq h$, whenever $a_{ij}>0$ and $a_{ih}>0$,
\begin{equation*}
a_{ij}\neq a_{ih}.
\end{equation*}
It has \emph{strict column rankings} if, for every worker $j$ and distinct firms $i\neq r$, whenever $a_{ij}>0$ and $a_{rj}>0$,
\begin{equation*}
a_{ij}\neq a_{rj}.
\end{equation*}
\end{definition}

\begin{theorem}[An ex-ante uniqueness condition]\label{thm:strict-rank-unique}
If an instance has both strict row and strict column rankings, then all safe-round selection rules produce the same final matching on positive-value workers.
\end{theorem}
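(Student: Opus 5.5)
The plan is to show that, under strict row and column rankings, the admissible candidate sets $\mathcal C(U,S)$ of \eqref{eq:admissible-candidate-set} always consist of edges with pairwise disjoint endpoints. Lemma~\ref{lem:safe-disjoint-commute} then implies that the set of edges committed in each round does not depend on the order of selection. An induction over rounds will finish the argument.

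\textbf{Step 1: candidates have disjoint endpoints.} Take two distinct candidates $(i,j),(i,h)\in\mathcal S(U,S)$ sharing the firm $i$. Both are row-top with positive value, so $a_{ij}=a_{ih}=\max_{u\in U}a_{iu}>0$, which contradicts strict row rankings. Similarly, two candidates $(i,j),(r,j)$ sharing the worker $j$ are both active-column-top, so $a_{ij}=a_{rj}>0$, which contradicts strict column rankings. Since $\mathcal G(U,S)\subseteq\mathcal S(U,S)$ by Lemma~\ref{lem:safe-existence}, the same holds at the first step of a round. In particular, the conflict resolver in Algorithm~\ref{alg:safe-round-optional} is never triggered, so every execution reduces to a choice of some element of $\mathcal C_t$.

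\textbf{Step 2: one round has a determined outcome.} Fix a round-start state $(U,\F)$. I will show that every execution of the round commits the same edge set. The argument is by induction on the number of remaining positive edges in $S\times U$, using a diamond-lemma argument. From a state, the executions branch over candidates $e\neq f\in\mathcal C(U,S)$. By Step 1 these are disjoint. By Lemma~\ref{lem:safe-disjoint-commute}, $f$ is still admissible after $e$ is chosen, and the paths $ef$ and $fe$ reach the same state. By the induction hypothesis, the continuation from each post-choice state is unique. The two continuations share the common state reached by $ef$ and $fe$, so they agree. The round ends when $S\times U$ has no positive edge. This endpoint is determined by the state, so the set of committed edges and the resulting $U$ are unique.

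One subtlety needs care. Lemma~\ref{lem:safe-disjoint-commute} is stated for a single pair of choices, and the induction must use it at arbitrary intermediate states. This is fine because every intermediate state is again a safe-round decision state, and Step 1 applies there.

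\textbf{Step 3: induction over rounds.} Each round starts from $S=\F$ and the leftover $U$. By Step 2, that leftover set is identical across executions. Induction on the round index therefore shows that every round commits the same edges. Hence all positive-value workers receive the same firm in every execution. Workers with all-zero values are excluded from the claim.

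I expect the main obstacle to be the rigorous local-to-global confluence in Step 2, namely making the diamond-lemma induction precise. Step 1 is immediate.
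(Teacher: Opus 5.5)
Your proposal is correct and follows the paper's proof: strict row and column rankings make all simultaneously admissible safe edges endpoint-disjoint, and a diamond-lemma induction based on Lemma~\ref{lem:safe-disjoint-commute} then gives the result. The only difference is bookkeeping. You induct within a round on the remaining positive edges and then over rounds, whereas the paper runs a single induction on the number of uncommitted positive-value workers. You also spell out the diamond step, applying the hypothesis at both post-choice states, more explicitly than the paper does.
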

\begin{proof}
Fix any reachable state. If two admissible safe edges share the same firm $i$, then both workers must be row-top for $i$ in the current set $U$, so the two positive edge values are equal, contradicting strict row rankings. If two admissible safe edges share the same worker $j$, then both firms must be active-column-top for $j$ in the current set $S$, again implying equal positive edge values and contradicting strict column rankings. Therefore any simultaneously admissible safe edges have disjoint endpoints.

By Lemma~\ref{lem:safe-disjoint-commute}, any two first-step choices from the same reachable state form a local diamond: after both edges are executed, the same state is reached. We then induct on the number of uncommitted positive-value workers. If two complete executions make the same first choice, apply the induction hypothesis to the common next state. If they make different first choices, commute these choices so that both have been executed and then apply the induction hypothesis. Each step permanently commits one positive-value worker, so the induction terminates and the final matching is unique.
\end{proof}

The condition is an ex-ante sufficient condition and is strictly weaker than requiring all positive entries of the full matrix to be pairwise distinct. Unrelated edges may have the same value. For example,
\begin{equation}\label{eq:irrelevant-tie-example}
A=
\begin{pmatrix}
1&0\\
0&1
\end{pmatrix}
\end{equation}
has equal positive edge values, but the two candidates have disjoint endpoints and the final matching is unique. If positive-edge valuations are drawn from a jointly absolutely continuous distribution, strict row and column rankings hold with probability one, and so does uniqueness.

\paragraph{When can safe choices change the final matching?}
Failure of strict local rankings is only an ex-ante warning; it does not imply that different final matchings must occur. The relevant online test is whether current admissible candidates share an endpoint. For two safe candidates
\begin{equation*}
e=(i,j),\qquad f=(r,h),
\end{equation*}
there are three structurally different cases:
\begin{enumerate}[label=(\roman*),leftmargin=2.4em]
\item If $i\neq r$ and $j\neq h$, Lemma~\ref{lem:safe-disjoint-commute} shows that the difference is only execution order; no additional optimization is needed.
\item If $j=h$ and $i\neq r$, the two candidates compete for the same worker. Once the worker's employer is committed, choosing $(i,j)$ and choosing $(r,j)$ necessarily lead to different final matchings for worker $j$.
\item If $i=r$ and $j\neq h$, the same firm can commit only one of the workers before leaving the round. The future feasible path changes, so the terminal matching may differ, although later rounds may also reconverge.
\end{enumerate}
The third case can be seen from two two-firm examples. If
\begin{equation*}
A=
\begin{pmatrix}
10&10\\
0&0
\end{pmatrix},
\end{equation*}
firm $1$ eventually receives both workers regardless of which one it selects first. In contrast, for
\begin{equation}\label{eq:shared-firm-diverge-example}
A=
\begin{pmatrix}
10&10\\
9&0
\end{pmatrix},
\end{equation}
choosing the first worker makes firm $1$ receive the second worker in the next round, while choosing the second worker lets firm $2$ take the first worker at value $9$ in the current round. The final matchings differ.

Thus, the actual trigger condition is
\begin{equation}\label{eq:conflict-trigger}
\text{only endpoint-conflicting safe candidates can create a genuine matching choice}.
\end{equation}
Row and column ties matter precisely because two safe edges sharing a firm must form a row tie, while two safe edges sharing a worker must form a column tie. Equal values on unrelated edges require no special treatment.

\paragraph{A secondary objective after the primary guarantees are secured.}
Once a genuine or potential endpoint conflict appears, Theorem~\ref{thm:safe-round-family} already guarantees that any safe choice preserves the primary guarantees: exact EF1, $\alpha\ge\Gamma_m(t)$, and $\SW/\OPT\ge u_m(t)$. The remaining problem is therefore a secondary optimization problem, not a feasibility problem.

We use coalition stability as the secondary objective because $\Phi$ is exactly equivalent to the stability measure used in the paper:
\begin{equation}\label{eq:secondary-Phi-meaning}
\Phi(X)=\min_{z\ge0}\max_iR_i(X,z)=\frac{1}{\alpha(X)}.
\end{equation}
For a fixed personnel allocation $X$, $\Phi(X)$ is the minimum normalized stabilization pressure borne by the worst firm after wages or internal compensation are optimally redesigned. Therefore,
\begin{equation*}
\Phi(X^A)<\Phi(X^B)
\quad\Longleftrightarrow\quad
\alpha(X^A)>\alpha(X^B)
\end{equation*}
means that allocation $X^A$ is easier to stabilize than $X^B$. Welfare $\SW$ measures total value creation, while $\Phi$ measures how difficult it is to make all firms accept the personnel allocation without a strong coalition deviation. The two criteria need not agree when $m\ge3$.

Because $\Phi$ is defined for complete matchings while a safe-round state commits only part of the workers, we introduce a common \emph{benchmark completion used only for scoring}. For every positive-value worker $j$, fix a highest-value firm
\begin{equation}\label{eq:benchmark-employer}
\tau(j)\in\argmaxop_{i\in\F}a_{ij},
\end{equation}
using a fixed deterministic tie-breaking rule when necessary. At decision time $t$, let $X^t=(X_i^t)_i$ be the permanently committed partial bundles and let $U_t$ be the uncommitted positive-value workers. Define the complete comparison matching
\begin{equation}\label{eq:Yt-benchmark-completion}
Y_i^t
=
X_i^t\cup\{j\in U_t:\tau(j)=i\},
\qquad i\in\F.
\end{equation}
All-zero workers are placed by a fixed rule and do not affect the scores.

The comparison matching $Y^t$ must be distinguished from the actual algorithm state. The partial matching $X^t$ contains permanent commitments. The matching $Y^t$ is neither a new permanent allocation nor a prediction of the terminal output. It is only a common complete benchmark: committed workers stay where they are, while uncommitted workers are temporarily placed at a highest-value firm. The actual algorithm still commits only the selected safe edge. At the next decision point, a new benchmark completion $Y^{t+1}$ is constructed by the same rule.

For a candidate $e=(i,j)$ in a nontrivial conflict component $\mathcal K_t$, let $k=\mu_{Y^t}(j)=\tau(j)$. Define $Y^{t,e}$ by changing only worker $j$'s comparison employer in $Y^t$, from benchmark employer $k$ to candidate firm $i$. All other workers remain in the same comparison positions. If $k=i$, then $Y^{t,e}=Y^t$. Thus $Y^{t,e}$ isolates the one-worker change created by committing edge $e$ relative to the common market benchmark.

The exact secondary rule is
\begin{equation}\label{eq:exact-secondary-rule}
e_t^*\in
\operatorname*{arg\,min}_{e\in\mathcal K_t}
\Phi(Y^{t,e}).
\end{equation}
Different conflict components have disjoint endpoints and need not be compared with each other. Under the common benchmark completion, each conflicting safe candidate induces a complete comparison matching whose wage vector is fully re-optimized. The rule selects the candidate with the largest core factor, equivalently the smallest $\Phi$. It is therefore a one-step secondary stability optimum relative to the common benchmark completion. It is not a prediction of the terminal matching and does not claim global path optimality.

This rule can be used as the optional subroutine in Algorithm~\ref{alg:safe-round-optional}:
\begin{algorithm}[ht]
\caption{\textsc{Stability-Aware Conflict Resolver}}\label{alg:stability-resolver}
\begin{algorithmic}[1]
\Require Permanently committed bundles $X^t$, uncommitted workers $U_t$, and a nontrivial conflict component $\mathcal K_t$
\State Construct the complete scoring benchmark $Y^t$ using~\eqref{eq:Yt-benchmark-completion}
\If{exact $\Phi$ mode is used}
  \ForAll{$e=(i,j)\in\mathcal K_t$}
    \State Construct $Y^{t,e}$ by changing only worker $j$'s comparison employer in $Y^t$
    \State $Q_t(e)\gets\Phi(Y^{t,e})$
  \EndFor
\Else
  \State Compute $z^t\in\argminop_{z\ge0}\max_rR_r(Y^t,z)$
  \ForAll{$e=(i,j)\in\mathcal K_t$}
    \State Set $k\gets\mu_{Y^t}(j)$
    \State $Q_t(e)\gets\Phi(Y^t)$ if $k=i$; otherwise $Q_t(e)\gets\widehat\Phi_{j:k\to i}(z^t)$
  \EndFor
\EndIf
\State \Return $e_t^*\in\argminop_{e\in\mathcal K_t}Q_t(e)$, using the fixed tie-breaking rule if needed
\end{algorithmic}
\end{algorithm}

The exact-$\Phi$ mode solves the fixed-matching LP for each conflict candidate. The certificate mode solves for optimal wages only once at the common benchmark $Y^t$ and then uses the one-worker fixed-wage sensitivity formula from Section~\ref{sec:sensitivity}. Since $Y^{t,e}$ differs from $Y^t$ only in worker $j$'s comparison employer, $\widehat\Phi_{j:k\to i}(z^t)$ has a direct interpretation: it is the one-step worst-firm bottleneck certificate created by committing candidate edge $(i,j)$ while the benchmark-optimal wages are held fixed. It is cheaper to compute but is only an upper-bound certificate for $\Phi(Y^{t,e})$, not the exact secondary objective.

The roles of $\Phi$ and $\widehat\Phi$ are therefore distinct:
\begin{equation}\label{eq:secondary-positioning}
\begin{aligned}
&\text{no endpoint conflict: order only; no stability LP is needed;}\\
&\text{endpoint conflict: use exact }\Phi(Y^{t,e})\text{ under the common benchmark;}\\
&\qquad \widehat\Phi\text{ is a lower-cost one-step certificate;}\\
&\text{either way: no claim of global path optimality.}
\end{aligned}
\end{equation}

This also explains why we use $\Phi$ rather than another score. The safe-round structure already separates primary guarantees from a secondary objective. If a planner instead prioritizes welfare, workload balance, or another operational criterion, that score can be used among conflict candidates. We use $\Phi$ because the paper studies compatibility between fairness and coalition stability, and $\Phi=1/\alpha$ is the exact stability difficulty of a fixed matching rather than a heuristic score.

\paragraph{Relation to abstract confluence.}
In a finite-state dynamic process, uniqueness of the terminal state can equivalently be phrased by saying that all reachable branches eventually reconverge. This abstract characterization is mathematically valid but is not our algorithmic test. Lemma~\ref{lem:safe-disjoint-commute} and Theorem~\ref{thm:strict-rank-unique} are more useful here: they provide, respectively, an online conflict test and an ex-ante uniqueness condition, and they identify exactly when a secondary stability objective can have decision value.

\section{Main result: the two-dimensional EF1--core minimax frontier}\label{sec:frontier}

We now lift the fixed-matching representation and constructive lower bounds to a cross-instance performance frontier. Unlike a standard approximation ratio for a particular algorithm, we first select the \emph{most stable} EF1 matching in each instance and then take the worst case over instances. The frontier therefore measures the unavoidable stability loss caused by the fairness requirement itself, rather than the performance loss of a specific algorithm.

Define
\begin{equation}\label{eq:Hm-def}
H_m(\delta)=\inf_{\substack{|\F|=m\\\delta(A)\ge\delta}}
\max_{X:\,\mathrm{EF1}}\alpha(X).
\end{equation}
For any fixed $\beta>0$, also define the positive-fairness frontier over arbitrary market sizes by
\begin{equation}\label{eq:Cbeta}
\mathcal C_\beta(\delta)=
\inf_{m\ge2}\inf_{\delta(A)\ge\delta}
\max_{X:\,\beta\text{-EF1}}\alpha(X).
\end{equation}

\begin{table}[ht]
\centering
\caption{Main minimax results for the two-dimensional EF1--core frontier $H_m(\delta)$. Equality indicates a closed region; $g_3(\delta)=(1+\delta+2\delta^2)/[2(1+\delta)]$.}
\label{tab:frontier}
\small
\setlength{\tabcolsep}{4.5pt}
\begin{tabularx}{\textwidth}{>{\centering\arraybackslash}p{0.10\textwidth} >{\centering\arraybackslash}p{0.12\textwidth} >{\centering\arraybackslash}p{0.28\textwidth} >{\centering\arraybackslash}p{0.34\textwidth} X}
\toprule
& $\delta=0$ & $0<\delta\le1/2$ & $1/2<\delta<1$ & $\delta=1$\\
\midrule
$m=2$ & $1/2$ & \multicolumn{2}{c}{$H_2(\delta)=(1+\delta)/2$} & $1$\\
\midrule
$m=3$ & $1/3$ & $H_3(\delta)=(1+2\delta)/3$ & $g_3(\delta)\le H_3(\delta)\le(1+2\delta)/3$ & $1$\\
\midrule
$m\ge4$ & $1/m$ & \multicolumn{2}{c}{$\Gamma_m(\delta)\le H_m(\delta)\le u_m(\delta)$} & $1$\\
\midrule
$m\to\infty$ & $0$ & $\delta$ & $\delta$ & $1$\\
\bottomrule
\end{tabularx}
\end{table}

\begin{theorem}[Finite-firm frontier]\label{thm:Hm}
For every $m\ge2$ and $\delta\in[0,1]$,
\begin{equation}\label{eq:Hm-bounds}
\Gamma_m(\delta)\le H_m(\delta)\le u_m(\delta).
\end{equation}
In particular, $H_m(0)=1/m$, and for fixed $\delta$, $H_m(\delta)\to\delta$ as $m\to\infty$.
\end{theorem}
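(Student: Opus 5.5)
The lower bound is immediate from Theorem~\ref{thm:round-guarantee}: for any instance with $m$ firms and $\delta(A)\ge\delta$, Algorithm~\ref{alg:round} outputs an exact-EF1 matching with $\alpha(X)\ge\Gamma_m(\delta(A))$. Since $\Gamma_m(t)=\max\{t,1/(m-(m-1)t)\}$ is nondecreasing in $t$, this is at least $\Gamma_m(\delta)$. Hence $\max_{X:\mathrm{EF1}}\alpha(X)\ge\Gamma_m(\delta)$ for every admissible instance, and taking the infimum gives $H_m(\delta)\ge\Gamma_m(\delta)$. For $\delta=1$, both sides equal $1$: every positive edge is then column-maximal, so $\Gamma_m(1)=1$, and $\alpha\le1$ always.

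For the upper bound I would exhibit, for each $m$ and $\delta$, one hard instance in which every EF1 matching has $\SW(X)/\OPT\le u_m(\delta)$. Proposition~\ref{prop:exact} ($\alpha(X)\le\SW(X)/\OPT$) then finishes the argument. The candidate instance has $m$ identical ``star'' workers and $m$ firms. Firm $1$ values every worker at $1$, and every other firm values every worker at $\delta$, so $M_j=1$ and $\delta(A)=\delta$. (For $\delta=0$, use a small positive value and take a limit, or simply let the other firms value everything at $0$; then $\delta(A)$ is $1$ by convention only if all entries vanish, so I would instead use the limit argument or an infimum over $\delta'\downarrow0$.)

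The key step is to show that EF1 forces each firm to get exactly one worker. If some firm $r\ne1$ received no worker while some firm $k$ held at least two, then firm $r$ would envy $k$ by more than one item whenever $\delta>0$. If firm $1$ held two or more workers and some firm had none, that firm would envy firm $1$. So with $n=m$ workers every EF1 matching is a bijection. Firm $1$ gets value $1$ and the rest get $\delta$, giving $\SW=1+(m-1)\delta$ and $\OPT=m$, hence $\SW/\OPT=\delta+(1-\delta)/m=u_m(\delta)$.

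The delicate point is $\delta=0$. There a zero-valuing firm does not envy, so EF1 imposes no balance, and the construction must be altered. I would let each of the $m$ firms own a distinct private top worker, give firm $1$ value $1$ on all $m$ workers, and make the other firms value all workers with tiny $\epsilon>0$ so that balance is still forced. Letting $\epsilon\to0$ gives the infimum $1/m$. This matches $\Gamma_m(0)=1/m$, so $H_m(0)=1/m$.

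Finally, $H_m(\delta)\to\delta$ as $m\to\infty$ follows by squeezing: $\Gamma_m(\delta)\ge\delta$ and $u_m(\delta)=\delta+(1-\delta)/m\to\delta$. The main obstacle I expect is making the EF1 counting argument in the upper-bound instance airtight, especially the handling of $\delta=0$ and boundary cases.
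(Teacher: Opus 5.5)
Your proposal is correct and follows the paper's proof essentially step for step: the lower bound comes from Theorem~\ref{thm:round-guarantee} plus monotonicity of $\Gamma_m$, and the upper bound from the instance with $m$ identical workers valued $1$ by firm~$1$ and $\delta$ by all others (EF1 forces a bijection, and Proposition~\ref{prop:exact} caps $\alpha$ by $u_m(\delta)$), with $\delta=0$ handled by letting the off-firm value tend to $0$ and the limit by squeezing as $m\to\infty$. Two small tidy-ups: your case split should also cover an empty firm~$1$ (it values every worker at $1$, so it violates EF1 against any firm holding two workers), and at $\delta=0$ no altered construction is needed, because the $\epsilon$-version of the same instance has $\delta(A)=\epsilon\ge0$ and so is admissible, giving $H_m(0)\le u_m(\epsilon)\to1/m$ --- whereas the zero-valued version fails because there $\delta(A)=1$ by definition (every positive edge is column-maximal, not by convention), and adding ``private top workers'' would, read literally, create a welfare-optimal EF1 bijection that supports the exact core.
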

\begin{proof}
The lower bound follows directly from Theorem~\ref{thm:round-guarantee}. Every instance with $\delta(A)\ge\delta$ admits an exact-EF1 output of the round algorithm with
$\alpha\ge\Gamma_m(\delta(A))\ge\Gamma_m(\delta)$.

For the upper bound, consider $m$ firms and $m$ identical workers. Firm $1$ values each worker at $1$, while every other firm values each worker at $\delta$. When $\delta=0$, first replace $0$ by an arbitrary $\rho>0$ and then let $\rho\downarrow0$. For $\delta>0$, no EF1 matching can leave a firm empty. If one firm were empty, another firm would contain at least two workers because the number of workers equals the number of firms. After removing one worker from that bundle, the empty firm would still assign positive value to the remaining worker, violating EF1. Hence every firm receives exactly one worker, and the welfare ratio is
\begin{equation*}
\frac{1+(m-1)\delta}{m}=\delta+\frac{1-\delta}{m}=u_m(\delta).
\end{equation*}
By Proposition~\ref{prop:exact}, the core factor of any EF1 matching is at most this welfare ratio, giving $H_m(\delta)\le u_m(\delta)$. The same upper bound follows at $\delta=0$ by taking $\rho\downarrow0$.

At $\delta=0$, $\Gamma_m(0)=u_m(0)=1/m$, so $H_m(0)=1/m$. For fixed $\delta$, we have $\Gamma_m(\delta)\ge\delta$ and $u_m(\delta)=\delta+(1-\delta)/m\to\delta$, which proves the limit.
\end{proof}

\begin{corollary}[Two-firm frontier]\label{cor:H2}
For all $\delta\in[0,1]$,
\begin{equation}\label{eq:H2}
H_2(\delta)=\frac{1+\delta}{2}.
\end{equation}
\end{corollary}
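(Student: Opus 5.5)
The plan is to sandwich $H_2(\delta)$ between the two bounds of Theorem~\ref{thm:Hm} specialized to $m=2$, and to sharpen the lower bound using the two-firm welfare--stability identity of Theorem~\ref{thm:two}. The upper bound is immediate: Theorem~\ref{thm:Hm} gives $H_2(\delta)\le u_2(\delta)=\delta+(1-\delta)/2=(1+\delta)/2$ for every $\delta\in[0,1]$. The extremal instance there has two identical workers, firm $1$ valuing each at $1$ and firm $2$ at $\delta$; for $\delta=0$ this bound is obtained through the limiting argument $\rho\downarrow0$.

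For the lower bound, note that $\Gamma_2(\delta)=\max\{\delta,1/(2-\delta)\}$ is in general strictly below $(1+\delta)/2$. For instance, at $\delta=1/2$ it equals $2/3<3/4$. So Theorem~\ref{thm:Hm} alone does not close the gap, and I will use Theorem~\ref{thm:two} instead.

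Fix any two-firm instance with $\delta(A)\ge\delta$. If $\OPT=0$, every matching supports the exact core trivially, and $\alpha=1\ge(1+\delta)/2$ by convention. Otherwise, run Algorithm~\ref{alg:round}; by Lemma~\ref{lem:common-witness} its output $X$ is exact EF1. Theorem~\ref{thm:round-guarantee} gives
\[
\frac{\SW(X)}{\OPT}\ge u_2(\delta(A))\ge u_2(\delta)=\frac{1+\delta}{2},
\]
where the last inequality uses that $u_2$ is nondecreasing in $t$. Theorem~\ref{thm:two} then converts this welfare ratio into a core factor: $\alpha(X)=\SW(X)/\OPT\ge(1+\delta)/2$. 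Taking the maximum over EF1 matchings and then the infimum over instances gives $H_2(\delta)\ge(1+\delta)/2$.

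The only delicate point is the boundary bookkeeping. Theorem~\ref{thm:two} requires $\OPT>0$, so the all-zero market (where $\delta(A)=1$) has to be handled separately as above. At $\delta=0$, the conclusion $H_2(0)=1/2$ is already recorded in Theorem~\ref{thm:Hm}. Beyond these cases the argument is a direct combination of earlier results, and I do not expect a substantive obstacle.
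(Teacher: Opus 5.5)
Your proposal is correct and follows essentially the same route as the paper. The upper bound comes from the symmetric two-worker instance of Theorem~\ref{thm:Hm}, and the lower bound combines the round algorithm's welfare guarantee $u_2(\delta)$ (Theorem~\ref{thm:round-guarantee}) with the two-firm identity $\alpha(X)=\SW(X)/\OPT$ (Theorem~\ref{thm:two}). Your extra bookkeeping (monotonicity of $u_2$, the separate $\OPT=0$ case, and the remark that $\Gamma_2(\delta)$ alone would not close the gap) is sound and just makes explicit what the paper's short proof leaves implicit.
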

\begin{proof}
For $m=2$, Theorem~\ref{thm:round-guarantee} gives an EF1 matching with welfare ratio at least $u_2(\delta)=(1+\delta)/2$. By Theorem~\ref{thm:two}, the stability rate of that matching equals its welfare ratio, so $H_2(\delta)\ge(1+\delta)/2$. The symmetric upper-bound instance in Theorem~\ref{thm:Hm} gives the reverse inequality.
\end{proof}

\begin{theorem}[Three-firm frontier]\label{thm:H3}
Let
\begin{equation}\label{eq:g3}
g_3(\delta)=\frac{1+\delta+2\delta^2}{2(1+\delta)},\qquad
u_3(\delta)=\frac{1+2\delta}{3}.
\end{equation}
For $0\le\delta\le1/2$,
$H_3(\delta)=u_3(\delta)$. For $1/2\le\delta\le1$,
\begin{equation}\label{eq:H3-bounds}
g_3(\delta)\le H_3(\delta)\le u_3(\delta).
\end{equation}
The maximum gap between the two bounds is below $0.01197$.
\end{theorem}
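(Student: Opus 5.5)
The upper bound $H_3(\delta)\le u_3(\delta)$ on the whole interval $[0,1]$ is already contained in Theorem~\ref{thm:Hm}: the instance with three identical workers, valued $1$ by firm $1$ and $\delta$ by firms $2,3$, forces a one-worker-per-firm EF1 matching of welfare ratio $u_3(\delta)$, and Proposition~\ref{prop:exact} caps $\alpha$ by the welfare ratio. So the work is in the lower bounds. Since $\Gamma_3(\delta)=\max\{\delta,1/(3-2\delta)\}$ is strictly below $(1+2\delta)/3$ on $(0,1)$, the generic two-certificate argument of Theorem~\ref{thm:local-global-bridge} is too weak. I will instead analyse the output $X$ of Algorithm~\ref{alg:round} for $m=3$ directly through the fixed-matching LP, using the homogeneous form of Corollary~\ref{cor:homogeneous-pressure}. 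The goal is to show $R_A(X)\le 3/(1+2\delta)$ for $\delta\le1/2$, and $R_A(X)\le 1/g_3(\delta)$ for $\delta\ge1/2$.

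\textbf{Step 1: per-round decomposition.} Fix firm prices $p\ge0$ with $\sum_iB_ip_i\le1$. By \eqref{eq:R-worker}, $R_A$ is $\sum_j\phi_j(p)$, where each $\phi_j$ is a fractional knapsack with total capacity $p_{\mu(j)}$ and item caps $p_i$. I would group workers by round. Each round has at most three picks:
\begin{itemize}
\item a first pick at value $M$, made at a global maximum;
\item a second pick $j_2$ with $d_{j_2}\ge\delta M_{j_2}$ and $M_{j_2}\le M$;
\item a third pick $j_3$, likewise.
\end{itemize}
The round structure (as in the proof of Lemma~\ref{lem:round-pressure}) also constrains cross values. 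A firm that exits a round with value $b$ values each later pick of that round at most $b$. A firm still active when worker $j$ is picked values $j$ at most $d_j$.

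Using these constraints, I would bound each round's knapsack contribution by a linear function of $\sum_{j\in\text{round}}d_j\,p_{\mu(j)}$, plus cross terms. These are the terms in which a firm $i$ collects value $a_{ij}$ on a worker $j$ held by another firm, at a rate limited by $\min\{p_i,p_{\mu(j)}\}$. The aim is a per-round inequality of the form
\[
\sum_{j\in\text{round}}\phi_j(p)\;\le\;c(\delta)\sum_{j\in\text{round}}d_j\,p_{\mu(j)}.
\]
Summing over rounds then gives $R_A\le c(\delta)\sum_iB_ip_i\le c(\delta)$. To prove such an inequality, I would normalise $M=1$ and reduce it to a small parametric LP in the round's values and the three prices. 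After exploiting the symmetry between the two later picks, I expect its extreme points to be a handful of configurations.

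\textbf{Step 2: the two regimes.} For $\delta\le1/2$, I expect the worst configuration to be the symmetric one: the first pick worth $1$ to everyone, the two later picks worth $1$ to the first firm and $\delta$ to their holders. This would give $c=3/(1+2\delta)$, matching the upper instance exactly. For $\delta>1/2$, a competing configuration should appear in which the second and third pickers also value each other's workers highly. The per-round bound then degrades to $2(1+\delta)/(1+\delta+2\delta^2)$, i.e.\ the reciprocal of $g_3$. Checking $g_3(1/2)=2/3=u_3(1/2)$ confirms that the two regimes meet continuously at $\delta=1/2$.

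\textbf{Main obstacle and final step.} The hardest part will be Step 1 itself. The per-round bound must hold uniformly in $p$, yet a firm's price is shared across rounds, and cross terms can link a worker in one round to a firm whose bundle value comes from earlier rounds. My first attempt would be to charge each cross term to the receiving firm's own pick in the same round, using the row-top bound $a_{ij}\le b$ just as in Lemma~\ref{lem:round-pressure}; this makes the charging local to rounds. If that fails, I would fall back to an explicit primal wage vector: interpolate $z_j=d_j+\theta_j(M_j-d_j)$, with $\theta_j$ depending on the pick position within the round, and verify $R_i(X,z)\le c(\delta)$ firm by firm. Finally, the numerical claim is a one-variable maximisation of $u_3(\delta)-g_3(\delta)$ on $[1/2,1]$, solved by setting the derivative to zero; its maximum should come out just below $0.01197$.
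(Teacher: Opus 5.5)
Your architecture is the paper's. The upper bound comes from the symmetric instance of Theorem~\ref{thm:Hm}, and the lower bound analyses the output of Algorithm~\ref{alg:round} round by round through the fractional-knapsack dual. The paper adds per-round wages and profits, which is the primal counterpart of summing your per-round inequality.

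The locality you list under your ``main obstacle'' is automatic. The per-round inequality only has to hold for every $p\ge0$, and every excess term stays inside its round. A firm that exited before worker $j$ was picked values $j$ at most its own pick of that round. A firm that picks nothing in an incomplete round stays active, so $a_{ij}\le d_j$. The genuine gap is that the per-round bound, which is the entire content of the lower bound, is only conjectured. Two features of your sketch would also derail it.

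First, the two later picks are not symmetric. Firm $3$ is active when worker $2$ is chosen, so $a_{32}\le d_2$. Firm $2$ has already left when worker $3$ is chosen, so $a_{23}$ can reach $\min\{d_2,d_3/\delta\}$. Hence worker $3$ can be demanded at a premium by firms $1$ and $2$ at once, while worker $2$ can be demanded only by firm $1$. This asymmetry is exactly what creates the $\delta>1/2$ regime. Take $d=(1,\delta,\delta^2)$, $a_{12}=1$, $a_{13}=a_{23}=\delta$ and prices $\pi=(1,1,2)$. Then $\sum_j\phi_j(\pi)/\sum_id_i\pi_i=(2+2\delta)/(1+\delta+2\delta^2)=1/g_3(\delta)$, so the bound is attained. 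Symmetrising ($d_2=d_3$) removes the premium on worker $3$ and misses this case.

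Second, you must keep the shared capacity $\sum_iq_{ij}\le p_{\mu(j)}$. The paper encodes it as the term $C\min\{\pi_2,\pos{\pi_3-\pi_1}\}$ with $C=\min\{d_2,d_3/\delta\}-d_3$. If you instead cap each cross term separately by $\min\{p_i,p_{\mu(j)}\}$, then at $\pi=(1,1,1)$ and the same $d$ the excess-to-baseline ratio is $(1-\delta)(1+2\delta)/(1+\delta+\delta^2)$. This exceeds $2(1-\delta)/(1+2\delta)$, the largest ratio compatible with $u_3$, once $\delta>(\sqrt3-1)/2$. You would then lose $H_3=u_3$ on part of $[0,1/2]$.

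With the coupled bound, the paper closes the argument as follows.
\begin{enumerate}
\item The excess is linear on the cones cut out by $\pi_1=\pi_2$, $\pi_1=\pi_3$ and $\pi_3=\pi_1+\pi_2$. So the ratio peaks on the rays $(1,1,0)$, $(1,0,1)$, $(0,1,1)$, $(1,1,1)$, $(1,1,2)$.
\item Ray $(1,1,1)$ gives $2(1-\delta)/(1+2\delta)$, attained at $d=(1,\delta,\delta)$.
\item A vertex check in $(d_2,d_3)$ on ray $(1,1,2)$ shows that the only competitor is $d=(1,\delta,\delta^2)$, with value $(1-\delta)(1+2\delta)/(1+\delta+2\delta^2)$.
\item These two values cross at $\delta=1/2$.
\end{enumerate}

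Your final calculus is fine: $u_3-g_3=(1-\delta)(2\delta-1)/[6(1+\delta)]$ peaks at $\delta=\sqrt3-1$ with value $(7-4\sqrt3)/6\approx0.011966$.
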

\begin{proof}
The upper bound again follows from the symmetric three-firm instance in Theorem~\ref{thm:Hm}. The lower bound requires a sharper analysis of the three-firm round structure; the full calculation is given in Appendix~\ref{app:H3}. The appendix shows that every complete three-worker round of Algorithm~\ref{alg:round} supports an
\begin{equation*}
f_3(\delta)=\min\{u_3(\delta),g_3(\delta)\}
\end{equation*}
-core. A final round with fewer than three workers supports at least $(1+\delta)/2\ge f_3(\delta)$, and wages and profits can be added across rounds. Hence the full algorithm supports at least $f_3(\delta)$.

When $\delta\le1/2$, the extreme-ray comparison in the appendix gives $f_3(\delta)=u_3(\delta)$, so the bounds close. When $\delta\ge1/2$, $f_3(\delta)=g_3(\delta)$. The gap is
\begin{equation*}
u_3(\delta)-g_3(\delta)=\frac{(1-\delta)(2\delta-1)}{6(1+\delta)},
\end{equation*}
which is maximized at $\delta=\sqrt3-1$ and is below $0.01197$.
\end{proof}

\begin{theorem}[Tight scale-free frontier]\label{thm:scale}
For every fixed $\beta\in(0,1]$ and $\delta\in[0,1]$,
\begin{equation}\label{eq:scale}
\mathcal C_\beta(\delta)=\delta.
\end{equation}
\end{theorem}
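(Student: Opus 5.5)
We prove the two inequalities $\mathcal C_\beta(\delta)\ge\delta$ and $\mathcal C_\beta(\delta)\le\delta$ separately.

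\textbf{Lower bound.} Since $\beta\le1$, every exact-EF1 matching is $\beta$-EF1. So for any $m$ and any instance with $\delta(A)\ge\delta$, Theorem~\ref{thm:round-guarantee} supplies a $\beta$-EF1 matching with
\[
\alpha(X)\ge\Gamma_m(\delta(A))\ge\delta(A)\ge\delta .
\]
Here we use only the high-wage certificate of Lemma~\ref{lem:positive-edge}, which does not depend on $m$. Taking the infimum over $m$ and over instances gives $\mathcal C_\beta(\delta)\ge\delta$. The degenerate cases are immediate: $\delta=0$ is trivial, and $\delta=1$ with the all-zero market has $\alpha=1$ by convention.

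\textbf{Upper bound.} The natural first attempt is the symmetric instance from Theorem~\ref{thm:Hm}: $m$ firms, $m$ identical workers, firm $1$ values each worker at $1$ and the others at $\delta$. This does not survive the weakening to $\beta$-EF1. For $\beta<1$, firm $1$ may keep several workers, and then the welfare ratio $\SW/\OPT$ need not tend to $\delta$. I would therefore scale the number of workers with $m$ and use the $\beta$-EF1 constraint to force firm $1$'s share to vanish.

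Take $n=Nm$ identical workers, with firm $1$ valuing each at $1$ and firms $2,\dots,m$ valuing each at $\delta>0$. Let $s_i=|X_i|$. Comparing firm $i\ne1$ against firm $1$ under $\beta$-EF1 gives
\[
\delta s_i\ge\beta\,\delta(s_1-1),\qquad\text{so}\qquad s_i\ge\beta(s_1-1)\ \text{for every } i\ne1 .
\]
Summing over the $m-1$ other firms,
\[
Nm\ge s_1+(m-1)\beta(s_1-1),\qquad\text{so}\qquad s_1\le\frac{Nm+(m-1)\beta}{1+(m-1)\beta}.
\]
Hence
\[
\frac{\SW(X)}{\OPT}=\frac{s_1+\delta(Nm-s_1)}{Nm}=\delta+(1-\delta)\frac{s_1}{Nm}\le\delta+(1-\delta)\frac{1+\beta(m-1)/(Nm)}{1+(m-1)\beta}.
\]
Fix $N$, say $N=1$, and let $m\to\infty$. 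The fraction on the right tends to $0$, because $\beta>0$ is fixed. By Proposition~\ref{prop:exact}, $\alpha(X)\le\SW(X)/\OPT$ for every $\beta$-EF1 matching, so $\max_{X:\,\beta\text{-EF1}}\alpha(X)\to\delta$, giving $\mathcal C_\beta(\delta)\le\delta$. For $\delta=0$, first perturb the zero values to $\rho>0$ and then let $\rho\downarrow0$, exactly as in the proof of Theorem~\ref{thm:Hm}. The case $\delta=1$ is trivial, since both bounds equal $1$.

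\textbf{Main obstacle.} The delicate step is this upper bound: one must use the $\beta$-EF1 inequality toward firm $1$ quantitatively, not merely rule out empty bundles. The counting bound on $s_1$ above is where that happens. The other thing to check is that the required $\delta(A)\ge\delta$ holds in the instance, which it does since every positive edge has ratio $1$ or $\delta$.
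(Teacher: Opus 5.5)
Your proof is correct and is essentially the paper's. The lower bound is the same: exact EF1 from the round algorithm, plus the $m$-independent positive-edge certificate. With $N=1$, your upper-bound instance is exactly the paper's symmetric instance.

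You should drop one claim, because it is false. It is not true that for $\beta<1$ firm $1$ may keep several workers when there are $m$ workers. If firm $1$ held two or more workers, some firm $i\neq 1$ would be empty. Then $\beta$-EF1 toward firm $1$ would require $0=V_i(X_i)\ge\beta\,\delta(|X_1|-1)$, which fails for every $\beta>0$ when $\delta>0$. So the paper's no-empty-firm argument transfers verbatim to $\beta$-EF1 and gives exactly $\SW/\OPT=\delta+(1-\delta)/m$.

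Your counting bound on $s_1$ is valid and yields the same limit, so the proof stands. Its only extra value is that it also covers instances with $N\ge2$ workers per firm. There, $\beta<1$ genuinely lets firm $1$ hold more than its share $N$. The theorem does not need that generality.
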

\begin{proof}
For the lower bound, Algorithm~\ref{alg:round} outputs exact EF1 and therefore satisfies $\beta$-EF1 for every $\beta\le1$. By the positive-edge certificate, it supports at least a $\delta(A)\ge\delta$ core. Hence $\mathcal C_\beta(\delta)\ge\delta$.

For the upper bound, use the same symmetric instance with $m$ firms and $m$ identical workers: firm $1$ values each worker at $1$, and all other firms value each worker at $\delta$. For $\delta>0$, any $\beta$-EF1 matching with $\beta>0$ must assign one worker to each firm by the same argument as in Theorem~\ref{thm:Hm}. The welfare upper bound therefore gives a stability rate of at most $\delta+(1-\delta)/m$. Letting $m\to\infty$ yields $\mathcal C_\beta(\delta)\le\delta$. For $\delta=0$, use a positive $\rho$ and then let $\rho\downarrow0$.
\end{proof}

\section[Extension I: stronger fairness---EFX guarantees at the delta-core benchmark]{Extension I: stronger fairness---EFX guarantees at the $\delta$-core benchmark}\label{sec:efx}

The scale-free EF1--core frontier is already closed at $\delta$. We therefore use the $\delta$-core as the stability benchmark that should not be weakened and ask how much stronger fairness can be guaranteed within the same financing framework. For nonnegative valuations, we use the $\EFXp$ version in Definition~\ref{def:fairness}, which requires the comparison after removal only for workers $g$ with $a_{ig}>0$. The purpose is to test robustness of the framework, not to claim that the general $\EFXp$ frontier is fully characterized.

Define
\begin{equation}\label{eq:Fefx}
\mathfrak F_{\EFXp}(\delta)=
\inf_{\delta(A)\ge\delta}
\max_{\substack{X:\,\alpha(X)\ge\delta}}
\sup\{\gamma:X\text{ is }\gamma\text{-}\EFXp\}.
\end{equation}

\subsection{Qualified-load algorithm}
For a threshold $\tau\in(0,1]$, define worker $j$'s qualified-firm set by
\begin{equation}\label{eq:eligible}
E_j(\tau)=\{i:a_{ij}\ge\tau M_j\}.
\end{equation}
The $M$-load of a firm is the sum of $M_j$ over its current workers.

\begin{algorithm}[ht]
\caption{Descending-weight minimum-qualified-load algorithm}\label{alg:qualified-load}
\begin{algorithmic}[1]
\State Order all workers with $M_j>0$ by nonincreasing $M_j$
\State Set $X_i\gets\varnothing$ and load $L_i\gets0$ for every firm
\For{each worker $j$ in this order}
  \State Choose a firm $i^*\in E_j(\tau)$ with minimum current $L_i$
  \State Set $X_{i^*}\gets X_{i^*}\cup\{j\}$ and $L_{i^*}\gets L_{i^*}+M_j$
\EndFor
\State Assign all-zero workers arbitrarily
\end{algorithmic}
\end{algorithm}

For observer firm $i$ and worker set $T$, define the qualification benchmark
\begin{equation}\label{eq:benchmark}
B_i^\tau(T)=\sum_{j\in T:\,i\in E_j(\tau)}M_j.
\end{equation}

\begin{lemma}[EFX with respect to the qualification benchmark]\label{lem:benchmark-efx}
The output of Algorithm~\ref{alg:qualified-load} satisfies, for every pair of firms $i,k$ and every $g\in X_k$ with $i\in E_g(\tau)$,
\begin{equation}\label{eq:benchmark-efx}
B_i^\tau(X_i)\ge B_i^\tau(X_k\setminus\{g\}).
\end{equation}
\end{lemma}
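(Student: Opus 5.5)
The plan is a load-comparison argument at the moment firm $k$ receives its last worker that is qualified for the observer $i$. It rests on two facts:
\begin{itemize}
\item the observer's own benchmark equals its final load;
\item the descending order of $M_j$ makes the last qualified worker the cheapest one to remove.
\end{itemize}

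\textbf{Step 1: the observer's benchmark is its load.} Every positive-value worker placed in $X_i$ by Algorithm~\ref{alg:qualified-load} satisfies $i\in E_j(\tau)$, because the firm is chosen from $E_j(\tau)$. All-zero workers have $M_j=0$ and contribute nothing to any benchmark. Hence
\begin{equation*}
B_i^\tau(X_i)=\sum_{j\in X_i}M_j=L_i,
\end{equation*}
where $L_i$ denotes firm $i$'s final load. Also, for any firm $k$, $B_i^\tau(X_k)$ sums $M_j$ only over the workers of $X_k$ that are $i$-qualified, so it is at most the corresponding part of $k$'s load.

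\textbf{Step 2: compare loads when $k$ receives its last $i$-qualified worker.} Fix $i\neq k$ and a worker $g\in X_k$ with $i\in E_g(\tau)$. The case $i=k$ is trivial, since removing $g$ only lowers the right-hand side. Let $h$ be the last worker, in the processing order, that is assigned to $k$ and is $i$-qualified; such an $h$ exists because $g$ is one. Write $L_k^-$ and $L_i^-$ for the loads just before $h$ is assigned.
\begin{itemize}
\item Since $i\in E_h(\tau)$ and the algorithm chose $k$ as a minimum-load firm in $E_h(\tau)$, we get $L_k^-\le L_i^-\le L_i$, because loads only increase.
\item Every $i$-qualified worker of $X_k$ other than $h$ was assigned before $h$, so it is counted in $L_k^-$.
\end{itemize}
Therefore
\begin{equation*}
B_i^\tau(X_k)\le L_k^-+M_h\le L_i+M_h.
\end{equation*}

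\textbf{Step 3: removing $g$ removes at least $M_h$.} The worker $g$ is $i$-qualified and lies in $X_k$, so by the choice of $h$ it was processed no later than $h$. Workers are processed in nonincreasing $M_j$, so $M_g\ge M_h$; ties in the order are harmless, since only the inequality is used. Removing $g$ therefore gives
\begin{equation*}
B_i^\tau(X_k\setminus\{g\})=B_i^\tau(X_k)-M_g\le L_i+M_h-M_g\le L_i=B_i^\tau(X_i),
\end{equation*}
which is \eqref{eq:benchmark-efx}.

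The main point needing care is the choice of the reference time in Step 2. Taking the time of $g$'s own assignment would not control the $i$-qualified workers that $k$ receives after $g$. Taking the last $i$-qualified arrival $h$ controls everything $k$ receives, and the descending order converts the leftover $M_h$ into something dominated by $M_g$.
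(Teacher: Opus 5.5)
Your argument is correct and is essentially the paper's proof. The paper also takes $h$ to be the last $i$-qualified worker assigned to $k$, uses the minimum-load choice to get $L_k^-\le L_i^-\le B_i^\tau(X_i)$, and uses the nonincreasing order to get $M_h\le M_g$. You merely add $M_h$ back and subtract $M_g$, where the paper compares $X_k\setminus\{g\}$ with $X_k\setminus\{h\}$.

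One caveat is shared with the paper. Your claims that ``such an $h$ exists because $g$ is one'' and that $g$ was processed no later than $h$ presuppose that $g$ is handled in the main loop, i.e.\ $M_g>0$. An all-zero worker $g$ lies in every $E_g(\tau)$ and is placed arbitrarily at the end, and for such $g$ the inequality can genuinely fail. For example, take two firms that both value one worker at $1$, and suppose that worker and an all-zero worker are both placed in firm $1$. Then $i=2$, $k=1$, with $g$ the all-zero worker, gives $0\ge1$. So the lemma should be read for $g$ with $M_g>0$. That is all Theorem~\ref{thm:qualified-load} uses, since there $a_{ig}>0$.
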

\begin{proof}
Fix $i,k$ and let $S_{ik}=\{j\in X_k:i\in E_j(\tau)\}$. If $S_{ik}$ is empty, the right-hand side is zero. Otherwise, let $h$ be the last worker in $S_{ik}$ processed by the algorithm. Since $i\in E_h(\tau)$, firm $i$ was also eligible when $h$ was assigned. Because $h$ was assigned to $k$, the loads immediately before assignment satisfy $L_k^-\le L_i^-$.

Before $h$ is processed, firm $k$ has already received all workers in $S_{ik}\setminus\{h\}$. Therefore,
\begin{equation*}
B_i^\tau(X_k\setminus\{h\})\le L_k^-\le L_i^-\le B_i^\tau(X_i).
\end{equation*}
The last inequality holds because workers assigned to firm $i$ are all qualified for $i$, and the final load is at least the load at the time $h$ is assigned. Since workers are processed in nonincreasing $M_j$, the last worker $h$ has the smallest $M$-weight in $S_{ik}$. Thus for any other $g\in S_{ik}$,
\begin{equation*}
B_i^\tau(X_k\setminus\{g\})\le B_i^\tau(X_k\setminus\{h\}).
\end{equation*}
Combining the inequalities proves the claim.
\end{proof}

\begin{theorem}[Polynomial-time $\delta$-$\EFXp$ benchmark]\label{thm:qualified-load}
Set $\tau=\delta(A)$. Algorithm~\ref{alg:qualified-load} runs in polynomial time and outputs a $\delta(A)$-$\EFXp$ matching that supports a $\delta(A)$-core and has welfare ratio at least $\delta(A)$.
\end{theorem}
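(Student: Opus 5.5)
The plan is to reduce everything to Lemma~\ref{lem:benchmark-efx} and Lemma~\ref{lem:positive-edge}. The key observation is that, with $\tau=t:=\delta(A)$, the qualified sets are exactly the positive supports: for every worker with $M_j>0$, $E_j(t)=\{i:a_{ij}>0\}$.

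First dispose of degenerate cases. If the market is all-zero, then $\delta(A)=1$ by convention, no worker is processed in the main loop, and every bundle has value zero. So $1$-$\EFXp$ and the exact core hold trivially; for the core use $x=y=0$, since $v(I,T)=0$.

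Otherwise $t=\min_{a_{ij}>0}a_{ij}/M_j\in(0,1]$, so $\tau=t$ is admissible. Now check the identification of $E_j(t)$.
\begin{itemize}
\item If $a_{ij}>0$, the definition of $\delta(A)$ gives $a_{ij}\ge tM_j$.
\item Conversely, $a_{ij}\ge tM_j>0$ forces $a_{ij}>0$.
\end{itemize}
In particular $E_j(t)\neq\varnothing$ (it contains a maximizer of column $j$), so the algorithm is well defined. It runs in polynomial time: it sorts at most $n$ workers and then, for each, scans $m$ firms.

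For stability and welfare, note that every worker with $M_j>0$ is assigned to a firm in $E_j(t)$, hence along a positive edge with $d_j\ge tM_j$. Lemma~\ref{lem:positive-edge} then gives a $t$-core and $\SW(X)/\OPT\ge t$. All-zero workers contribute nothing to either side.

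For fairness, fix firms $i,k$ and $g\in X_k$ with $a_{ig}>0$. Then $i\in E_g(t)$, so Lemma~\ref{lem:benchmark-efx} applies to $g$. Chain three inequalities.
\begin{enumerate}
\item Every $j\in X_i$ with $M_j>0$ satisfies $i\in E_j(t)$, so $a_{ij}\ge tM_j$. Hence $V_i(X_i)\ge t\,B_i^t(X_i)$.
\item Lemma~\ref{lem:benchmark-efx} gives $B_i^t(X_i)\ge B_i^t(X_k\setminus\{g\})$.
\item Every $j\in X_k\setminus\{g\}$ with $a_{ij}>0$ has $i\in E_j(t)$ and $a_{ij}\le M_j$, while terms with $a_{ij}=0$ contribute nothing. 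Hence $V_i(X_k\setminus\{g\})\le B_i^t(X_k\setminus\{g\})$.
\end{enumerate}
Combining, $V_i(X_i)\ge t\,V_i(X_k\setminus\{g\})$, which is $t$-$\EFXp$.

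The only delicate point is the support identification $E_j(t)=\{i:a_{ij}>0\}$. It is what makes the $M$-weighted benchmark $B_i^t$ simultaneously a lower bound (up to factor $t$) on firm $i$'s own value and an upper bound on its value for others' bundles. With any other threshold, one of these two comparisons breaks. Workers with $M_j=0$ also need a check: they lie outside the benchmark, but they carry zero value for every firm and cannot serve as a $g$ with $a_{ig}>0$, so they are harmless.
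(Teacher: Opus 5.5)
Your proof is correct and follows the paper's argument essentially verbatim. Both identify $E_j(\delta(A))$ with the positive support, use the sandwich $\delta(A)\,B_i^{\delta(A)}(T)\le V_i(T)\le B_i^{\delta(A)}(T)$ and chain it through Lemma~\ref{lem:benchmark-efx} to get $\delta(A)$-$\EFXp$, and invoke Lemma~\ref{lem:positive-edge} for the core and welfare bounds. Your explicit handling of the all-zero market, the nonemptiness of $E_j$, and the restriction to $g$ with $a_{ig}>0$ (which keeps all-zero workers out of the lemma's scope) are small details the paper leaves implicit.
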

\begin{proof}
Let $\delta_A=\delta(A)$. By definition, whenever $a_{ij}>0$ we have $a_{ij}\ge\delta_AM_j$, so $E_j(\delta_A)$ is exactly the positive-value support of worker $j$. For any worker set $T$,
\begin{equation}\label{eq:benchmark-value}
\delta_A B_i^{\delta_A}(T)\le V_i(T)\le B_i^{\delta_A}(T),
\end{equation}
because $\delta_AM_j\le a_{ij}\le M_j$ for qualified workers and firm $i$ has value zero for unqualified workers.

Fix firms $i,k$ and any $g\in X_k$ with $a_{ig}>0$. By Lemma~\ref{lem:benchmark-efx} and~\eqref{eq:benchmark-value},
\begin{align*}
V_i(X_i)
&\ge\delta_A B_i^{\delta_A}(X_i)\\
&\ge\delta_A B_i^{\delta_A}(X_k\setminus\{g\})\\
&\ge\delta_A V_i(X_k\setminus\{g\}).
\end{align*}
Thus the output is $\delta_A$-$\EFXp$. Every worker is assigned within its positive support, so Lemma~\ref{lem:positive-edge} gives a $\delta_A$-core and the same welfare guarantee. Sorting and minimum-load selection are polynomial-time operations.
\end{proof}

\subsection[QREC: raising the EFX guarantee to 1/2]{QREC: raising the EFX guarantee to $1/2$}
For a partial matching $X$, define the firm envy graph $G_X$: there is a directed edge $i\to k$ if $V_i(X_i)<V_i(X_k)$. For worker $g$, let the positive support be $S(g)=\{i:a_{ig}>0\}$. QREC maintains an acyclic envy graph, so the induced graph $G_X[S(g)]$ has a source. Here a source has indegree zero: no firm in the support envies that source firm's bundle.

\begin{algorithm}[ht]
\caption{QREC: support-source assignment, reset, and envy-cycle elimination}\label{alg:qrec}
\begin{algorithmic}[1]
\State Set $X_i\gets\varnothing$ for all firms and $U\gets\{j:M_j>0\}$
\While{$U\neq\varnothing$}
  \State Take a worker $g$ from $U$ and choose a source firm $k$ in $G_X[S(g)]$
  \State Remove $g$ from $U$ and tentatively set $X_k\gets X_k\cup\{g\}$
  \If{the new matching violates $1/2$-$\EFXp$}
    \State Choose a newly violating firm $i$ with respect to $X_k$
    \State Remove $g$ from $X_k$, return all workers in the old $X_i$ to $U$
    \State Reset $X_i\gets\{g\}$
  \EndIf
  \While{the envy graph contains a directed cycle $i_1\to i_2\to\cdots\to i_q\to i_1$}
    \State Rotate complete bundles along the cycle so that each $i_s$ receives the old bundle of $i_{s+1}$ that it envies
    \State Return to $U$ any worker whose new current employer values that worker at zero
  \EndWhile
\EndWhile
\State Assign all-zero workers arbitrarily
\end{algorithmic}
\end{algorithm}

\begin{lemma}[Support-source reset]\label{lem:qrec-reset}
Suppose QREC is at an acyclic state satisfying EF1 and $1/2$-$\EFXp$, and worker $g$ is tentatively added to a source firm $k$. If this creates a new $1/2$-$\EFXp$ violation by firm $i$, then $a_{ig}>V_i(X_i)$. Resetting firm $i$ to the singleton $\{g\}$ and returning its old bundle does not violate EF1 or $1/2$-$\EFXp$.
\end{lemma}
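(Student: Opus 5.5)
Our plan has two parts. First we extract $a_{ig}>V_i(X_i)$ from the violation. Then we check every ordered pair in the reset matching $X'$ with $X'_i=\{g\}$, the old $X_i$ returned to $U$, and all other bundles unchanged.

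\textbf{The violation forces $a_{ig}>V_i(X_i)$.} Before the addition, firm $i$ satisfied $1/2$-$\EFXp$ toward $X_k$. Since $k$ is a source of $G_X[S(g)]$ and $i$ is newly violating, we first show $a_{ig}>0$. If $a_{ig}=0$, then $V_i(X_k\cup\{g\}\setminus\{h\})=V_i(X_k\setminus\{h\})$ for every $h\in X_k$, and removing $g$ itself gives $V_i(X_k)$. Every inequality needed for $X_k\cup\{g\}$ then already held, so no new violation could arise. Hence $i\in S(g)$. Because $k$ is a source, $i$ does not envy $k$, that is, $V_i(X_i)\ge V_i(X_k)$. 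The violation gives some $h\in X_k\cup\{g\}$ with $a_{ih}>0$ and $V_i(X_i)<\tfrac12 V_i(X_k\cup\{g\}\setminus\{h\})$. The case $h=g$ contradicts non-envy. So $h\in X_k$, and
\[
2V_i(X_i)<V_i(X_k)-a_{ih}+a_{ig}\le V_i(X_i)+a_{ig},
\]
which yields $a_{ig}>V_i(X_i)$, as claimed.

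\textbf{Checking the reset matching $X'$.} We take the ordered pairs in turn.
\begin{enumerate}[label=(\roman*),leftmargin=2.4em]
\item \emph{Others toward $i$.} The bundle $\{g\}$ is a singleton. EF1 holds by removing $g$. For $\EFXp$, removing the only positive-value element leaves the empty set, so the condition is trivial.
\item \emph{Firm $i$ toward $k$.} Now $X'_k=X_k$ and $V_i(X'_i)=a_{ig}>V_i(X_i)$. Firm $i$'s valuation of its own bundle strictly increases, and the old inequalities toward $X_k$ held, so both EF1 and $1/2$-$\EFXp$ are preserved.
\item \emph{Firm $i$ toward any other firm $r$.} This is the same argument, since $i$'s own value rose and $X_r$ is unchanged.
\item \emph{Pairs not involving $i$.} Bundles are unchanged from the original state, so the conditions are inherited.
\end{enumerate}
The only bundle that changed in a way visible to others is $X'_i$, which is handled by (i).

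\textbf{Main obstacle.} The delicate step is the first: ruling out $h=g$ and the case $a_{ig}=0$ using the source property. We expect the rest to be bookkeeping. The bookkeeping has one hidden subtlety. Because $X_k$ returns to its old value, we must make sure the argument never relies on $k$ having kept $g$.
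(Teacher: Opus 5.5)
Your proof is correct and follows the paper's argument essentially step for step. You rule out $a_{ig}=0$, use the source property to get $V_i(X_k)\le V_i(X_i)$, and note that the violating removal must be an old worker $h\in X_k$, so that $2V_i(X_i)<V_i(X_k)-a_{ih}+a_{ig}\le V_i(X_i)+a_{ig}$. You then check the reset through the trivial singleton target $\{g\}$, the restored bundle $X_k$, and firm $i$'s strictly larger own value. One small point: in the $a_{ig}=0$ case you need not mention removing $g$, since $\EFXp$ only quantifies over workers the observer values positively, and the corresponding inequality $V_i(X_i)\ge\frac12 V_i(X_k)$ need not hold there.
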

\begin{proof}
All fairness constraints hold before adding $g$. After the addition, all target bundles except $X_k\cup\{g\}$ remain unchanged, and firm $k$'s own value only increases. Thus a new violation can only come from some $i\neq k$ comparing itself with the new target bundle. If $a_{ig}=0$, firm $i$'s value for the target bundle does not change, so no new violation is possible. Hence $a_{ig}>0$ and $i\in S(g)$.

Because $k$ is a source in $G_X[S(g)]$, there is no envy edge from $i$ to $k$, and therefore $V_i(X_k)\le V_i(X_i)$ before $g$ is added. Removing the new worker $g$ restores the old bundle $X_k$, so deleting $g$ cannot itself be the cause of a $1/2$-$\EFXp$ violation. Therefore some old worker $h\in X_k$ must satisfy
\begin{align*}
V_i(X_i)
&<\frac12\bigl(V_i(X_k)-a_{ih}+a_{ig}\bigr)\\
&\le\frac12\bigl(V_i(X_i)+a_{ig}\bigr),
\end{align*}
which implies $a_{ig}>V_i(X_i)$.

After the reset, firm $i$'s own value strictly increases. Firm $k$ returns to its pre-addition bundle, so all fairness comparisons with target $k$ are restored. For any other firm comparing itself with the singleton target $\{g\}$, deleting $g$ leaves an empty bundle, so both EF1 and $1/2$-$\EFXp$ hold automatically. Since firm $i$'s own value increases, all of its previous fairness constraints as an observer become easier. Thus the reset preserves both fairness properties.
\end{proof}

\begin{theorem}[QREC joint guarantee]\label{thm:qrec}
Algorithm~\ref{alg:qrec} terminates after finitely many steps and outputs a complete matching that assigns every nonzero worker along a positive edge, satisfies exact EF1 and $1/2$-$\EFXp$, supports a $\delta(A)$-core, and has welfare ratio at least $\delta(A)$. If all valuations are integers, the algorithm has a pseudopolynomial bound in $\OPT$ on the number of structural updates.
\end{theorem}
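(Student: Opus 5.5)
The plan is to split the statement into an invariant part and a termination part. The invariant is that the current partial matching (ignoring workers in $U$) is acyclic, EF1, $1/2$-$\EFXp$, and has every assigned worker on a positive edge. This holds trivially at the empty start. The terminal claims then follow immediately. Once $U=\varnothing$, every nonzero worker sits on a positive edge. Appending the all-zero workers does not change any value $V_i(\cdot)$, nor any EF1 or $\EFXp$ comparison, since $\EFXp$ only quantifies over $g$ with $a_{ig}>0$ and EF1 may remove a zero worker or a positive one as before. Lemma~\ref{lem:positive-edge} then yields the $\delta(A)$-core and the welfare ratio $\ge\delta(A)$.

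\textbf{Invariant preservation, one step at a time.}
\begin{enumerate}[label=(\roman*)]
\item \emph{Tentative addition to a support source $k$.} If no $1/2$-$\EFXp$ violation arises, I check EF1 directly. The only changed target is $X_k\cup\{g\}$. For $i\in S(g)$, source-ness gives $V_i(X_k)\le V_i(X_i)$, so removing $g$ witnesses EF1. For $i\notin S(g)$, the value $a_{ig}=0$ leaves the old comparison intact. Firm $k$ itself only gains.
\item \emph{Reset.} If a violation arises, Lemma~\ref{lem:qrec-reset} gives preservation of EF1 and $1/2$-$\EFXp$ after the reset. Positivity holds because $a_{ig}>V_i(X_i)\ge0$.
\item \emph{Envy-cycle rotation.} Each firm on the cycle receives a bundle it strictly preferred, so observer values weakly increase. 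The multiset of bundles is unchanged, so all fairness constraints are preserved. Returning zero-valued workers to $U$ shrinks target bundles, which only relaxes EF1 and $\EFXp$ for observers, and does not lower the owner's value. Each rotation strictly raises the sum of owners' values, so the inner loop terminates with an acyclic graph.
\end{enumerate}
Acyclicity after steps (i) and (ii) is restored by this loop, so the invariant holds at every point where a source is chosen.

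\textbf{Termination, the main obstacle.} Returned workers re-enter $U$, so progress is not monotone in $|U|$. I would use the potential $\Psi=\sum_i V_i(X_i)$ and show it increases in every structural update except plain additions, which are bounded between potential increases by $|U|\le n$.
\begin{enumerate}[label=(\roman*)]
\item \emph{Plain addition.} This raises $V_k$ by $a_{kg}>0$.
\item \emph{Reset.} Firm $i$'s value rises strictly, from $V_i(X_i)$ to $a_{ig}$. Firm $k$ is unchanged relative to before the tentative step.
\item \emph{Rotation.} This raises $V$ strictly for the cycle firms and leaves the others fixed. Zero-value returns do not decrease owner values.
\end{enumerate}

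The subtle point is that $\Psi$ counts each firm's value of its own bundle, which is bounded by $\OPT$ because each worker is counted once at value at most $M_j$. So $\Psi$ is a bounded quantity that strictly increases at each non-addition update, and additions also increase it. Hence the number of updates is finite: under integer valuations every update raises $\Psi$ by at least $1$, giving at most $\OPT$ updates plus the inner rotation steps (each also integer-increasing). That yields the pseudopolynomial bound.

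For general real valuations I would argue finiteness by observing that $\Psi$ takes values in a finite set, namely sums over matchings of the finitely many $a_{ij}$. A strictly increasing sequence in a finite set is finite.

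The step I expect to need the most care is (ii) and (iii) of the potential argument. There I must confirm that no step decreases some other firm's own value: after a reset, the old $X_i$ returns to $U$ but only firm $i$'s bundle changes, and firm $i$ gains.
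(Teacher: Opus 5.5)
Your proposal is correct and follows essentially the same route as the paper. You maintain the same invariants (EF1, $1/2$-$\EFXp$, positive edges, acyclic envy graph) through source addition, the reset lemma, and cycle rotation, and you prove termination with the same welfare potential $\sum_i V_i(X_i)\le\OPT$. You compare that potential before and after each tentative-addition/reset pair, just as the paper compares welfare at the start and end of each outer iteration.

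Two small cleanups: say explicitly that the addition goes along a positive edge because $k\in S(g)$, and drop the opening remark that exempts plain additions, since you later show correctly that they also increase the potential.
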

\begin{proof}
We maintain four invariants at the start of each outer iteration: the current partial matching satisfies exact EF1 and $1/2$-$\EFXp$; every assigned nonzero worker lies on a positive edge; and the envy graph is a DAG. The empty matching clearly satisfies these properties.

Consider worker $g$. Since $G_X[S(g)]$ is a DAG, it has a source firm $k$. Tentatively add $g$ to $k$. For any observer firm $i$, if $a_{ig}=0$, its value for the target bundle does not change. If $a_{ig}>0$, then $i\in S(g)$ and the source property gives $V_i(X_i)\ge V_i(X_k)$ before the addition; after deleting the new worker $g$, no envy remains. Thus direct addition always preserves EF1. Since $k\in S(g)$, the assignment is along a positive edge.

If no $1/2$-$\EFXp$ violation appears, both fairness properties remain valid. If a violation appears, Lemma~\ref{lem:qrec-reset} allows us to reset a violating firm $i$ to $\{g\}$ and return its old bundle. Both fairness properties remain valid, and $a_{ig}>0$, so positive-edge eligibility is preserved.

Next eliminate envy cycles. Suppose the envy graph contains a directed cycle $i_1\to i_2\to\cdots\to i_q\to i_1$. Rotate bundles so that each firm $i_s$ receives the old bundle of $i_{s+1}$ that it strictly envies. Every firm on the cycle strictly increases its own value; firms off the cycle keep their own bundles; and the multiset of target bundles is only permuted. Therefore no EF1 or $1/2$-$\EFXp$ comparison becomes worse. If a worker is moved to an employer that values that worker at zero, return the worker to $U$. This does not reduce the current employer's own value and only shrinks a target bundle, so fairness becomes weakly easier; all remaining assigned workers still lie on positive edges. Repeating the cycle elimination restores a DAG and all four invariants.

For termination, compare social welfare at the start and end of each outer iteration. If worker $g$ is accepted without a reset, welfare increases by the positive amount $a_{kg}>0$. If a reset occurs, firm $k$ returns to its pre-iteration bundle, while firm $i$ changes from value $V_i(X_i)$ to $a_{ig}>V_i(X_i)$, so welfare strictly increases relative to the start of the iteration. Every envy-cycle rotation also strictly increases the own value of every firm on the cycle, while returning zero-valued workers does not change welfare. Hence welfare strictly increases after every outer iteration. The number of partial matching states is finite, so the algorithm cannot run forever. With integer valuations, every strict welfare increase is at least $1$ and $\SW\le\OPT$, giving a pseudopolynomial bound of order $\OPT$ on structural updates.

At termination, every worker with $M_j>0$ is assigned on a positive edge. The invariants give exact EF1 and $1/2$-$\EFXp$. Lemma~\ref{lem:positive-edge} then gives a $\delta(A)$-core and the same welfare guarantee. Adding all-zero workers does not affect values or fairness.
\end{proof}

\begin{theorem}[EFX lower bound at the optimal scale-free core benchmark]\label{thm:efx-frontier}
For every $\delta\in(0,1]$,
\begin{equation}\label{eq:efx-frontier}
\mathfrak F_{\EFXp}(\delta)\ge\max\{\delta,1/2\}.
\end{equation}
Moreover, $\mathfrak F_{\EFXp}(1)=1$.
\end{theorem}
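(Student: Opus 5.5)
The plan is to bound $\mathfrak F_{\EFXp}(\delta)$ from below by exhibiting, for every instance with $\delta(A)\ge\delta$, a single matching that is feasible for the inner maximization in~\eqref{eq:Fefx}, meaning $\alpha(X)\ge\delta$, and that also has the required $\EFXp$ level. Two constructions are already available.

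\emph{First construction.} Theorem~\ref{thm:qualified-load} gives a matching that is $\delta(A)$-$\EFXp$ and supports a $\delta(A)$-core. Since $\delta(A)\ge\delta$, this matching satisfies $\alpha(X)\ge\delta$. It is also $\gamma$-$\EFXp$ with $\gamma=\delta(A)\ge\delta$, because $\gamma$-$\EFXp$ is monotone in $\gamma$: a larger $\gamma$ makes the inequality harder. So the inner supremum is at least $\delta$.

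\emph{Second construction.} Theorem~\ref{thm:qrec} gives a matching that is $1/2$-$\EFXp$ and again supports a $\delta(A)$-core, hence also $\alpha(X)\ge\delta$. The inner supremum is therefore at least $1/2$.

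Taking the better of the two inside the maximum and then the infimum over instances yields~\eqref{eq:efx-frontier}. Two routine points must be checked here.
\begin{enumerate}[label=(\roman*)]
\item The all-zero market has $\delta(A)=1$ by convention. Every matching is then trivially $1$-$\EFXp$, since no worker $g$ has $a_{ig}>0$. It also supports the exact core with zero payments.
\item Both algorithms are well defined for every instance, which Theorems~\ref{thm:qualified-load} and~\ref{thm:qrec} already guarantee.
\end{enumerate}

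For $\mathfrak F_{\EFXp}(1)=1$, the lower bound $\ge1$ follows from the first construction at $\delta=1$. The upper bound $\le 1$ is immediate: $\gamma\in[0,1]$ by Definition~\ref{def:fairness}, so the supremum never exceeds $1$. It is worth noting, as a sanity check, that $\delta(A)=1$ means every positive edge attains $M_j$. The qualified-load output then assigns every worker at its maximum value, so $d_j=M_j$ and it supports the exact core by Proposition~\ref{prop:exact}. It is also exactly $\EFXp$, because $V_i$ coincides with the benchmark $B_i^1$ and Lemma~\ref{lem:benchmark-efx} applies directly.

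The main obstacle, if any, is a definitional subtlety rather than a calculation. The inner $\max$ over matchings with $\alpha(X)\ge\delta$ must be nonempty and attained, and the supremum over $\gamma$ must be attained or at least bounded below correctly. The feasible set is finite, which gives attainment of the $\max$. The set of valid $\gamma$ is a closed interval $[0,\gamma^*]$, since the defining inequalities are non-strict and finitely many. Hence the supremum equals the largest valid $\gamma$, and the exhibited matchings give the lower bounds directly.
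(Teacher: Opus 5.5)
Your proposal is correct and follows the paper's proof. For each instance you keep whichever output has the larger fairness factor: the qualified-load output (Theorem~\ref{thm:qualified-load}: $\delta(A)$-$\EFXp$ with a $\delta(A)$-core) or the QREC output (Theorem~\ref{thm:qrec}: $1/2$-$\EFXp$ with a $\delta(A)$-core). At $\delta=1$ the qualified-load output is exact $\EFXp$ with the exact core. Your extra checks are all correct and are points the paper leaves implicit: the upper bound $\mathfrak F_{\EFXp}(1)\le1$ from $\gamma\le1$, the all-zero market, monotonicity in $\gamma$, and attainment of the inner $\max$ and $\sup$.
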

\begin{proof}
For an instance with $\delta(A)\ge\delta$, Algorithm~\ref{alg:qualified-load} produces at least a $\delta$-$\EFXp$ matching supporting at least a $\delta$-core. Algorithm~\ref{alg:qrec} produces a $1/2$-$\EFXp$ matching that supports a $\delta(A)$-core and hence also a $\delta$-core. Selecting, for each instance, the output with the larger fairness factor gives the uniform lower bound $\max\{\delta,1/2\}$. At $\delta=1$, set $\tau=1$ in the qualified-load algorithm. The two inequalities in~\eqref{eq:benchmark-value} become equalities, giving exact $\EFXp$ together with the exact core.
\end{proof}

For general $0<\delta<1$, the exact value of $\mathfrak F_{\EFXp}(\delta)$ remains open. This should be distinguished from the closed scale-free EF1--core frontier.

\section[Extension II: capacities and Top-r financing]{Extension II: capacities and Top-$r$ financing}\label{sec:capacity}

Capacity constraints provide a second robustness test. They change not only the feasible matchings but also the deviation set: a firm can recruit at most $r_i$ workers in a coalition deviation. The stabilization representation must therefore replace the sum of all positive deviation terms by the sum of the largest at most $r_i$ terms. Candidate workers still come from the entire set $\W$ and may include current workers; we do not assume that current workers automatically have zero positive terms in the capacity model. In the unconstrained model, after a firm exits a round it can face at most $m-1$ competitive directions. With capacity $r_i$, the effective propagation multiplicity is truncated at $q_i=\min\{r_i,m-1\}$. The capacity results therefore test the same main mechanism: how branching changes the firm-level stabilization bottleneck.

Firm $i$ now has capacity $r_i\in\mathbb Z_+$. A capacity-feasible matching $\mu$ satisfies $|\mu(i)|\le r_i$; unmatched workers are treated as assigned to a zero-value dummy node. Let $\OPTr$ denote optimal welfare under capacities. For a coalition of firms $I$ and worker set $T$, the capacity-constrained deviation value is the maximum total value obtained by assigning workers in $T$ to firms in $I$, with each firm receiving at most $r_i$ workers.

\begin{definition}[Capacity EF1]
A capacity-feasible matching $\mu$ satisfies capacity EF1 if, for every pair of real firms $i,k$, there exists $g\in\mu(k)$ such that
\begin{equation}\label{eq:cap-ef1}
V_i(\mu(i))\ge
\max_{\substack{T\subseteq\mu(k)\setminus\{g\}\\|T|\le r_i}}V_i(T).
\end{equation}
\end{definition}

For a nonnegative vector $u=(u_j)$, let $\Top_r(u)$ be the sum of its largest at most $r$ coordinates. For a fixed capacity matching $\mu$, set $d_j=a_{\mu(j)j}$, with $d_j=0$ for unmatched workers, and let $B_i=V_i(\mu(i))$. Define
\begin{equation}\label{eq:cap-D}
D_i^{(r)}(\mu,z)=\sum_{j\in\mu(i)}z_j+
\Top_{r_i}\bigl((\pos{a_{ij}-z_j})_{j\in\W}\bigr).
\end{equation}

\begin{theorem}[Capacity financing--stability identity]\label{thm:cap-identity}
Let
\begin{equation}\label{eq:cap-Phi}
\Phi_r(\mu)=\min_{z\ge0}\max_i\frac{D_i^{(r)}(\mu,z)}{B_i}.
\end{equation}
Then the optimal core factor of the fixed capacity matching is
\begin{equation}\label{eq:cap-id}
\alpha_r(\mu)=\frac{1}{\Phi_r(\mu)}.
\end{equation}
\end{theorem}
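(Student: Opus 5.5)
The plan mirrors the proof of Theorem~\ref{thm:identity}, with Lemma~\ref{lem:deficit} replaced by a capacity analogue. The first step is to establish a \emph{capacity single-firm deviation condition}. Under local financing $x_i+\sum_{j\in\mu(i)}y_j=B_i$, the outcome $(\mu,x,y)$ lies in the capacity $\alpha$-core if and only if, for every firm $i$,
\begin{equation*}
x_i\ge\max_{T\subseteq\W,\ |T|\le r_i}\sum_{j\in T}(\alpha a_{ij}-y_j)=\Top_{r_i}\bigl((\pos{\alpha a_{ij}-y_j})_{j\in\W}\bigr).
\end{equation*}
The equality holds because the objective is separable: the optimal $T$ keeps the largest at most $r_i$ positive terms. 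Necessity comes from the singleton coalition $I=\{i\}$. In that case the capacity deviation value of $T$ is the best sub-team of size at most $r_i$, so the constraint for $T$ with $|T|\le r_i$ reads exactly $x_i+y(T)\ge\alpha V_i(T)$. No restriction $T\cap\mu(i)=\varnothing$ is imposed, consistent with the remark on the capacity boundary.

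Sufficiency is the step I expect to be the only real obstacle. Take any coalition $(I,T)$ and an optimal capacity-feasible assignment of $T$ into $I$. This assignment gives disjoint sets $T_i\subseteq T$ with $|T_i|\le r_i$ and $v_r(I,T)=\sum_{i\in I}V_i(T_i)$. Workers of $T$ left unassigned contribute $y_j\ge0$, which only helps. Then
\begin{equation*}
\alpha v_r(I,T)-y(T)\le\sum_{i\in I}\sum_{j\in T_i}(\alpha a_{ij}-y_j)\le\sum_{i\in I}\Top_{r_i}\bigl((\pos{\alpha a_{ij}-y_j})_j\bigr)\le x(I).
\end{equation*}
The middle inequality uses $|T_i|\le r_i$. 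This replaces the argmax partition used in Lemma~\ref{lem:deficit}; it works because the deviation value is defined through a capacity-respecting assignment. Unmatched workers are treated via the dummy node, so they appear in no firm's financing constraint and contribute only their nonnegative wages.

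Next comes the normalization. For $\alpha>0$, write $y_j=\alpha z_j$. Positive homogeneity of $\Top_{r_i}$ and of $\pos{\cdot}$ gives $\Top_{r_i}(\pos{\alpha a_{ij}-y_j})=\alpha\Top_{r_i}(\pos{a_{ij}-z_j})$. Substituting $x_i=B_i-\alpha\sum_{j\in\mu(i)}z_j$ turns the condition into $B_i\ge\alpha D_i^{(r)}(\mu,z)$ for all $i$. Nonnegativity of $x_i$ is implied, since $D_i^{(r)}\ge\sum_{j\in\mu(i)}z_j$. Hence $\mu$ supports an $\alpha$-core if and only if some $z\ge0$ has $\max_i D_i^{(r)}/B_i\le1/\alpha$, with the same zero-output conventions as before.

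The last step is to take the supremum over $\alpha$, which gives $\alpha_r(\mu)=1/\Phi_r(\mu)$. Here I would briefly justify that the minimum in \eqref{eq:cap-Phi} is attained. Each $D_i^{(r)}$ is continuous and piecewise linear in $z$. Raising any $z_j$ above $M_j$ does not lower the objective: the gap terms are already zero and the wage term increases, so $z$ can be restricted to the compact box $[0,M]$. Alternatively, one can write a Top-$r$ LP using the standard dual representation $\Top_r(u)=\min_{\theta\ge0}\{r\theta+\sum_j\pos{u_j-\theta}\}$.
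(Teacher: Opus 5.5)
Your proposal is correct and follows the paper's own proof essentially step for step. That is, the Top-$r_i$ single-firm condition (necessity from the singleton coalition, sufficiency by splitting a capacity-feasible coalition deviation into disjoint sets $T_i$ with $|T_i|\le r_i$), then the substitution $y_j=\alpha z_j$ with local financing to get $B_i\ge\alpha D_i^{(r)}(\mu,z)$, then taking reciprocals; your extra remarks (that $x_i\ge0$ is automatic since $D_i^{(r)}\ge\sum_{j\in\mu(i)}z_j$, that workers of $T$ left unassigned only add nonnegative wages, and that the minimum defining $\Phi_r(\mu)$ is attained on the box $0\le z\le M$) fill in details the paper leaves implicit.
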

\begin{proof}
Fix firm $i$ and actual wages $y$. Because of capacity $r_i$, a singleton deviation by firm $i$ can select at most $r_i$ workers from the full set $\W$, including current workers. The most profitable set therefore consists of the largest at most $r_i$ values among $\pos{\alpha a_{ij}-y_j}$. The single-firm core condition is equivalent to
\begin{equation*}
x_i\ge\Top_{r_i}\bigl((\pos{\alpha a_{ij}-y_j})_j\bigr).
\end{equation*}
These conditions are still sufficient for all multi-firm coalition constraints. Any capacity-feasible coalition deviation partitions its workers into disjoint sets $T_i$ with $|T_i|\le r_i$; summing the firm inequalities covers the coalition value.

Write $y_j=\alpha z_j$ and use local financing
$x_i=B_i-\alpha\sum_{j\in\mu(i)}z_j$. We obtain
\begin{equation*}
B_i\ge\alpha\left[\sum_{j\in\mu(i)}z_j+
\Top_{r_i}\bigl((\pos{a_{ij}-z_j})_j\bigr)\right]
=\alpha D_i^{(r)}(\mu,z).
\end{equation*}
The remainder is identical to Theorem~\ref{thm:identity}: optimizing over wages gives the minimum worst stabilization ratio $\Phi_r$, and the best stability rate is its reciprocal.
\end{proof}

Define capacity rematching pressure by
\begin{equation}\label{eq:cap-pressure}
P_i^{(r_i)}(\mu)=
\Top_{r_i}\bigl((\pos{a_{ij}-d_j})_{j\in\W}\bigr),\qquad
\kappa_r(\mu)=\max_i\frac{P_i^{(r_i)}(\mu)}{B_i}.
\end{equation}

\begin{lemma}[Capacity proportional-wage bridge]\label{lem:cap-bridge}
Every capacity matching $\mu$ supports a $1/(1+\kappa_r(\mu))$-core. Moreover,
\begin{equation}\label{eq:cap-welfare-bridge}
\frac{\SW(\mu)}{\OPTr}\ge\frac{1}{1+\kappa_r(\mu)}.
\end{equation}
\end{lemma}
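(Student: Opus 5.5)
We mirror the proof of Lemma~\ref{lem:kappa-bridge}, using the single-firm characterization from the proof of Theorem~\ref{thm:cap-identity} in place of Lemma~\ref{lem:deficit}. Write $\kappa=\kappa_r(\mu)$ and set $\alpha=1/(1+\kappa)$. Choose proportional wages $y_j=\alpha d_j$ (so unmatched workers receive $y_j=0$) and profits $x_i=(1-\alpha)B_i$. Local financing holds because $\sum_{j\in\mu(i)}\alpha d_j=\alpha B_i$. For firm $i$, homogeneity of $\Top_{r_i}$ under nonnegative scaling gives
\begin{equation*}
\Top_{r_i}\bigl((\pos{\alpha a_{ij}-\alpha d_j})_{j}\bigr)=\alpha\,\Top_{r_i}\bigl((\pos{a_{ij}-d_j})_{j}\bigr)=\alpha P_i^{(r_i)}(\mu)\le\alpha\kappa B_i .
\end{equation*}
Since $1-\alpha=\alpha\kappa$, we get $x_i\ge\alpha\kappa B_i$, so the profit covers the Top-$r_i$ deviation benchmark. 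By the sufficiency argument in Theorem~\ref{thm:cap-identity}, these single-firm inequalities imply every capacity-feasible coalition constraint. Hence $\mu$ supports an $\alpha$-core. Equivalently, in financing form one takes $z=d$: the current-wage term equals $B_i$, the Top term equals $P_i^{(r_i)}$, so $\Phi_r(\mu)\le1+\kappa$, and \eqref{eq:cap-id} gives the claim.

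For the welfare bridge~\eqref{eq:cap-welfare-bridge}, we first establish the capacity analogue of Proposition~\ref{prop:exact}, namely $\alpha_r(\mu)\le\SW(\mu)/\OPTr$. Apply the $\alpha$-core inequality to the grand coalition of all real firms and all workers. The capacity-constrained deviation value of $(\F,\W)$ is by definition $\OPTr$, while local financing gives $x(\F)+y(\W)=\SW(\mu)$. Here we need $y_j=0$ for unmatched workers, or more generally that unmatched workers' wages are not financed by any firm. With the dummy-node convention, an unmatched worker has $d_j=0$, and we treat the wage paid by the dummy as zero (the dummy has zero output). So $\SW(\mu)\ge\alpha\OPTr$ for every supportable $\alpha$, and combining with the first part yields~\eqref{eq:cap-welfare-bridge}.

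The only step needing care is this bookkeeping for unmatched workers and the dummy node. We must check two things. First, the proportional wages give them $y_j=\alpha\cdot0=0$, so no financing is lost and the grand-coalition sum is exactly $\SW(\mu)$. Second, the sufficiency claim in Theorem~\ref{thm:cap-identity} applies to deviations involving such workers. The latter holds because deviations recruit workers from all of $\W$ and the Top-$r_i$ expression already ranges over $\W$. Everything else is a routine substitution.
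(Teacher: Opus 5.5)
Your proof is correct. The first half is exactly the paper's argument: proportional wages $y_j=\alpha d_j$, profits $(1-\alpha)B_i$, homogeneity of $\Top_{r_i}$, and $1-\alpha=\alpha\kappa$.

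For the welfare inequality you take a different route. The paper argues purely combinatorially. It groups the positive gains of a capacity-optimal matching $\mu^*$ by destination firm and uses $|\mu^*(i)|\le r_i$ to get $\OPTr-\SW(\mu)\le\sum_iP_i^{(r_i)}(\mu)\le\kappa_r\SW(\mu)$, with no reference to payments. You instead apply the core inequality of your constructed outcome to the grand coalition $(\F,\W)$. In effect you prove a capacity analogue of Proposition~\ref{prop:exact} and combine it with the first half. If you unwind the sufficiency step of Theorem~\ref{thm:cap-identity} for $(\F,\W)$, the partition it uses is precisely $\mu^*$. So your chain is the paper's inequality multiplied by $\alpha$, and the two are equally rigorous.

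Your packaging gives, as a by-product, the general upper bound $\alpha_r(\mu)\le\SW(\mu)/\OPTr$, which the paper never states for capacities. The paper's version has the advantage of not depending on any payment convention. The bookkeeping you flag is genuinely essential for that general bound: it fails if unmatched workers' wages are left unfinanced. For example, take $r_1=r_2=1$, $a_{11}=a_{22}=1$, all other entries $0$, and leave worker $2$ unmatched. Then wages $y_1=y_2=1$ support the exact core, although $\SW/\OPTr=1/2$. For the lemma itself, however, you only use the grand-coalition inequality at your constructed outcome. There $y_j=\alpha d_j=0$ for unmatched workers automatically, so nothing is lost.
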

\begin{proof}
Let $\alpha=1/(1+\kappa_r)$, set $y_j=\alpha d_j$, and set $x_i=(1-\alpha)B_i$. Firm $i$'s largest capacity-constrained gap is
\begin{equation*}
\Top_{r_i}\bigl((\pos{\alpha a_{ij}-\alpha d_j})_j\bigr)
=\alpha P_i^{(r_i)}(\mu)
\le\alpha\kappa_rB_i=(1-\alpha)B_i=x_i.
\end{equation*}
The capacity single-firm conditions therefore give an $\alpha$-core.

For welfare, let $\mu^*$ be a capacity-optimal matching. Group positive welfare gains relative to $\mu$ by their destination firms in $\mu^*$. Firm $i$ receives at most $r_i$ workers in $\mu^*$, so its total positive gain is at most $P_i^{(r_i)}(\mu)$. Hence
\begin{align*}
\OPTr-\SW(\mu)
&\le\sum_iP_i^{(r_i)}(\mu)\\
&\le\kappa_r\sum_iB_i
=\kappa_r\SW(\mu).
\end{align*}
Rearranging gives~\eqref{eq:cap-welfare-bridge}.
\end{proof}

\begin{algorithm}[ht]
\caption{Capacity-constrained maximum-edge round algorithm}\label{alg:capacity-round}
\begin{algorithmic}[1]
\State Set $\mu(i)\gets\varnothing$, remaining capacity $c_i\gets r_i$, and $U\gets\W$
\While{there exist $i,j$ with $c_i>0$, $j\in U$, and $a_{ij}>0$}
  \State Start a new round and set $S\gets\{i:c_i>0\}$
  \While{there exist $i\in S$ and $j\in U$ with $a_{ij}>0$}
    \State Choose a current maximum positive edge $(i^*,j^*)$
    \State Assign $j^*$ to $i^*$, set $c_{i^*}\gets c_{i^*}-1$, and remove $j^*$ from $U$
    \State Remove $i^*$ from the current round: $S\gets S\setminus\{i^*\}$
  \EndWhile
\EndWhile
\State Leave remaining workers unmatched or assign them to the dummy node
\end{algorithmic}
\end{algorithm}

\begin{lemma}[Common removal witness with capacities]\label{lem:cap-witness}
Let $p_k^1$ be the first positive-value worker received by firm $k$. For every pair of firms $i,k$ and every set $T\subseteq\mu(k)\setminus\{p_k^1\}$ with $|T|\le r_i$,
$V_i(T)\le V_i(\mu(i))$. Hence Algorithm~\ref{alg:capacity-round} outputs a capacity-EF1 matching.
\end{lemma}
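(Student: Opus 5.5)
We adapt the round-by-round argument of Lemma~\ref{lem:common-witness}, with one change: the comparison with firm $i$'s own bundle must be organized as an injection of at most $r_i$ items rather than a sum over all rounds. Fix firms $i,k$ and index rounds globally. Let $p_k^1,p_k^2,\ldots$ be the positive-value workers firm $k$ receives in rounds $1,2,\ldots$. Since $k$ receives at most one worker per round, $\mu(k)\setminus\{p_k^1\}$ consists of workers $p_k^{q+1}$, each received in a round $q+1\ge2$. Zero-valued or dummy workers contribute nothing to $V_i$. Fix $T\subseteq\mu(k)\setminus\{p_k^1\}$ with $|T|\le r_i$; we must show $V_i(T)\le V_i(\mu(i))$.

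First, the pairing inequality. Consider $p_k^{q+1}\in T$ and look at round $q$. There are three cases.

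\emph{Case (a): firm $i$ received a worker $p_i^q$ in round $q$.} At the moment $i$ selected it, $p_k^{q+1}$ was still unassigned and $i$ was active. The maximum-edge rule then gives $a_{i,p_i^q}\ge a_{i,p_k^{q+1}}$.

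\emph{Case (b): $i$ was in $S$ for round $q$ but received nothing.} Then $i$ stayed active to the end of the round with no positive edge to any remaining worker. Hence $a_{i,p_k^{q+1}}=0$.

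\emph{Case (c): $i$ was not in $S$ in round $q$.} This means $c_i=0$ already, so $i$ has filled all $r_i$ slots. This case needs a separate argument.

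Second, the injection and summation. In case (c), $|\mu(i)|=r_i\ge|T|$. Every worker $i$ received came in an earlier round $q'<q$. In that round, $p_k^{q+1}$ was also still unassigned, so each of $i$'s workers has value at least $a_{i,p_k^{q+1}}$ to $i$. Combining the cases, order $T$ by round. For the workers of $T$ in rounds where $i$ still had capacity, map $p_k^{q+1}\mapsto p_i^q$ injectively, or to zero in case (b). Once $i$'s capacity is exhausted, map the remaining elements of $T$ injectively to the still-unused workers of $\mu(i)$. All of those workers were received before the round in question, so each dominates the element mapped to it.

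To check that enough unused workers remain, note that the first mapping uses distinct rounds, hence distinct workers of $\mu(i)$. The number of elements of $T$ is at most $r_i=|\mu(i)|$, so the total count fits. More simply, we can argue directly: the $|T|$ largest values in $\mu(i)$ dominate, in sorted order, the elements of $T$. Each element of $T$ received in round $q+1$ is dominated by every worker $i$ obtained in rounds $\le q$. There are at least $\min\{q,r_i\}$ such workers unless case (b) occurred earlier, in which case the element has zero value. Summing gives $V_i(T)\le V_i(\mu(i))$, and hence capacity EF1 with the common witness $g=p_k^1$.

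The main obstacle is case (c). The clean round-to-round pairing breaks there, and the needed injectivity must be verified. I would formalize it as a Hall-type or sorted-domination claim: let $T$ have elements $t_1,\dots,t_s$ in increasing round order. The $\ell$-th element arrives in a round $\ge\ell+1$. By that round, $i$ holds at least $\min\{\ell,r_i\}$ workers, each of value at least $a_{i,t_\ell}$ — unless $i$ went idle, which forces $a_{i,t_\ell}=0$. Since $s\le r_i$, this gives the $\ell$-th largest value of $\mu(i)$ at least $a_{i,t_\ell}$ after sorting, and the bound follows.
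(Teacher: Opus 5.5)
Your proof is correct, and the sorted-domination formalization you end with is essentially the paper's argument: the paper orders $T$ as $p_k^{s_1},\ldots,p_k^{s_q}$ with $s_t\ge t+1$, and pairs the $t$-th element directly with firm $i$'s $t$-th worker $p_i^t$ (received in round $t$), or with value zero if $i$ has gone idle. Pairing by rank in $T$ rather than by round is what eliminates your case (c), since $t\le q\le r_i$ means $i$ cannot have exhausted its capacity before round $t$; the extra injection argument in your first version is valid but not needed.
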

\begin{proof}
Order the workers in $T$ by the rounds in which firm $k$ receives them, say $p_k^{s_1},\ldots,p_k^{s_q}$, where $q\le r_i$. Since the first worker is removed, $s_t\ge t+1$. If firm $i$ receives a positive-value worker $p_i^t$ in round $t$, then $p_k^{s_t}$ is still unassigned when $p_i^t$ is selected, and firm $i$ still has capacity and is active in that round. The maximum-edge rule gives
$a_{i,p_i^t}\ge a_{i,p_k^{s_t}}$.

If firm $i$ has no $t$th positive-value worker, then either its capacity is already full---which cannot occur for $t\le q\le r_i$---or from that round onward it has no positive edge to any unassigned worker. In the latter case $a_{i,p_k^{s_t}}=0$. Therefore,
\begin{equation*}
V_i(T)=\sum_{t=1}^q a_{i,p_k^{s_t}}
\le\sum_{t=1}^q a_{i,p_i^t}
\le V_i(\mu(i)).
\end{equation*}
Taking the maximum over all feasible $T$ gives capacity EF1.
\end{proof}

Let $q_i=\min\{r_i,m-1\}$ and $q_r=\max_iq_i$.

\begin{lemma}[Top-$r$ pressure bound under capacity rounds]\label{lem:cap-pressure}
The output of Algorithm~\ref{alg:capacity-round} satisfies
\begin{equation}\label{eq:cap-pressure-bound}
P_i^{(r_i)}(\mu)\le q_iB_i\qquad\forall i.
\end{equation}
If every matched worker with $M_j>0$ is assigned along a positive edge and $\delta(A)\ge t$, then
\begin{equation}\label{eq:cap-pressure-delta}
P_i^{(r_i)}(\mu)\le q_i(1-t)B_i.
\end{equation}
\end{lemma}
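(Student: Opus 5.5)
The plan is to adapt the round-by-round pressure accounting of Lemma~\ref{lem:round-pressure} and Theorem~\ref{thm:safe-round-family}. The one new ingredient is the Top-$r_i$ truncation. Fix a firm $i$ and write $g_j=\pos{a_{ij}-d_j}$ for each worker $j\in\W$. Then $P_i^{(r_i)}(\mu)$ is the sum of the $r_i$ largest $g_j$.

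\textbf{Current workers.} I will first show these contribute nothing: if $j\in\mu(i)$, then $a_{ij}=d_j$, so $g_j=0$. No clipping argument is needed here.

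\textbf{Workers not assigned to $i$.} There are three cases.

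\emph{Unmatched workers.} Take $j\in U$ at termination with $a_{ij}>0$. The outer loop stopped, so $c_i=0$. Hence $i$ is full, with $|\mu(i)|=r_i$, and every worker of $\mu(i)$ was chosen while $j$ was still unassigned. The maximum-edge rule then gives $a_{ij}\le\min_{p\in\mu(i)}a_{ip}$. I will handle these workers in the aggregation step below.

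\emph{Workers taken while $i$ is active.} If another firm takes $j$ in a round while $i$ is still in $S$ with remaining capacity, the maximum-edge rule gives $d_j\ge a_{ij}$, so $g_j=0$.

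\emph{Workers taken after $i$ exits.} Suppose $i$ receives $p$ with value $b=a_{ip}$ in some round and then exits. Each worker $j$ taken later in that round satisfies $a_{ij}\le b$, and so $g_j\le b$. With positive-edge quality $t$, we also have $d_j\ge ta_{ij}$, so $g_j\le(1-t)b$. At most $m-1$ such workers appear in one round.

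The same bound holds for workers taken after $i$ becomes full: in a round after $i$ has filled, any $j$ assigned was unassigned when $i$ took its last worker. Hence $a_{ij}\le$ the value of that last (smallest) worker of $i$.

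\textbf{Aggregation (the main obstacle).} Here the Top-$r_i$ truncation must be exploited, because after $i$ is full the number of later competitors is unbounded. I plan to bound the $r_i$ largest $g_j$ by charging them to the values in $\mu(i)$.

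Order $i$'s own workers as $b_1\ge b_2\ge\cdots\ge b_s$, one per round. A nonzero $g_j$ arises in one of two ways. Either it arises in round $\ell\le s$ after $i$ exits, in which case $g_j\le(1-t)b_\ell$ and there are at most $m-1$ such $j$ per round. Or it arises after $i$ stops receiving workers (full, or no positive edge remaining). In that case $g_j\le(1-t)b_s$ if $i$ is full, and $g_j=0$ otherwise, since $a_{ij}=0$ whenever $i$ still had capacity and stayed active through the round.

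So the multiset of nonzero $g_j$ is dominated by $m-1$ copies of $(1-t)b_\ell$ for each $\ell\le s$, together with arbitrarily many copies of $(1-t)b_s$ when $s=r_i$. Choose any $r_i$ of them. The candidate bounds are non-increasing in $\ell$, so the sum is at most
\[
(1-t)\cdot\min\{r_i,m-1\}\sum_\ell b_\ell .
\]
The key counting fact is this: if $q_i=m-1<r_i$, the per-round multiplicity $m-1$ caps the contribution of each $b_\ell$. If instead $r_i\le m-1$, the Top-$r_i$ selection alone caps the total by $r_i\,b_1\le r_i B_i$. For the $s=r_i$ tail, I will verify carefully that its copies of $(1-t)b_s$ still fit within the stated bound by a greedy exchange against the first-round budget $q_i b_1$. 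This is the delicate point.

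Setting $t=0$ (only $g_j\le a_{ij}$ is used) gives $P_i^{(r_i)}\le q_iB_i$. The positive-edge hypothesis gives the factor $(1-t)$.
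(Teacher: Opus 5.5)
Your route is the paper's. You charge each gain $g_j=\pos{a_{ij}-d_j}$ to the round in which firm $i$ exited, dominate these gains by $m-1$ copies of each $(1-t)b_\ell$, and bound tail gains by $b_{r_i}$. Two steps are not in order.

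\textbf{The deferred tail step.} You leave the tail case unproved. The mechanism you sketch, charging tail copies against the first-round budget $q_ib_1$, cannot work when $r_i>m-1$. There $q_ib_1=(m-1)b_1$, while up to $r_i$ tail copies of $(1-t)b_{r_i}$ may be selected; take $m=2$ and $b_1=\cdots=b_{r_i}$.

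What closes the step is a count, exactly as in the paper:
\begin{itemize}
\item Tail gains exist only when $i$ is full, i.e.\ $s=r_i$.
\item In that case the dominating multiset already contains $(m-1)r_i\ge r_i$ entries (this uses $m\ge2$), each at least $(1-t)b_{r_i}$.
\item So adjoining any number of entries bounded by $(1-t)b_{r_i}$ leaves the sum of the $r_i$ largest unchanged.
\item That sum is at most $(1-t)\min\{r_ib_1,(m-1)B_i\}\le q_i(1-t)B_i$.
\end{itemize}
When $r_i\le m-1$ no tail argument is needed, since every gain is at most $(1-t)b_1$.

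\textbf{Unmatched workers and the factor $1-t$.} In the aggregation you bound every tail gain by $(1-t)b_s$, including gains from workers left unmatched. For such $j$ we have $d_j=0$, so $g_j=a_{ij}$. The inequality $d_j\ge ta_{ij}$ you relied on is then unavailable, and only $g_j\le b_{r_i}$ holds.

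The lemma's literal hypothesis is satisfied automatically by every output of Algorithm~\ref{alg:capacity-round}, and under it the second inequality is false. Take $m=2$, $r_1=1$, $a_{11}=a_{12}=1$, $a_{21}=a_{22}=0$. The algorithm gives one worker to firm $1$ and leaves the other unmatched. Here $\delta(A)=1$ and $P_1^{(1)}(\mu)=1$, while $q_1(1-t)B_1=0$ for $t=1$.

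The fix is to assume, as Theorem~\ref{thm:capacity} actually does, that every worker with $M_j>0$ is matched. Then no unmatched worker has $a_{ij}>0$, and your argument goes through. The paper's own proof has the same omission: its ``common factor $1-t$'' is justified only for matched workers. On the other side, your explicit treatment of unmatched workers in the first inequality (firm $i$ must be full, so $a_{ij}\le b_{r_i}$) is more careful than the paper's, which speaks only of selected tail workers.
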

\begin{proof}
Let $b_1,\ldots,b_\ell$ be the values received by firm $i$ in the rounds in which it obtains positive-value workers, so $B_i=\sum_{s=1}^\ell b_s$. If another firm selects worker $j$ before $i$ exits a round, then $i$ is still active and the selected edge has value at least $a_{ij}$, so $\pos{a_{ij}-d_j}=0$. Positive gains can therefore come only from workers selected after $i$ exits a round, or from tail workers selected after firm $i$ has filled its capacity.

In the round in which firm $i$ receives its $s$th worker at value $b_s$, at most $m-1$ other firms choose one worker each after $i$ exits. These workers were still unassigned when $i$ selected its own worker, so their values to $i$ are at most $b_s$. If firm $i$ has remaining capacity but never again receives a positive-value worker, then at that point it has no positive edge to any remaining worker, so no later positive gain arises. If firm $i$ fills its capacity at the $r_i$th selection, then all tail workers have value to $i$ at most the last value $b_{r_i}$.

If $r_i\le m-1$, Top-$r_i$ selects at most $r_i$ terms and each term is at most $b_1\le B_i$. Hence
$P_i^{(r_i)}\le r_iB_i=q_iB_i$. If $r_i>m-1$, replicate each $b_s$ exactly $m-1$ times. Every positive gain that could enter Top-$r_i$ is coordinatewise dominated by this multiset. Tail terms after capacity is filled are also at most $b_{r_i}$, while the replicated multiset already contains at least $r_i$ terms no smaller than $b_{r_i}$. Thus the top $r_i$ sum cannot increase, and
\begin{equation*}
P_i^{(r_i)}\le(m-1)\sum_sb_s=(m-1)B_i=q_iB_i.
\end{equation*}
This proves~\eqref{eq:cap-pressure-bound}.

If positive-edge quality is at least $t$, every matched worker satisfies $d_j\ge tM_j\ge t a_{ij}$ and hence
\begin{equation*}
\pos{a_{ij}-d_j}\le(1-t)a_{ij}.
\end{equation*}
The previous domination argument then gains a common factor $1-t$, proving~\eqref{eq:cap-pressure-delta}.
\end{proof}

\begin{theorem}[Joint capacity EF1--core guarantee]\label{thm:capacity}
Let $q_r=\max_i\min\{r_i,m-1\}$. The capacity maximum-edge round algorithm outputs a capacity-EF1 matching and satisfies
\begin{equation}\label{eq:capacity-basic}
\alpha_r(\mu)\ge\frac{1}{1+q_r},\qquad
\frac{\SW(\mu)}{\OPTr}\ge\frac{1}{1+q_r}.
\end{equation}
If every worker with $M_j>0$ is assigned by a real firm along a positive edge, let $t=\delta(A)$. Then
\begin{equation}\label{eq:capacity-delta}
\min\left\{\alpha_r(\mu),\frac{\SW(\mu)}{\OPTr}\right\}
\ge\Gamma_{m,r}(t):=
\max\left\{t,\frac{1}{1+q_r(1-t)}\right\}.
\end{equation}
\end{theorem}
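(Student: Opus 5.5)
Capacity EF1 comes straight from Lemma~\ref{lem:cap-witness}. Every remaining bound in Theorem~\ref{thm:capacity} will be read off the capacity proportional-wage bridge (Lemma~\ref{lem:cap-bridge}), combined with the Top-$r$ pressure bound (Lemma~\ref{lem:cap-pressure}) and, when $\delta(A)>0$, a capacity analogue of the positive-edge certificate (Lemma~\ref{lem:positive-edge}).

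\emph{Basic bound~\eqref{eq:capacity-basic}.} By~\eqref{eq:cap-pressure-bound}, $P_i^{(r_i)}(\mu)\le q_iB_i\le q_rB_i$ for every firm with $B_i>0$, so $\kappa_r(\mu)\le q_r$. Lemma~\ref{lem:cap-bridge} then gives both $\alpha_r(\mu)\ge 1/(1+q_r)$ and $\SW(\mu)/\OPTr\ge 1/(1+q_r)$.

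\emph{Zero-output firms.} A firm with $B_i=0$ needs $P_i^{(r_i)}=0$. Such a firm either has $r_i=0$, in which case the Top-$0$ sum vanishes, or it was active in the first round and received nothing. In the latter case no positive edge from $i$ to any worker remained. Workers chosen before $i$ would have exited cannot create positive gains by the max-edge rule, exactly as in the pressure proof. So $P_i^{(r_i)}=0$, consistent with the convention $R_i=0$.

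\emph{Refined bound~\eqref{eq:capacity-delta}.} Under the stated hypothesis, every worker with $M_j>0$ is assigned by a real firm along a positive edge, so $d_j\ge tM_j$.
\begin{itemize}[leftmargin=2.1em]
\item \emph{Bridge side.} Inequality~\eqref{eq:cap-pressure-delta} gives $\kappa_r\le q_r(1-t)$, and Lemma~\ref{lem:cap-bridge} yields both $\alpha_r$ and the welfare ratio at least $1/(1+q_r(1-t))$.
\item \emph{Core side.} Set $y_j=tM_j$ and $x_i=B_i-t\sum_{j\in\mu(i)}M_j\ge0$. Any capacity-feasible deviation value satisfies $v_r(I,T)\le\sum_{j\in T}M_j$, since each worker contributes at most $M_j$ regardless of the capacity restriction. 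Hence $x(I)+y(T)\ge t\,v_r(I,T)$, so $\alpha_r(\mu)\ge t$.
\item \emph{Welfare side.} We need $\SW(\mu)\ge t\,\OPTr$. Summing over all workers gives $\SW(\mu)=\sum_jd_j\ge t\sum_jM_j\ge t\,\OPTr$, because $\OPTr\le\OPT=\sum_jM_j$ for every capacity-feasible matching.
\end{itemize}
Taking the maximum of the two certificates on each side gives $\Gamma_{m,r}(t)$.

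\emph{Main obstacle.} The delicate point is whether the refined hypothesis really gives $d_j\ge tM_j$ for all positive workers, or only for matched ones. Unmatched workers have $d_j=0$. If some positive worker is left unmatched because capacities are exhausted, the high-wage certificate fails, since firm profits would need to cover $y_j=tM_j$ with no output. That is why the theorem imposes the assignment hypothesis explicitly, and I will simply invoke it. The bridge side needs no such hypothesis beyond the matched-worker version of~\eqref{eq:cap-pressure-delta}, which Lemma~\ref{lem:cap-pressure} already provides.
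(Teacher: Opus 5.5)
Your proof is correct and is essentially the paper's argument: capacity EF1 comes from Lemma~\ref{lem:cap-witness}, the bounds $\kappa_r\le q_r$ (or $q_r(1-t)$) are fed into the bridge of Lemma~\ref{lem:cap-bridge}, and the $t$ side uses the certificate $y_j=tM_j$ together with $\OPTr\le\sum_jM_j$. One correction to your closing aside: the bridge side also needs every positive worker matched, because an unmatched $j$ has $d_j=0$ and contributes the full $a_{ij}$, not $(1-t)a_{ij}$, to the Top-$r_i$ pressure, so \eqref{eq:cap-pressure-delta} holds only under the theorem's full hypothesis, which you do invoke (for example, with $r_1=r_2=1$, $a_{1A}=a_{2A}=10$, $a_{1B}=9$, $a_{2B}=0$ one has $t=1$, yet if firm~1 takes $A$ then $B$ is left unmatched and $\alpha_r(\mu)=10/19$).
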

\begin{proof}
Capacity EF1 follows from Lemma~\ref{lem:cap-witness}. In general, Lemma~\ref{lem:cap-pressure} gives $\kappa_r(\mu)\le q_r$, and the capacity proportional-wage bridge gives~\eqref{eq:capacity-basic}.

Under positive-edge eligibility,~\eqref{eq:cap-pressure-delta} gives $\kappa_r\le q_r(1-t)$, so the same bridge yields both core factor and welfare ratio at least $1/[1+q_r(1-t)]$. The worker-wise wage certificate $y_j=tM_j$ also remains valid for capacity coalitions because capacities only reduce the value that a deviating coalition can realize; hence the stability rate is at least $t$. Likewise, if every positive-value worker is positively matched, then $\SW(\mu)\ge t\sum_jM_j\ge t\OPTr$. Taking the better of the two guarantees gives~\eqref{eq:capacity-delta}.
\end{proof}

\begin{table}[ht]
\centering
\caption{Robustness extensions beyond the main minimax frontier.}
\label{tab:extensions}
\small
\begin{tabularx}{\textwidth}{>{\raggedright\arraybackslash}p{0.20\textwidth} >{\raggedright\arraybackslash}p{0.48\textwidth} X}
\toprule
\textbf{Extension} & \textbf{Main result} & \textbf{Method and tractability}\\
\midrule
Stronger fairness $\EFXp$ & While keeping at least a $\delta$-core and welfare ratio at least $\delta$, one can guarantee $\max\{\delta,1/2\}$-$\EFXp$. At $\delta=1$, exact $\EFXp$ and the exact core are achieved. & The qualified-load algorithm gives $\delta$-$\EFXp$; QREC gives exact EF1 and $1/2$-$\EFXp$ and terminates finitely.\\
\midrule
Capacities $r_i$ & A fixed capacity matching satisfies $\alpha_r=1/\Phi_r$. The round algorithm gives capacity EF1 and baseline core/welfare guarantee $1/(1+q_r)$; with positive-edge eligibility, the guarantee improves to $\Gamma_{m,r}(\delta)$. & The deviation benchmark is the Top-$r_i$ sum of positive gains, with $q_r=\max_i\min\{r_i,m-1\}$. Fixed-matching stability remains exactly computable by linearizing Top-$r$.\\
\bottomrule
\end{tabularx}
\end{table}

\section{Operational interpretation, scope, and managerial implications}\label{sec:discussion}

\subsection[Using Ri to diagnose the stability effect of a fairness repair]{Using $R_i$ to diagnose the stability effect of a fairness repair}
The firm-bottleneck view separates the fairness--stability conflict into three levels. The first is the worker-level direct mismatch $M_j-d_j$, which measures the gap between a high market wage requirement and the worker's output after the worker is moved away from a highest-value edge. The second is the firm-level stabilization ratio
\begin{equation*}
R_i(X,z)=D_i(X,z)/B_i,
\end{equation*}
which places current wage requirements, deviation gains, and the firm's production base on the same scale. A matching transfer initially changes only the source and receiving firms' ratios. Under the old wages, the largest bottleneck may rise, fall, or remain unchanged; if the target stability rate is already met, no wage adjustment is required. The third level is the propagation created by optional wage rebalancing. Lowering a worker's wage can reduce the current employer's ratio but increase the ratios of several competitors, especially those with small output bases $B_r$. Thus the relevant fixed-wage stability statistic is $\rho(X,z)=\max_iR_i(X,z)$, while the re-optimized stability rate depends on the full wage-vector problem $\Phi(X)=\min_z\rho(X,z)$.

This distinction implies that a fairness repair with minimum welfare loss need not minimize stability loss. Welfare records the change in $\sum_jd_j$, whereas the core factor depends on how the full vector of $R_i$ changes, which firms become bottlenecks, and how well wages can rebalance those bottlenecks. The mutual-top condition handles the primary constraints first by restricting choices to a safe set with global EF1, core, and welfare guarantees. Safe candidates with disjoint endpoints only differ in order and require no extra computation; strict row and column rankings further give an ex-ante unique output. Only candidates sharing a firm or worker create genuine or potential personnel differences. In those cases, $\Phi$ has a clear operational role as a secondary objective: it measures how difficult it remains to stabilize a candidate personnel plan after compensation is optimally redesigned. Under the common benchmark completion, the optional resolver uses exact $\Phi$ for the one-step secondary comparison, while $\widehat\Phi$ provides a cheaper fixed-wage certificate. The worst-case floor remains $\Gamma_m(\delta)$ for stability and $u_m(\delta)$ for welfare.

\subsection{Settings in which firm-side EF1 is natural}
The fairness notion is most natural when several units within the same organization or platform compete for a shared talent pool. In an internal labor market, firms can be interpreted as business units and wages as actual compensation plus internal transfer prices. In a platform organization, firms can represent project teams and $a_{ij}$ the marginal output of worker $j$ in team $i$. Similar interpretations apply to staff rotation within hospitals or large institutions. In these environments, a central planner may care both about excessive concentration of high-value workers and about the recruitment or compensation pressure created after rebalancing. Firm-side EF1 provides a discrete and implementable minimum standard for bundle balance, while $\alpha(X)$ quantifies the coalition-stability cost of reaching that standard.

This interpretation also clarifies where the model is not intended to apply directly. If the main policy objective is worker-side opportunity fairness, group quotas, or justified envy, additional worker-side constraints are needed; firm-side EF1 does not replace them. Likewise, wage floors or ceilings, non-transferable worker preferences, or strong complementarities in team value would require modifications of the financing identity.

\subsection{Theoretical limits and future directions}
The model relies on four structural assumptions: nonnegative additive firm values, transferable payments, static complete information, and a firm-bundle notion of fairness. Additivity makes single-firm deviation gaps separable by worker, while transferable payments make it possible to express stability as a financing problem. These two features support~\eqref{eq:intro-id}. Extending the results to complementarities or general gross-substitutes valuations is therefore not a mechanical replacement; a different aggregation of deviation gaps would be needed.

The minimax frontier uses positive-edge quality $\delta$ as an instance parameter. In data-driven settings, one could instead seek instance-dependent frontiers based on valuation distributions, sparsity, or network structure. The mutual-top safe set also reveals an algorithmic limitation that is not obvious from the worst-case guarantees: many safe choices differ only by order, and genuine decision freedom appears only in endpoint conflicts. Strict row and column rankings give an ex-ante uniqueness condition, while row or column ties are only warnings that a conflict may arise. Two open algorithmic questions are particularly natural: can one find weaker ex-ante conditions that still guarantee a unique output, and can stability-aware one-step choices be shown to improve the terminal stability rate systematically for important classes of conflicts, perhaps strengthening the finite-$m$ frontier lower bound?

\section{Conclusion}\label{sec:conclusion}

We study a stability optimization problem created endogenously by a fairness constraint. In a many-to-one market with transferable payments, a welfare-maximizing matching can support the exact core, but firm-side EF1 may require the planner to change the personnel allocation. Rather than measuring only a welfare price of fairness, we directly optimize the coalition-core factor that a fair matching can support and express it as a tractable firm-bottleneck minimax problem.

The main technical tool is the financing--stability identity $\alpha(X)=1/\Phi(X)$. It decomposes the stability of a fixed personnel allocation into firm ratios $R_i=D_i/B_i$ and allows wages or internal compensation to be re-optimized to minimize the largest pressure. A fairness adjustment can therefore be analyzed in two stages: the direct matching effect on the source and receiving firms under old wages, followed, when needed, by the cross-firm propagation caused by compensation re-optimization. This separation gives closed-form local sensitivity, a fixed-wage feasibility test, and an exact LP that can be used as a stability diagnostic for a given allocation.

At the global level, mutual-top safe edges define a robust choice domain that preserves exact EF1, a $\Gamma_m(\delta)$ core guarantee, and a $u_m(\delta)$ welfare guarantee. Safe candidates with disjoint endpoints commute, and strict local row and column rankings imply an ex-ante unique positive-value matching. Hence extra stability optimization is not needed at every step. It is relevant only when safe candidates share a firm or a worker, in which case $\Phi$ can be used as a secondary criterion under a common benchmark completion.

The main result is the EF1--core minimax frontier. For general finite $m$,
\[
\Gamma_m(\delta)\le H_m(\delta)\le u_m(\delta).
\]
The frontier equals $1/m$ at $\delta=0$, is fully characterized for two firms, and is exact for three firms when $\delta\le1/2$, with only a small remaining gap elsewhere. When the number of firms is unbounded, the tight scale-free stability rate is exactly $\delta$. Positive-edge quality is therefore both an algorithmic parameter and a structural determinant of fairness--stability compatibility across market sizes.

From an operations-research perspective, the results suggest a layered decision rule. First, use provably safe choices to satisfy fairness and worst-case performance requirements. Second, invoke stability optimization only when a genuine allocation conflict appears. Finally, once the personnel allocation is fixed, redesign compensation to minimize the worst firm bottleneck. For internal labor markets, project-team formation, and other centrally managed talent pools, this links ``who is assigned where'' with ``how the allocation can be stabilized through transfers'' in one tractable framework. The $\EFXp$ and capacity extensions show that the same financing logic remains useful when fairness is strengthened or deviation size is restricted. Natural next steps include non-additive or gross-substitutes values, wage rigidity, data-driven instance parameters, and stronger terminal guarantees for conflict-aware local stability choices.

\appendix
\section{Full technical proof for the three-firm frontier}\label{app:H3}

This appendix gives the complete three-firm round analysis needed for the lower bound in Theorem~\ref{thm:H3}. Consider a complete round and relabel firms and workers by the selection order so that the three matches are $(1,1),(2,2),(3,3)$. Let $d_t=a_{tt}$. The maximum-edge rule gives $d_1\ge d_2\ge d_3>0$. For $p>q$, firm $p$ is still active when worker $q$ is selected, so $a_{pq}\le d_q$. For $p<q$, firm $p$ has already exited, and $\delta(A)\ge\delta$ gives
\begin{equation}\label{eq:app-pq}
a_{pq}\le\min\{d_p,d_q/\delta\}.
\end{equation}

To control the fixed-matching core exactly, use the fractional-knapsack interpretation of the dual LP. Fix nonnegative firm prices $\pi=(\pi_1,\pi_2,\pi_3)$ and define the baseline
\begin{equation*}
D=d_1\pi_1+d_2\pi_2+d_3\pi_3.
\end{equation*}
Let
\begin{equation}\label{eq:ABC}
A=\min\{d_1,d_2/\delta\}-d_2,\quad
B=\min\{d_1,d_3/\delta\}-d_3,\quad
C=\min\{d_2,d_3/\delta\}-d_3.
\end{equation}
Worker $1$ creates no additional blocking pressure relative to the current employer. Worker $2$ creates at most $A\min\{\pi_1,\pi_2\}$. For worker $3$, allocate available capacity first between firm $1$ and the current employer and then use any remaining capacity for firm $2$. The total additional pressure is at most
\begin{equation}\label{eq:Gpi}
G(\pi)=A\min\{\pi_1,\pi_2\}
+B\min\{\pi_1,\pi_3\}
+C\min\{\pi_2,\pos{\pi_3-\pi_1}\}.
\end{equation}
Thus total dual blocking value is at most $D+G(\pi)$, and the supported stability rate is at least
$1/[1+\max_\pi G(\pi)/D]$.

Normalize $d_1=1$ and write $r=d_2/d_1$, $s=d_3/d_1$. On each polyhedral cone induced by the hyperplanes $\pi_1=\pi_2$, $\pi_1=\pi_3$, and $\pi_3=\pi_1+\pi_2$, the ratio $G/D$ is a ratio of two linear functions. If, within one cone, $\pi=\sum_h\lambda_hv^h$ is a nonnegative combination of extreme rays, then
\begin{equation*}
\frac{G(\pi)}{D(\pi)}
=\sum_h\frac{\lambda_hD(v^h)}{D(\pi)}\frac{G(v^h)}{D(v^h)}.
\end{equation*}
Hence the maximum is attained on an extreme ray. To make this reduction explicit, the partition is generated only by the three internal hyperplanes above and the three coordinate planes. Every extreme ray of a three-dimensional cone is the intersection of two boundary planes. Checking these intersections and retaining nonnegative feasible directions gives, in addition to the three coordinate axes,
\begin{equation}\label{eq:rays}
(1,1,0),\quad(1,0,1),\quad(0,1,1),\quad(1,1,1),\quad(1,1,2).
\end{equation}
The first three non-axis rays give, respectively, $A/(1+r)$, $B/(1+s)$, and $C/(r+s)$. For the first term, if $r\le\delta$, then $A=r(1/\delta-1)$ and the ratio increases with $r$; if $r\ge\delta$, then $A=1-r$ and the ratio decreases with $r$. Thus the maximum occurs at $r=\delta$ and equals
\begin{equation}\label{eq:k1}
k_1=\frac{1-\delta}{1+\delta}.
\end{equation}
The same argument shows that the other two terms are also at most $k_1$.

For ray $(1,1,1)$, the value is $(A+B)/(1+r+s)$. Define
$h(t)=\min\{1,t/\delta\}-t$. Checking separately the regions $t\le\delta$ and $t\ge\delta$ shows that the ratio is maximized at $r=s=\delta$, with value
\begin{equation}\label{eq:ku}
k_u=\frac{2(1-\delta)}{1+2\delta}.
\end{equation}
For ray $(1,1,2)$, the value is $(A+B+C)/(1+r+2s)$. The numerator changes linear form only on the lines $r=\delta$, $s=\delta$, and $s=\delta r$, so the maximum of the linear-fractional function on each region occurs at a vertex. Excluding zero-value vertices, it is enough to check
\begin{center}
\begin{tabular}{cc}
\toprule
$(r,s)$ & $(A+B+C)/(1+r+2s)$\\
\midrule
$(\delta,0)$ or $(1,\delta)$ & $(1-\delta)/(1+\delta)$\\
$(\delta,\delta)$ & $2(1-\delta)/(1+3\delta)$\\
$(\delta,\delta^2)$ & $(1-\delta)(1+2\delta)/(1+\delta+2\delta^2)$\\
\bottomrule
\end{tabular}
\end{center}
The first two values are no larger than $k_u$. Let the last value be $k_t$. Then
\begin{equation}\label{eq:ku-kt}
k_u-k_t=
\frac{(\delta-1)(2\delta-1)}{(1+2\delta)(1+\delta+2\delta^2)}.
\end{equation}
Therefore the largest additional pressure is $k_u$ when $\delta\le1/2$ and $k_t$ when $\delta\ge1/2$. The corresponding core factors are
\begin{equation*}
\frac{1}{1+k_u}=\frac{1+2\delta}{3}=u_3(\delta),\qquad
\frac{1}{1+k_t}=\frac{1+\delta+2\delta^2}{2(1+\delta)}=g_3(\delta).
\end{equation*}
This proves that every complete three-worker round supports an
$f_3(\delta)=\min\{u_3(\delta),g_3(\delta)\}$-core.

If the final round contains one worker, that worker is assigned to a highest-value firm, so the round supports the exact core. If the final round contains two workers, relabel them by selection order as $(1,1),(2,2)$. Let $d_1=a_{11}$, $d_2=a_{22}$, and $M_2=\max_i a_{i2}$. The first edge is a global maximum edge, so $M_1=d_1$. When the second edge is selected, firms $2$ and $3$ are still active, so the third firm's value for worker $2$ is at most $d_2$. Define
\begin{equation*}
\rho=\frac{d_1+d_2}{d_1+M_2}.
\end{equation*}
Set $y_1=\rho d_1$, $y_2=d_2$, $x_1=(1-\rho)d_1$, and all other profits to zero. Worker $1$'s wage covers every firm's $\rho$-gap. Worker $2$ can create a gap only for firm $1$, and
\begin{equation*}
\rho M_2-d_2=(1-\rho)d_1=x_1.
\end{equation*}
Thus the round supports a $\rho$-core. Since $d_2\ge\delta M_2$ and $d_1\ge M_2$, direct simplification gives
$\rho\ge(1+\delta)/2\ge f_3(\delta)$.

Finally, add wages and firm profits across rounds. Every round supports the same core factor $f_3(\delta)$, and a firm's single-firm gaps are additive across workers from different rounds. Hence the combined matching also supports an $f_3(\delta)$-core. Together with the common removal witness, this completes the three-firm lower-bound proof.

\section{LP linearization for a fixed capacity matching}\label{app:cap-lp}
The Top-$r_i$ term in the capacity financing identity can be linearized. For any nonnegative vector $u$,
\begin{equation}\label{eq:topr-linearize}
\Top_{r_i}(u)=\min_{t_i\ge0}\left\{r_it_i+\sum_j\pos{u_j-t_i}\right\}.
\end{equation}
Substituting $u_j=\pos{a_{ij}-z_j}$ and introducing
$s_{ij}\ge a_{ij}-z_j-t_i$ and $s_{ij}\ge0$ gives the linear program
\begin{equation}\tag{$\mathsf{F}_{\mu,r}$}\label{eq:cap-LP}
\begin{aligned}
\min\quad &\rho\\
\text{s.t.}\quad
&\sum_{j\in\mu(i)}z_j+r_it_i+\sum_js_{ij}\le\rho B_i &&\forall i,\\
&s_{ij}\ge a_{ij}-z_j-t_i &&\forall i,j,\\
&s_{ij}\ge0,\quad z_j\ge0,\quad t_i\ge0,\quad\rho\ge0.
\end{aligned}
\end{equation}
Thus the optimal core factor of a fixed capacity matching is also exactly computable in polynomial time.

\end{document}